\documentclass[8pt,letter]{extarticle}
\newcommand{\email}[1]{\href{mailto:#1}{\tt #1}}

\usepackage{verbatim}
\usepackage{amsmath,amssymb,amsthm}
\usepackage{amssymb}
\usepackage{color}
\usepackage{graphicx}
\usepackage{xspace}
\usepackage{dsfont}
\usepackage{hyperref,cite}
\usepackage[numbers]{natbib}
\usepackage{listings}             \newcommand{\id}{\ensuremath{\mathds{1}}}
\newcommand{\reals}{\ensuremath{\mathbb{R}}}
\newcommand{\naturals}{\ensuremath{\mathbb{N}}}
\newcommand{\prob}{\ensuremath{\mathbf{P}}}
\newcommand{\expect}{\ensuremath{\mathbb{E}}}

\newtheorem{lemma}{Lemma}

\newcommand{\catalog}{\mathcal{C}}

\newcommand{\capacity}{\ensuremath{c}}
\newcommand{\source}{\ensuremath{\mathcal{S}}}
\newcommand{\pathset}{\ensuremath{\mathcal{P}}}
\newcommand{\ppath}{\ensuremath{p}}
\newcommand{\requests}{\ensuremath{\mathcal{R}}}
\newcommand{\feasibledomain}{\mathcal{D}}
\newcommand{\argmax}{\mathop{\arg\max\,}}
\newcommand{\argmin}{\mathop{\arg\min\,}}

\newcommand{\trace}{\mathop{\mathtt{trace}}}
\usepackage{algorithm}
\usepackage{algorithmic}

\newcommand{\techrep}[2]{#1\xspace}

\newcounter{packednmbr}
\newenvironment{packedenumerate}{\begin{list}{\thepackednmbr.}{\usecounter{packednmbr}\setlength{\itemsep}{0pt}\addtolength{\labelwidth}{-0pt}\setlength{\leftmargin}{\labelwidth}\setlength{\listparindent}{\parindent}\setlength{\parsep}{0pt}\setlength{\topsep}{3pt}}}{\end{list}}
\newenvironment{packeditemize}{\begin{list}{$\bullet$}{\setlength{\itemsep}{0pt}\addtolength{\labelwidth}{1pt}\setlength{\leftmargin}{\labelwidth}\setlength{\listparindent}{\parindent}\setlength{\parsep}{1pt}\setlength{\topsep}{3pt}}}{\end{list}}

\newcommand{\conv}{\mathtt{conv}}
\newcommand{\supp}{\mathtt{supp}}
\newcommand{\CGS}{\textsc{MaxCG-S}\xspace}
\newcommand{\CGHH}{\textsc{MaxCG-HH}\xspace}
\newcommand{\rCG}{\textsc{R-MaxCG}\xspace}
\newcommand{\iCG}{\textsc{I-MaxCG}\xspace}

\newcommand{\SR}{\ensuremath{\mathtt{SR}}}
\newcommand{\HH}{\ensuremath{\mathtt{HH}}}
\newtheorem{theorem}{Theorem}
\newtheorem{corollary}{Corollary}
\hyphenation{op-tical net-works semi-conduc-tor}

\newcommand{\change}[1]{#1}

\title{Jointly Optimal Routing and Caching\\for Arbitrary Network Topologies}
\author{Stratis Ioannidis and Edmund Yeh}
\date{Northeastern University, ECE Dept.\\360 Huntington Avenue, 409DA\\Boston, MA, USA 02115\\\email{ioannidis@ece.neu.edu},\email{eyeh@ece.neu.edu}}

\begin{document}

\maketitle
\begin{abstract}
We study a problem of fundamental importance to ICNs, namely, minimizing routing costs by jointly optimizing caching and routing decisions over an arbitrary network topology. We consider both source routing and hop-by-hop routing settings. The respective offline problems are NP-hard.  Nevertheless, we show that there exist polynomial time approximation algorithms producing solutions  within a constant approximation from the optimal.  We also  produce distributed, adaptive algorithms with the same approximation guarantees.  \change{We simulate our adaptive algorithms over a broad array of different topologies.  Our algorithms reduce routing costs by several orders of magnitude compared to prior art, including algorithms optimizing caching under fixed routing.}
 \end{abstract}

\section{Introduction}
Optimally placing resources in a network and  routing requests toward them is a problem as old as the Internet itself. It is of paramount importance in  information centric networks (ICNs)~\citep{jacobson2009networking,yeh2014vip}, but also naturally arises in a variety of networking applications such as web-cache design~\citep{laoutaris2004meta,che2002hierarchical,zhou2004second},  wireless/femtocell networks \citep{shanmugam2013femtocaching,naveen2015interaction,poularakis2013approximation}, and 
 peer-to-peer networks~\citep{lv2002search,cohen2002replication}, to name a few.
 Motivated by this problem, we study  a \emph{caching network}, i.e., a network  of nodes augmented with additional storage capabilities. In such a network, some nodes act as designated content servers, permanently storing content and serving as  ``caches of last resort''. Other nodes generate requests for content that are forwarded towards these designated servers. If, however, an intermediate node in the path towards a server  stores the requested content, the request is satisfied early: i.e., the request ceases to be forwarded, and a \change{content copy} is sent over the reverse path towards the request's source. 

This abstract setting naturally captures ICNs. Designated servers  correspond to traditional web servers permanently storing content, while nodes generating requests correspond to customer-facing gateways. Intermediate, cache-enabled nodes correspond to storage-augmented routers in the Internet's backbone: such  routers forward requests but, departing from traditional network-layer protocols, immediately serve requests for content they store. An extensive body of research, both theoretical~\citep{che2002hierarchical,fricker2012versatile,martina2014unified,berger2014exact,fofack2012analysis,rosensweig2010approximate,rosensweig2013steady,ioannidis2016adaptive}   and experimental~\citep{lv2002search,laoutaris2004meta,rossi2011caching,zhou2004second,che2002hierarchical,jacobson2009networking}, has focused on modeling and analyzing networks of caches in which \emph{routing is fixed}, and requests follow predetermined paths. For example, shortest paths to the nearest designated server are often used. Given routes to be followed, and the demand for items, the above works aim to model and analyze (theoretically or empirically) the behavior of different caching algorithms deployed over intermediate nodes.

It is not a priori clear whether fixed routing and, more specifically, routing towards the nearest server is the appropriate design choice for such networks. This  is of special interest in the context of ICNs, where delegating routing decisions to another protocol  amounts to  an ``incremental'' deployment. For example, in such a deployment, requests can be forwarded towards the closest designated web servers over paths determined according to, e.g., existing routing protocols such as OSPF or BGP \citep{kurose2007computer}. Subsequent caching decisions by intermediate routers affect only where--within a given path--requests are satisfied. An alternative is to \emph{jointly} optimize both routing \emph{and} caching decisions simultaneously. 
Doing so however poses a significant challenge, precisely because  this joint optimization is inherently combinatorial.
Indeed,  jointly optimizing routing and caching decisions with the objective of, e.g., minimizing routing costs, is an NP-hard problem, and constructing a distributed approximation algorithm is far from trivial~\citep{ioannidis2016adaptive,shanmugam2013femtocaching,fleischer2006tight,borst2010distributed}. 

This state of affairs gives rise to the following questions. First, is it possible to design \emph{distributed, adaptive, and tractable algorithms jointly optimizing both routing and caching decisions over arbitrary cache network topologies, with provable performance guarantees}? Identifying such algorithms is important precisely due to the combinatorial nature of the problem at hand. Second, presuming such algorithms exist, \emph{do they yield significant performance improvements over fixed routing protocols}? Answering this question in the affirmative may justify the  potential increase in protocol complexity due joint optimization. It can also inform future ICN design, indicating whether full optimization is preferable, or whether an incremental approach in which routing and caching are  separate suffices.

Our goal  is to provide rigorous, comprehensive answers to these two questions. We make the following contributions:
\begin{packeditemize}
\item We show, by constructing a counterexample, that fixed routing (and, in particular, routing towards the nearest server) can be  \emph{arbitrarily suboptimal} compared to jointly optimizing caching and routing decisions. Intuitively, joint optimization affects routing costs drastically because \emph{exploiting path diversity increases caching opportunities}.
\item We propose a formal mathematical framework for joint routing and caching optimization. We consider both \emph{source routing} and \emph{hop-by-hop} routing strategies,  the two predominant classes of routing protocols over the Internet~\citep{kurose2007computer}.
\item We  study the offline version of the joint  routing and caching optimization problem, which is NP-hard, and construct a polynomial-time  $1-1/e$ approximation algorithm. \item  We provide a \emph{distributed}, \emph{adaptive} algorithm that converges to joint routing and caching strategies that are, globally, within a $1-1/e$ approximation ratio from the optimal. \item We evaluate our distributed algorithm over 9 synthetic and 3 real-life network topologies, and show that it significantly outperforms the state of the art: it reduces routing costs  by a factor between 10 and 1000, for a broad array of competitors, including both fixed and dynamic routing protocols.
\end{packeditemize}
The remainder of this paper is organized as follows. We review related work in Section~\ref{sec:related}, and present our mathematical model of a caching network in Section~\ref{sec:model}. Our main results are presented in Section~\ref{sec:main}. A numerical evaluation of our algorithms over several topologies is presented in Section~\ref{sec:evaluation}, and we conclude in Section~\ref{sec:conclusions}.

\section{Related Work}\label{sec:related}

There are several adaptive, distributed approaches determining how to populate caches under fixed routing. A simple, elegant, and ubiquitous algorithm is \emph{path replication}~\citep{cohen2002replication}, sometimes also referred to as ``leave-copy-everywhere'' (LCE) \citep{laoutaris2004meta}: once a request for an item reaches a cache, every downstream node receiving the response caches the item. Several variants of this principle exist. In ``leave-copy-down'' (LCD), a copy is placed \emph{only} in the node immediately preceding the cache storing the requested item \citep{laoutaris2004meta,laoutaris2006lcd}, while ``move-copy-down'' (MCD) also removes the present upstream copy. Probabilistic variants  have also been proposed \citep{psaras2012probabilistic}. To evict items, traditional eviction policies like Least Recently Used (LRU), Least Frequently Used (LFU), First In First Out (FIFO), and Random Replacement (RR) are typically used. Several \change{works} \citep{laoutaris2004meta,psaras2012probabilistic,wang2013optimal,rossini2014coupling,rossi2011caching} have \change{experimentally} studied the performance of these protocols and variants over a broad array of topologies. Despite the advantages of simplicity and elegance inherent in path replication, when targeting an optimization objective such as, e.g., minimizing total routing costs, path replication combined with all of the above eviction and replication policies is known to be arbitrarily suboptimal \citep{ioannidis2016adaptive}.

There is a vast literature on the performance of eviction policies like LRU, FIFO, LFU, etc., on a single cache, and the topic is classic \citep{dan1990approximate,achlioptas2000competitive,thomas1971analysis,flajolet1992birthday,gelenbe1973unified}. Nevertheless, the study of \emph{networks of caches} still poses significant challenges. \techrep{A central problem is that, even if arriving traffic in an LRU cache is, e.g., Poisson, the \emph{outgoing} traffic is hard to describe analytically.}{}  \techrep{Rosensweig et al.~\citep{rosensweig2013steady} study conditions under which path replication with LRU, FIFO, and variants, lead to an ergodic chain. }{}
 A significant breakthrough in this area has been the so-called \emph{Che approximation}  \citep{che2002hierarchical,fricker2012versatile}, which postulates that the hit rate of an LRU cache can be well approximated under the assumption that items stay in the cache for a constant time. This approximation is quite accurate in practice~\citep{fricker2012versatile}, and its success motivated extensive research in so-called \emph{time-to-live} (TTL) caches. \techrep{In TTL caches, items stay in a cache for predetermined times, and are evicted when TTL timers expire.}{} A series of recent works have focused on identifying how to set TTLs to (a) approximate the behavior of known eviction policies, (b) describe hit-rates in closed-form formulas~\citep{fofack2012analysis,che2002hierarchical,berger2014exact,dehghan2016utility,martina2014unified}. Despite these advances, none of the above works address issues of  routing cost minimization over multiple hops, which is our goal.

In their seminal paper~\citep{cohen2002replication} introducing path replication,  Cohen and Shenker also introduced the abstract problem of finding a content placement that minimizes routing costs. The authors show that path replication combined with a constant rate of evictions leads to an allocation that is optimal, in equilibrium, when nodes are visited through uniform sampling.  Unfortunately, this optimality breaks down when uniform sampling is replaced by routing over arbitrary topologies \citep{ioannidis2016adaptive}. Several papers have studied complexity and optimization issues of cost minimization as an offline caching problem under restricted topologies \citep{baev2008approximation,bartal1995competitive,fleischer2006tight,shanmugam2013femtocaching,applegate2010optimal,borst2010distributed}.  With the exception of \citep{shanmugam2013femtocaching}, these works model the network as a bipartite graph:  nodes generating requests connect directly to caches, and demands are satisfied a single hop, and do not readily generalize to arbitrary topologies. In general, the \emph{pipage rounding} technique of Ageev and Sviridenko~\citep{ageev2004pipage} (see also \citep{calinescu2007maximizing,vondrak2008optimal}) yields again a constant approximation algorithm in the bipartite setting, while approximation algorithms are also known for several variants of this problem~\citep{baev2008approximation,bartal1995competitive,fleischer2006tight, borst2010distributed}. Excluding  \citep{borst2010distributed}, all these  works focus only on centralized solutions of the offline caching problem; none considers jointly optimizing caching \emph{and} routing decisions.

 In earlier work~\citep{ioannidis2016adaptive}, we consider a  setting in which routes are fixed, and only caching decisions are optimized in an adaptive, distributed fashion.  We extend \citep{ioannidis2016adaptive} to incorporate routing decisions, both through source and hop-by-hop routing. We show that a variant of pipage rounding \citep{ageev2004pipage} can be used to construct a poly-time approximation algorithm, that also lends itself to a distributed, adaptive implementation. Crucially, our evaluations in Section~\ref{sec:evaluation} show  that jointly optimizing caching \emph{and} routing  significantly improves performance compared to fixed routing, reducing the routing costs    by as much as three orders of magnitude compared to \citep{ioannidis2016adaptive}.

 Several recent works study caching and routing \emph{jointly}, in more restrictive settings than the ones we consider here. The benefit of routing towards nearest replicas, rather than towards nearest designated servers, has been observed  empirically \citep{chiocchetti2012exploit,fayazbakhsh2013less,carofiglio2016joint}.  Deghan et al.~\citep{dehghan2014complexity}, Abedini and Shakkotai \citep{abedini2014content}, \change{and Xie et al.~\citep{xie2012tecc}} all study  joint routing and content placement schemes in a bipartite, single-hop setting.  \change{In all three cases, } minimizing the single-hop routing cost reduces to  solving a linear program;  Naveen et al.~\citep{naveen2015interaction}  extend this to other, non-linear (but still convex) objectives of the hit rate, still under single-hop, bipartite routing constraints. \change{None of these approaches  generalize to a multi-hop setting, which leads to non-convex formulations (see~Section~\ref{sec:problemformulation}); addressing this lack of convexity is one of our technical contributions.} 
Closer to our work, a multi-hop, multi-path setting is formally analyzed by Carofiglio et al.~\citep{carofiglio2016joint} under the assumption that requests by different users follow non-overlapping paths. The authors show that under appropriate conditions on request arrival rates, this assumption leads to a convex optimization problem. Our approach  addresses the lack of convexity in its full generality, for arbitrary topologies, request arrival rates, and overlapping  paths.

\change{The problem we study is also related to more general placement problems, including the allocation of virtual machines (VMs) to hosts in  cloud computing \citep{li2011virtual,guenter2011managing,van2010cost,batista2007set}--see also \citep{jiang2012joint}, that jointly optimizes placement and routing in this context. This is a harder problem: heterogeneity of host resources and VM requirements leads to multiple knapsack-like constraints (one for each resource) per host. Our storage constraints are simpler; as a result, in contrast to \citep{li2011virtual,guenter2011managing,van2010cost,batista2007set,jiang2012joint}, we can provide poly-time, distributed algorithms with provable approximation guarantees.}

 \section{Model}\label{sec:model}
We begin by presenting our formal model, extending \citep{ioannidis2016adaptive} to account for both caching \emph{and} routing decisions. Our analysis applies to two routing variants: (a) \emph{source routing} and (b) \emph{hop-by-hop routing}.  \change{In both cases, we study two types of strategies: \emph{deterministic} and \emph{randomized}. For example, in source routing,  requests for an item originating from the same source may be forwarded over several possible paths, given as input. In deterministic source routing, \emph{only one} is selected and used \emph{for all subsequent requests} with this origin. In contrast, a randomized strategy samples a new path to follow independently with each new request. We also use similar deterministic and randomized analogues both for caching strategies as well as for hop-by-hop routing strategies.}

\change{Randomized strategies subsume deterministic ones, and are arguably more flexible and general. This begs the question: why  study both?  There are three reasons. 
First, optimizing deterministic strategies naturally relates to submodular maximization subject to matroid constraints,  allowing us to leverage related combinatorial optimization techniques. Second,
the online, distributed algorithms we propose to construct randomized strategies rely on the solution to the offline, deterministic problem. Finally, and most importantly: deterministic strategies turn out to be equivalent to randomized strategies! As we show in Thm.~\ref{cor:equiv},  the smallest routing cost attained by randomized strategies is exactly the same as the one attained by deterministic strategies. }

\begin{table}[!t]
\begin{footnotesize}
\begin{tabular}{p{0.1\textwidth}p{0.8\textwidth}}
\hline
\multicolumn{2}{c}{\textbf{Common Notation}}\\
\hline
$G(V,E)$ & Network graph, with nodes $V$ and edges $E$\\
$\catalog$ & Item catalog\\
$c_v$ & Cache capacity at node $v\in V$\\
$w_{uv}$ & Weight of edge $(u,v)\in E$\\
$\requests$ & Set of requests $(i,s)$, with $i\in\catalog$ and source $s\in V$\\
$\lambda_{(i,s)}$ & Arrival rate of requests $(i,s)\in \requests$ \\
$\source_i$ & Set of designated servers of $i\in \catalog$\\
$x_{vi}$ & Variable indicating whether $v\in V$ stores $i\in\catalog$ \\
$\xi_{vi}$ & Marginal probability that $v$ stores $i$\\ 
$X$ & Global caching strategy  of $x_{vi}$s, in $\{0,1\}^{|V|\times |\catalog|}$\\
$\Xi$ & Expectation of caching strategy matrix $X$\\
$T$ & Duration of a timeslot in online setting\\
$w_{uv}$ & weight/cost  of edge $(u,v)$\\
$\supp(\cdot)$ & Support of a probability distribution\\
$\conv(\cdot)$ & Convex hull of a set\\
\hline
\multicolumn{2}{c}{\textbf{Source Routing}}\\
\hline
$\pathset_{(i,s)}$ & Set of paths request $(i,s)\in \requests$ can follow\\
$P_\SR$ & Total number of paths\\
$p$ & A simple path of $G$\\
$k_{p}(v)$ & The position of node $v\in p$ in path $p$.\\ 
$r_{(i,s),p}$ & Variable indicating  whether $(i,s)\in\requests$ is forwarded over $p\in\pathset_{(i,s)}$ \\
$\rho_{(i,s),p}$ & Marginal probability that $s$ routes request for $i$ over $p$\\ $r$ & Routing strategy of $r_{(i,s),p}$s,  in $\{0,1\}^{\sum_{(i,s)\in \requests}|\pathset_{(i,s)}|}$.\\
$\rho$ & Expectation of routing strategy vector $r$\\
$\feasibledomain_\SR$ &Feasible strategies $(r,X)$ of \CGS\\
RNS & Route to nearest server\\
RNR & Route to nearest replica\\
\hline
\multicolumn{2}{c}{\textbf{Hop-by-Hop Routing}}\\
\hline
$G^{(i)}$& DAG with sinks in $\source_i$\\
$E^{(i)}$ & Edges in DAG $G^{(i)}$\\
$G^{(i,s)}$ & Subgraph of $G^{(i)}$ including only nodes reachable from $s$\\
 $\pathset_{(i,s)}^u$ & Set of  paths in $G^{(i,s)}$ from $s$ to $u$. \\
$P_\HH$ & Total number of paths\\
$r_{uv}^{(i)}$ &   Variable indicating whether $u$ forwards a request for $i$ to $v$\\
$\rho_{uv}^{(i)}$ & Marginal probability that $u$ forwards a request for $i$ to $v$\\
$r$ & Routing strategy of $r_{u,v}^i$s,  in $\{0,1\}^{\sum_{i\in \catalog}|E^{(i)}|}$.\\
$\rho$ & Expectation of routing strategy vector $r$\\
$\feasibledomain_\HH$ & Feasible strategies $(r,X)$ of  \CGHH\\
\hline
\end{tabular}
\end{footnotesize}
\caption{Notation Summary}
\end{table}

\subsection{Network Model and Content Requests}

Consider a network represented as a directed, symmetric\footnote{A directed graph is symmetric when $(i,j)\in E$ implies that $(j,i)\in E$.} graph $G(V,E)$.  
Content items (e.g., files, or file chunks) of equal size   are to be distributed across network nodes. Each node is associated with a cache that can store a finite number of items. We denote by  $\catalog$ the set of possible content items, i.e., the \emph{catalog}, and by $\capacity_v\in \naturals$ the cache capacity at node $v\in V$: exactly $\capacity_v$ content items can be stored in $v$. 
The network serves content requests routed over the graph $G$.  A request $(i,s)$ is determined by (a) the  item $i\in \catalog$ requested, and (b) the source $s\in V$ of the request. 
 We denote by $\requests\subseteq \catalog\times V$  the set  of all requests. Requests of different types $(i,s)\in\requests$ arrive according to independent Poisson processes with arrival rates $\lambda_{(i,s)}>0$, $(i,s)\in \requests$. 

For each item $i\in \catalog$  there is a fixed set of \emph{designated server} nodes $\source_i\subseteq V$, that always store $i$. A node $v\in \source_i$ permanently stores $i$ in \emph{excess memory outside its cache}. Thus, the placement of items to designated servers is fixed and outside the network's design. 
A request $(i,s)$ is routed over a path in $G$ towards a designated server. However,  forwarding terminates upon reaching \emph{any intermediate cache} that stores $i$. At that point, a response carrying $i$ is sent over  the reverse path, i.e., from the node where the cache hit occurred, back to source node $s$.   Both caching \emph{and} routing decisions are network design parameters, which we define formally below.

\subsection{Caching Strategies}
\change{We study two types or caches: \emph{deterministic} and \emph{randomized}.}

\noindent\change{\textbf{Deterministic caches.}}
For each node $v\in V$, we define  \emph{$v$'s caching strategy} as a vector $x_v\in\{0,1\}^{|\catalog|}$, where $x_{vi}  \in \{0,1\},$  for $ i \in \catalog $,
is the binary variable indicating whether $v$ stores content item $i$. As $v$ can store no more than $c_v$ items, we have that: 
\begin{align}\textstyle\sum_{i\in \catalog} x_{vi} \leq c_v,\text{ for all }v\in V.\label{capconst}\end{align}
 We define the  \emph{global caching strategy} as the matrix
 $\textstyle X=[x_{vi}]_{v\in V,i \in \catalog}\in \{0,1\}^{|V|\times |\catalog|},$ 
whose rows comprise the caching strategies of each node.

\noindent\change{\textbf{Randomized caches}. In the case of randomized caches}, the caching strategies $x_v$, $v\in V$, are random variables. We denote by:
\begin{align}\xi_{vi} \equiv \prob[x_{vi} = 1] = \expect[x_{v,i}]\in [0,1], \text{ for }i\in C,\end{align}
the marginal probability that node $v$ caches item $i$, and by 
$\Xi=[\xi_{vi}]_{v\in V,i\in \catalog}=\expect[X]\in [0,1]^{|V|\times |\catalog|},$
the corresponding expectation of the global caching strategy.

\begin{figure}[!t]
\includegraphics[width=\columnwidth]{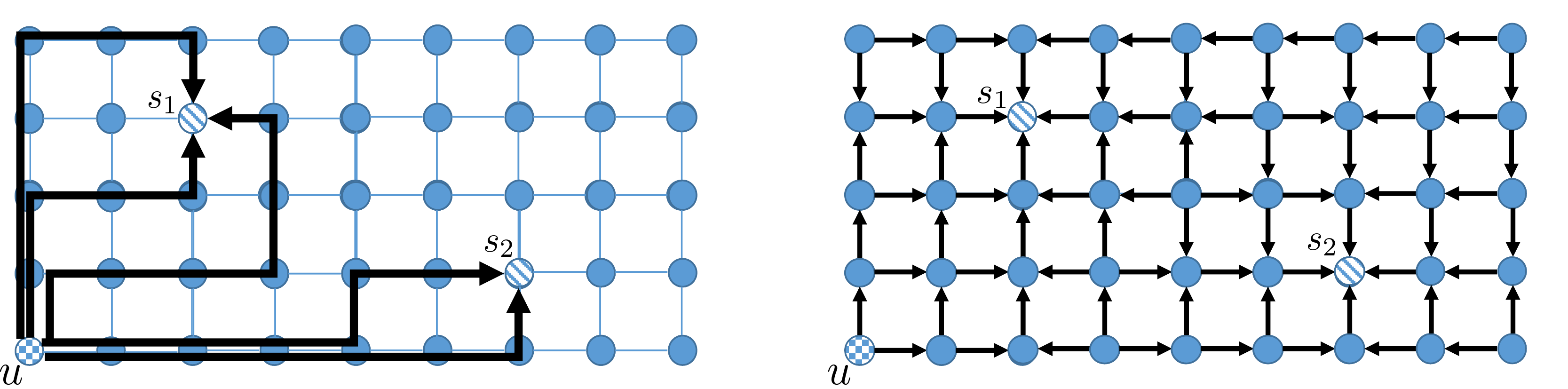}
\caption{Source Routing vs.~Hop-by-Hop routing. In source routing, shown left, source node $u$ on the bottom left can choose among 5 possible paths to route a request to one of the designated servers storing $i$ ($s_1,s_2$). In hop-by-hop routing, each intermediate node in the network selects the next hop among one of its neighbors in a DAG, whose sinks are the designated servers.}
\end{figure}
\subsection{Source Routing Strategies}

Recall that requests are routed towards designated server nodes. In source routing, for every request $(i,s)\in \catalog\times V$, there exists a $\pathset_{(i,s)}$ of \emph{paths} that the request can follow towards a designated server in $\source_i$. A source node $s$ can forward a request among any of these paths, but we assume each response  follows the same path as its corresponding request.

Formally,  a path $\ppath$ of length $|p|=K$ is a sequence $\{p_1,p_2,\ldots,p_K\}$ of nodes  $p_k\in V$ such that  $(p_k,p_{k+1})\in E$, for every $k\in \{1,\ldots,|p|-1\}$.  
We make the following natural  assumptions on the set of paths $\pathset_{(i,s)}$. For every $\ppath\in\pathset_{(i,s)}$:
(a) $p$ starts at $s$, i.e., $p_1=s$;
(b) $p$  is simple, i.e., it contains no loops;
(c) the last node in $p$  is a designated server for item $i$, i.e., if $|p|=K$, 
$p_K\in \source_i$; and
(d) no other node in $p$ is a designated server for $i$, i.e., if $|p|=K$,
$p_k\notin \source_i$, for $k=1,\ldots,K-1.$ 
Given a path $p$ and a $v\in p$, denote by  
$k_p(v)$ is the position of $v$ in $p$; i.e., $k_p(v)$ equals to $k\in \{1,\ldots,|p|\}$ such that $p_k=v$. 
\change{As in the case of caches, we consider both deterministic and randomized routing strategies.}

\sloppy
\noindent\change{\textbf{Deterministic Routing.}}
Given sets $\pathset_{(i,s)}$, $(i,s)\in \requests$, the \emph{routing strategy} of a source $s\in V$ w.r.t.~request $(i,s)\in \requests$ is a vector $r_{(i,s)}\in \{0,1\}^{|\pathset_{(i,s)}|}$, where
 $r_{(i,s),p}\in \{0,1\}$   is a binary variable indicating whether $s$ selected path $p \in \pathset_{(i,s)}$.
Routing strategies satisfy:  \begin{align}\textstyle\sum_{p\in \pathset_{(i,s)}} r_{(i,s),p} =1,\text{  for all }(i,s)\in\requests.\label{routecap}\end{align}
indicating that exactly one path is selected.  Let $P_\SR =\sum_{(i,s)\in \requests}|\pathset_{(i,s)}|$
be the total number of paths.
We refer to the vector $\textstyle r= [ r_{(i,s),p}]_{(i,s)\in \requests,p\in \pathset_{(i,s)}}\in  \{0,1\}^P,$ as the  \emph{global routing strategy}. 

\noindent\change{\textbf{Randomized Routing.}} In the case of randomized routing, variables $r_{(i,s)}$, $(i,s)\in \requests$ are random. We randomize routing by allowing requests to be routed over a random path in $\pathset_{(i,s)}$, selected  \emph{independently of all past requests (at $s$ or elsewhere)}.  We denote by 
\begin{align}\rho_{(i,s),p}  \equiv \prob[r_{(i,s),p}=1 ] = \expect[r_{(i,s),p}] , \text{ for }p\in\pathset_{(i,s)},\label{marginalroute}\end{align}
the probability that path $p$ is selected by $s$, and by
$\rho=[\rho_{(i,s),p}]_{(i,s)\in\requests ,p\in \pathset_{(i,s)} }=\expect[r]\in [0,1]^{P} $
 the expectation of the global routing strategy $r$.

\fussy

\noindent\change{\textbf{Remark.}} We make no a priori assumptions on $P_\SR$, the total number of  paths used during source routing; moreover, we allow paths  to overlap. The complexity of our offline algorithm, and the rate of convergence of our distributed, adaptive algorithm depend on $P_\SR$. In practice, if the number of possible paths is, e.g., exponential in $|V|$, it makes sense to restrict  each $\pathset_{(i,s)}$ to a small subset of possible paths, or to use hop-by-hop routing instead, which, as discussed below,  restricts the maximum number of paths considered.

\subsection{Hop-by-Hop Routing Strategies}
Under hop-by-hop routing, each node along the path makes an individual decision on where to route a request message. When a request for item $i$ arrives at an intermediate node $v\in V$, node $v$ determines how to forward the request to one of its neighbors. The decision depends on $i$ but \emph{not} on the request's source. This limits the paths a request may follow, making hop-by-hop routing less expressive than source routing. On the other hand, reducing the space of routing strategies reduces complexity. In adaptive algorithms, it also speeds up convergence, as routing decisions  w.r.t.~$i$  are ``learned'' across requests by different sources.

To ensure loop-freedom, we must assume that forwarding decisions are restricted to a subset of possible neighbors in $G$.  For each $i\in \catalog$, we denote by $G^{(i)}(V,E^{(i)})$ a graph that has the following properties:
(a) $G^{(i)}$ is a subgraph of $G$, i.e., $E^{(i)}\subseteq E$;
(b) $G^{(i)}$ is a directed acyclic graph (DAG); and
(c) a node $v$ in $G^{(i)}$ is a sink if and only if it is a designated server for $i$, i.e., $v\in \source_i$. 
Note that, given $G$ and $\source_i$,  $G^{(i)}$ can be constructed in polynomial time using, e.g.,  the Bellman-Ford algorithm \citep{cormen2009book}. Indeed,   requiring that $v$ forwards requests for $i\in \catalog$ only towards neighbors with a smaller distance to a designated server in $\source_i$ results in such a DAG. A distance-vector protocol~\citep{kurose2007computer} can form this DAG in a distributed fashion. We assume that every node $v\in V$ can forward a request for item $i$ \emph{only to a neighbor in $G^{(i)}$}. Then, the above properties of $G^{(i)}$ ensure both loop freedom and successful termination.

\sloppy \noindent\change{\textbf{Deterministic Routing.}} For any node $s\in V$, let $G^{(i,s)}$ be the induced subgraph of $G^{(i)}$ which results from removing any nodes in $G^{(i)}$  not reachable from $s$. For any $u$ in $G^{(i,s)}$, let $\pathset_{(i,s)}^u$ be the set of all paths in $G^{(i,s)}$ from $s$ to $u$, and denote by $P_\HH= \sum_{(i,s)\in \catalog} \sum_{u\in V} |\pathset_{(i,s)}^u|$.
  We denote by $r_{uv}^{(i)}\in \{0,1\},$  for $(u,v)\in E^{(i)}$, $i\in \catalog,$ the decision variable indicating whether $u$ forwards a request for $i$ to $v$. The \emph{global routing strategy} is $r = [r_{uv}^{(i)}]_{i\in\catalog,(u,v)\in E^{(i)}}\in \{0,1\}^{\sum_{i\in\catalog}|E^{(i)}|},$ and  satisfies 
\begin{align}\textstyle\sum_{v:(u,v)\in E^{(i)}}r_{uv}^{(i)} =1,\quad \text{for all}~v\in V, i\in\catalog. \end{align}
 Note that, in contrast to source routing strategies, that have length $P_\SR$,  hop-by-hop routing strategies have length at most $|\catalog||E|$.
 
 \fussy
\noindent\change{\textbf{Randomized Routing.}} As in the case of source routing, we  also consider randomized hop-by-hop routing strategies, whereby each request is forwarded independently from previous routing decisions to one of the possible neighbors. We again denote by 
\begin{align}\begin{split}\rho &= [\rho_{uv}^{(i)}]_{i\in\catalog,(u,v)\in E^{(i)}}=[\expect[r_{uv}^{(i)}]]_{i\in\catalog,(u,v)\in E^{(i)}}\\&= \big[\prob[r_{uv}^{(i)}=1]\big]_{i\in\catalog,(u,v)\in E^{(i)}} \in [0,1]^{\sum_{i\in\catalog}|E^{(i)}|},\end{split}\end{align}
the vector of corresponding (marginal) probabilities of routing decisions at each node $v$. 

\subsection{Offline vs. Online Setting}\label{sec:offon}

To reason about the caching networks we have proposed, we consider two settings: the \emph{offline} and \emph{online} setting. In the offline setting, all problem inputs (demands, network topology, cache capacities, etc.) are known apriori to, e.g., a system designer. At time $t=0$, the system designer selects (a) a caching strategy $X$, and (b) a routing strategy $r$. Both can be either deterministic or randomized, but both are also static: they do not change as time progresses. In the case of caching, cache contents (selected deterministically or at random at $t=0$) remain static for all $t\geq 0$. In the case of routing decisions, the distribution over paths (in source routing) or neighbors (in hop-by-hop routing) remains static, but each request is routed independently of previous requests.

In the online setting,  no a priori knowledge of the demand, i.e., the rates of requests $\lambda_{(i,s)}$, $(i,s)\in \requests$  is assumed. Both caching \emph{and} routing strategies change through time via a \emph{distributed, adaptive} algorithm. Time is slotted, and  each slot has duration $T>0$. During  a timeslot, both caching and and routing strategies remain fixed. Nodes have access only to local information: they are aware of their graph neighborhood and state information they maintain locally. They exchange messages, including both normal request and response traffic, as well as (possibly) control messages, and may adapt their state. At the conclusion of a time slot, each node changes its caching and routing strategies. Changes made by $v$ depend only on its neighborhood, its current local state, as well as on messages that node $v$ received in the previous timeslot. Both caching and routing strategies during a timeslot may be deterministic or randomized. Implementing a caching strategy at the conclusion of a timeslot involves changing cache contents, which incurs additional overhead; if $T$ is large, however, this cost is negligible compared to the cost of transferring items during a timeslot.

\subsection{Optimal Routing and Caching}\label{sec:problemformulation}
We are now ready to formally pose the problem of jointly optimizing caching and routing. We pose here the \emph{offline} problem, in which problem inputs are given, and static caching and routing strategies are determined (jointly) at time $t=0$. Nonetheless, we will devise distributed, adaptive algorithms that do not a priori know the demand, but still converge to (probabilistic) strategies that are within a constant approximation of the (offline) optimal.

 To capture  costs  (e.g., latency, money, etc.), we associate a \emph{weight} $w_{uv}\geq 0$ with each  edge $(u,v)\in E$, representing the cost of transferring an item across this edge.
 We assume  that costs are solely due to response messages that carry an item, while request forwarding costs are negligible.  We do not assume that $w_{uv}=w_{vu}$. We describe the cost minimization objectives under source and hop-by-hop routing below.

\noindent \textbf{Source Routing.} The cost for serving a request $(i,s)\in \requests$ under source routing is:
\begin{align} C^{(i,s)}_{\SR}(r,X) =\!\! \sum_{p\in \pathset_{(i,s)}} \!\!\! r_{(i,s),p} \!\sum_{k=1}^{|p|-1}\!\! w_{p_{k+1}p_k}\!\!\prod_{k'=1}^k (1\!-\!x_{p_{k'}i}).\label{sourcecost}\end{align}
Intuitively, \eqref{sourcecost} states that $C^{(i,s)}_{\SR}$ includes the cost of an edge $(p_{k+1},p_k)$ in the path $p$ if (a) $p$ is selected by the routing strategy, and (b) \emph{no} cache preceding this edge in $p$  stores $i$. 
In the deterministic setting, we seek a  global caching and routing strategy $(r,X)$ minimizing the aggregate expected cost, defined as:
\begin{align}\textstyle C_{\SR}(r,X) = \sum_{(i,s)\in \requests} \lambda_{(i,s)}  C^{(i,s)}_{\SR}(r,X),\label{routingcost}\end{align}
with $C^{(i,s)}_\SR$ given by \eqref{sourcecost}.
That is, we wish to solve:

 \begin{subequations}\label{deterministicsourcecost}
\center{\textsc{MinCost}-\SR}\vspace*{-0.5em}
\begin{align}
\text{Minimize:}& \quad C_{\SR}(r,X)\\
\text{subj.~to:}& \quad (r,X)\in \feasibledomain_\SR
\end{align}
\end{subequations}
where 
$\feasibledomain_\SR\subset \reals^{P_\SR} \times \reals^{|V|\times |\catalog|}$  is the set of  $(r,X)$ satisfying the routing, capacity, and integrality constraints, i.e.:
\begin{subequations}\label{integralconstr-sr}
\begin{align}
& \textstyle\sum_{i\in\catalog} x_{vi} = \capacity_v, & &\text{ for all }v\in V, \label{xcapacity}\\
& \textstyle\sum_{p\in \pathset_{(i,s)}} r_{(i,s),p} =1, &&\text{ for all }(i,s)\in\requests, \label{pselect}\\
&x_{vi}\in \{0,1\}, &&\text{ for all }v\in V, i\in\catalog, \text{ and} \label{xintegrality}\\
&r_{(i,s),p}\in \{0,1\}, &&\text{ for all }p\in\pathset_{(i,s)},(i,s)\in \requests.  \label{pintegrality}
\end{align}
\end{subequations}
This problem is NP-hard, even in the case where routing is fixed: see Shanmugam et al.~\citep{shanmugam2013femtocaching} for a reduction from the 2-Disjoint Set Cover Problem.

\noindent \textbf{Hop-By-Hop Routing.} Similarly to \eqref{sourcecost}, under hop-by-hop routing, the cost of serving $(i,s)$ can be written as:
\begin{align}\begin{split}
C^{(i,s)}_{\HH}(r,X) = \textstyle\sum_{(u,v)\in G^{(i,s)}} w_{vu} \cdot r_{uv}^{(i)} (1-x_{ui}) \cdot\\
\textstyle\sum_{p\in \pathset_{(i,s)}^u} \prod_{k'=1}^{|p|-1} r_{p_{k'}p_{k'+1}}^{(i)}(1-x_{p_{k'}i}).\end{split}\label{hopcost}\end{align}
We wish to solve:
\begin{subequations}\label{deterministichopbyhop}
\center{\textsc{MinCost}-\HH}\vspace*{-0.5em}
\begin{align}
\text{Minimize:}& \quad C_{\HH}(r,X)\\
\text{subj.~to:}& \quad (r,X)\in \feasibledomain_\HH
\end{align}
\end{subequations}
where 
$C_{\HH}(r,X) = \textstyle\sum_{(i,s) \in \requests }\lambda_{(i,s)}  C^{(i,s)}_{\HH}(r,X)$ 
is the expected routing cost, and
$\feasibledomain_\HH$ is the set of $(r,X)\in \reals^{\sum_{i\in C}|E^{(i)}|} \times \reals^{|V|\times |\catalog|}$ satisfying the  constraints:
\begin{subequations}\label{integralconstr-hbh}
\begin{align}
& \textstyle\sum_{i\in\catalog} x_{vi} = \capacity_v, & &\text{ for all }v\in V, \label{xhcapacity}\\
&\textstyle\sum_{v:(u,v)\in E^{(i)}}r_{uv}^{(i)} =1   &&\text{ for all }v\in V,i\in\catalog, \label{phselect}\\
&x_{vi}\in \{0,1\}, &&\text{ for all }v\in V, i\in\catalog, \text{ and} \label{xhintegrality}\\
&r_{uv}^{(i)}\in \{0,1\}, &&\text{ for all }(u,v)\in E^{(i)}, i\in \catalog.  \label{phintegrality}
\end{align}
\end{subequations}

\noindent\textbf{Randomization.} The above routing cost minimization problems can also be stated in the context of randomized caching and routing strategies. 
For example, in the case of source routing, assuming\footnote{The independence of routing and caching strategies across nodes comes \emph{without loss of generality}! The minimum expected cost attainable under independence is the same under cache contents and routing strategies sampled from \emph{an arbitrary joint distribution over $\mathcal{D}_\SR$}; see Thm~\ref{cor:equiv}.} 
(a) independent caching strategies across nodes selected at time $t=0$, with marginal probabilities given by $\Xi$, and (b)  independent routing strategies at each source, with marginals given by $\rho$ (also independent from caching strategies), all terms in $C_\SR$ contain products of independent random variables; this implies that:
\begin{align}\expect[C_{\SR}(r,X)] = C_{\SR}[\expect[r],\expect[X]]=C_{\SR}(\rho,\Xi),\label{relaxfun}\end{align}
where the expectation  is taken over the randomness of both caching and routing strategies. The expected routing cost thus depends on the routing and caching strategies only through the expectations $\rho$ and $\Xi$. As a result,  under randomized routing and caching strategies, \textsc{MinCost}-\SR{} becomes (see  \techrep{Appendix~\ref{app:randomisrelaxed}}{\citep{arxivthis}} for the derivation): \begin{subequations}\label{relaxedmin}
\begin{align} \text{Minimize:} &\quad  C_\SR(\rho,\Xi)\\
\text{subj.~to:} &\quad (\rho,\Xi)\in \conv(\feasibledomain_\SR)
\end{align}
\end{subequations}
where 
$\conv(\feasibledomain_\SR)$ is the convex hull of $\feasibledomain_\SR$; this is precisely the set defined by \eqref{integralconstr-sr} with integrality constraints \eqref{xintegrality}, \eqref{pintegrality} relaxed. The objective function $C_\SR$ is \emph{not convex} and the relaxed problem \eqref{relaxedmin} is therefore \emph{not} a convex optimization problem. This is in stark contrast to single-hop settings, that often can naturally be expressed as linear programs \citep{dehghan2014complexity,abedini2014content,naveen2015interaction}.

A similar derivation can be done for hop-by-hop routing.  Assuming again independent caches and independent routing strategies, it can be shown that optimizing over randomized hop-by-hop strategies is equivalent to
\begin{subequations}\label{relaxedminHH}
\begin{align} \text{Minimize:} &\quad  C_\HH(\rho,\Xi)\\
\text{subj.~to:} &\quad (\rho,\Xi)\in \conv(\feasibledomain_\HH),
\end{align}
\end{subequations}
where $\conv(\feasibledomain_\HH)$  the convex hull of $\feasibledomain_\HH$. This, again, is a non-convex optimization problem.

\subsection{Fixed Routing}\label{sec:fixedrouting}

When the global routing strategy $r$ is fixed,  \eqref{deterministicsourcecost} reduces to
\begin{subequations}\label{fixedrouting}
\begin{align}
\text{Minimize:}& \quad C_{\SR}(r,X)\\
\text{subj.~to:}& \quad X \text{ satisfies } \eqref{xcapacity}\text{ and }\eqref{xintegrality} 
\end{align}
\end{subequations}
\textsc{MinCost}-\HH{} can be similarly restricted to caching only. We studied this restricted optimization in earlier work \citep{ioannidis2016adaptive}. In particular, under given global routing strategy $r$, we cast \eqref{fixedrouting} as a maximization problem as follows. Let 
\begin{align}C_0^r = C_\SR(r,0) =  \sum_{(i,s)\in \requests} \lambda_{(i,s)} \sum_{p\in \pathset_{(i,s)}} \!\!\! r_{(i,s),p} \!\sum_{k=1}^{|p|-1}\!\! w_{p_{k+1}p_k} \end{align}
be the cost when all caches are empty (i.e., $X$ is the zero matrix 0). Note that this is a constant that does not depend on $X$.
 Consider the following maximization problem:
\begin{subequations}\label{maxprobfixr}
\begin{align}
\text{Maximize:}& \quad F^r_{\SR}(X) =  C_0^r-C_\SR(r,X) \\
\text{subj.~to:}& \quad X \text{ satisfies } \eqref{xcapacity}\text{ and }\eqref{xintegrality} 
\end{align}
\end{subequations}
This problem is equivalent to \eqref{fixedrouting}, in that a feasible solution to \eqref{maxprobfixr} is optimal if and only if it also optimal for \eqref{fixedrouting}. The objective $F^r_{\SR}(X)$, referred to as the \emph{caching gain} in \citep{ioannidis2016adaptive}, is  monotone, non-negative, and  submodular, while the set of constraints on $X$ is a set of matroid constraints. As a result, for any $r$, there exist standard approaches for constructing a polynomial time approximation algorithm solving the corresponding maximization problem \eqref{maxprobfixr} within a $1-1/e$ factor from its optimal solution \citep{ioannidis2016adaptive,shanmugam2013femtocaching}. In addition, we show  \citep{ioannidis2016adaptive} that an approximation algorithm based on a technique known as \emph{pipage rounding}~\citep{ageev2004pipage} can be converted into a distributed, adaptive version with the same approximation ratio.

\subsection{Greedy Routing Strategies}\label{sec:greedyrouting}

In the case of source routing, we identify two ``greedy'' deterministic routing strategies, that are often used in practice, and play a role in our analysis.
We say that a global routing strategy $r$ is a \emph{route-to-nearest-server} (RNS) strategy if all paths it selects are least-cost paths to designated servers, irrespectively of cache contents. Formally, for all $(i,s)\in \requests$, 
$\textstyle r_{(i,s),p^*}=1$ for some $p^*\in \argmin_{p\in\pathset_{(i,s)}} \sum_{k=1}^{|p|-1}w_{p_{k+1},p_{k}},$
while $r_{(i,s),p}=0$ for all other $p\in \pathset_{(i,s)}$ s.t. $p\neq p^*$.
Similarly, given a caching strategy $X$, we say that a global routing strategy $r$ is \emph{route-to-nearest-replica} (RNR) strategy if, for all $(i,s)\in \requests$, 
$r_{(i,s),p^*}=1$ for some $p^*\in \argmin_{p\in\pathset_{(i,s)}} \sum_{k=1}^{|p|-1}w_{p_{k+1},p_{k}}\!\!\prod_{k'=1}^k (1\!-\!x_{p_{k'}i}),$ 
while $r_{(i,s),p}=0$ for all other $p\in \pathset_{(i,s)}$ s.t. $p\neq p^*$.
In contrast to RNS strategies, RNR strategies  depend on the caching strategy $X$. Note that RNS and RNR strategies can be defined similarly in the context of hop-by-hop routing.

\techrep{
\subsection{Multi-Item Requests and Stationarity}
We have assumed that all items in the catalog are of \emph{equal size}. In practice, contents of unequal size can be partitioned into equally sized chunks, which would play the role of ``items'' in our formulation.  Requesting, e.g., a file, amounts to requesting all of its chunks simultaneously. Our model can be easily extended to account for requests comprising multiple item/source tuples, i.e., of the form $\{(i_1,s),(i_2,s),\ldots,(i_L,s)\}$. Each such multi-item request contributes several summation terms to the routing cost $C_\SR$, given by \eqref{routingcost}, one for each chunk in the requested set. Thus, the objective has the same form. Note that this formulation does not assume, or require, that chunks of the same file are stored in the same cache or that requests for them follow the same paths. 

More broadly speaking, the time average routing cost is given by $CS_{\SR}$ even if (a) the request arrival processes for requests of type $(i,s)\in \requests$ \emph{are not independent}, and (b)  are arbitrary stationary ergodic renewal processes with arrival rates $\lambda_{(i,s)}$. Our offline analysis and results therefore readily extend to general stationary ergodic arrival processes. Moreover our distributed, adaptive algorithm readily applies to non-independent Poisson arrivals (and in particular, to multi-item requests that would occur under file-chunking); it can also be extended to general stationary arrivals, provided that the subgradient estimators used in our projected gradient ascent algorithm are replaced with unbiased estimates matching the statistics of underlying renewal process.  
}{}

\section{Main Results}\label{sec:main}
We present our main results in this section, extending the analysis in \citep{ioannidis2016adaptive} to the joint optimization of both caching and routing decisions. We provide an analysis of both source and hop-by-hop routing; \techrep{all proofs are provided in the appendix.}{proofs of theorems are omitted, and are provided in our technical report \citep{arxivthis}.}

\subsection{Routing to Nearest Server Is Suboptimal}\label{sec:subopt}
 A simple approach, followed by most works that optimize caching separately from routing, is to always route requests to the nearest designated server storing an item (i.e., use an RNS strategy).  It is therefore interesting to ask how this simple heuristic performs compared to a solution that attempts to solve \eqref{deterministicsourcecost} by \emph{jointly} optimizing caching and routing.  It is easy to see that RNS and, more generally, routing that ignores caching strategies, can lead to arbitrarily suboptimal solutions:  
\begin{theorem}\label{thm:subopt} For any $M>0$, there exists a caching network for which the route-to-nearest-server strategy $r'$ satisfies
\begin{align}{ \min_{X:(r',X)\in \feasibledomain_\SR} C_\SR(r',X)  }/{\min_{(r,X)\in \feasibledomain_\SR}C_\SR(r,X)} = \Theta(M). \end{align}
\end{theorem}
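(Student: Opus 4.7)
My plan is to establish the theorem by an explicit worst-case construction in which the RNS strategy commits to a direct path that contains no cache node, while the jointly optimal solution reroutes onto a slightly longer alternative whose single cache can absorb every request. The intuition is that nearest-server routing is indifferent to cache placement: if one engineers a topology in which the shortest path simply avoids all caches relevant to the item, then RNS is forced into a configuration whose cost is insensitive to $X$, no matter how favorably $X$ is chosen.

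Concretely I will take $V=\{s,v,s^*\}$, a singleton catalog $\catalog=\{i\}$ with $\source_i=\{s^*\}$, capacities $c_s=c_{s^*}=0$ and $c_v=1$, and a symmetric edge set consisting of $(s,s^*)$, $(s,v)$, $(v,s^*)$ and their reverses. The reverse-direction weights, which are the only ones that appear in the routing cost \eqref{sourcecost}, will be set to $w_{s^*,s}=1$, $w_{v,s}=\tfrac{1}{2M}$ and $w_{s^*,v}=M$; forward-direction weights can be chosen arbitrarily. The request set is $\requests=\{(i,s)\}$ with rate $\lambda>0$, and I take the admissible paths to be $\pathset_{(i,s)}=\{p_1=(s,s^*),\,p_2=(s,v,s^*)\}$, both of which meet the simplicity and terminal-server requirements of Section~\ref{sec:model}.

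I then evaluate the two sides of the ratio. The empty-cache costs of $p_1$ and $p_2$ are $1$ and $\tfrac{1}{2M}+M$ respectively, so for $M\geq 1$ the RNS rule forces $r'_{(i,s),p_1}=1$. The equality constraint \eqref{xcapacity} uniquely determines the feasible $X$ (namely $x_{s,i}=0$ and $x_{v,i}=1$), and since $v\notin p_1$, plugging into \eqref{sourcecost} gives $\min_{X:(r',X)\in\feasibledomain_\SR}C_\SR(r',X)=\lambda$. For the joint optimum I instead set $r_{(i,s),p_2}=1$ with the same $X$; the factor $(1-x_{v,i})$ then kills the $w_{s^*,v}$ term in \eqref{sourcecost}, yielding $C_\SR(r,X)=\tfrac{\lambda}{2M}$. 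Dividing gives a ratio of at least $2M=\Theta(M)$, which is the desired bound (and the case $M<1$ is subsumed by the $M=1$ instance, since the ratio is always $\geq 1$).

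The main design point, and the only real ``obstacle'', is to cook up a topology that simultaneously satisfies three desiderata: the RNS path is strictly shortest in the empty-cache metric, it contains no cache that could serve $i$, and the alternative path is made profitable precisely by populating a cache that RNS bypasses. The three-node gadget above achieves all three with the minimum possible machinery, and can be embedded as a subgraph of a larger network or replicated across multiple items and sources without changing the asymptotic scaling.
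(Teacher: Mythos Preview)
Your construction is correct and establishes the theorem, but it takes a different route from the paper's own proof. The paper uses the diamond network of Figure~\ref{fig:simpleexample}: \emph{two} items share a common source and server, both candidate paths pass through a unit-capacity cache, and the costs are $M+1$ versus $M+2$. Under RNS both requests are funneled onto the cheaper path, creating \emph{cache contention}---only one of the two items fits, so the other pays $\Theta(M)$. Joint optimization instead diversifies the two items across the two paths, so each is cached and the total cost is $O(1)$.

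Your gadget exploits a different failure mode: with a single item, you arrange for the RNS-shortest path to \emph{bypass the only cache altogether}, while the slightly longer alternative hits it. This is arguably more elementary (three nodes, one item, no contention argument needed), and it isolates the phenomenon that RNS is oblivious to where caches sit. The paper's construction, by contrast, ties directly into its ``path diversity increases caching opportunities'' narrative, since there both paths do contain caches and the suboptimality stems purely from RNS's refusal to spread load. Both are valid; yours is minimal, the paper's is more thematically aligned with the rest of the exposition. One small quibble: your remark that ``the case $M<1$ is subsumed'' is a bit loose, since for $M\le 1/2$ your joint optimum actually reverts to $p_1$ and the ratio collapses to $1$; but the theorem is only interesting for large $M$, and there your ratio $2M$ is exactly what is needed.
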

In other words, routing to the nearest server can be arbitrarily suboptimal, incurring a cost arbitrarily larger than the cost of the optimal jointly optimized routing and caching policy. The network that exhibits this behavior is shown in Fig.~\ref{fig:simpleexample}, and a proof of the theorem can be found in \techrep{Appendix~\ref{app:proofofthmsubopt}.}{\citep{arxivthis}}.  \change{In short, a source node $s$ generates requests for items $1$ and $2$ that are permanently stored on designated server $t$. There are two alternative paths towards $t$, each passing through an intermediate node with cache capacity 1 (i.e., able to store only one item).  Under shortest path routing, requests for both items are forwarded over the path of length $M+1$  towards $t$; fixing routes this way leads to a cost of $M+1$ for at least one of the items, irrespectively of which item is cached in the intermediate node. On the other hand, if routing and caching decisions are jointly optimized, requests for the two items can be forwarded to different paths, allowing both items to be cached, and reducing the cost for both requests to at most 2. }

This example illustrates that joint optimization of caching \emph{and} routing decisions benefits the system by \emph{increasing path diversity}. In turn, increasing path diversity can increase caching opportunities, thereby leading to reductions in caching costs. This is consistent with our experimental results in Section~\ref{sec:evaluation}. 

 \begin{figure}[!t]
 \begin{center}
 \includegraphics[width=0.7\columnwidth]{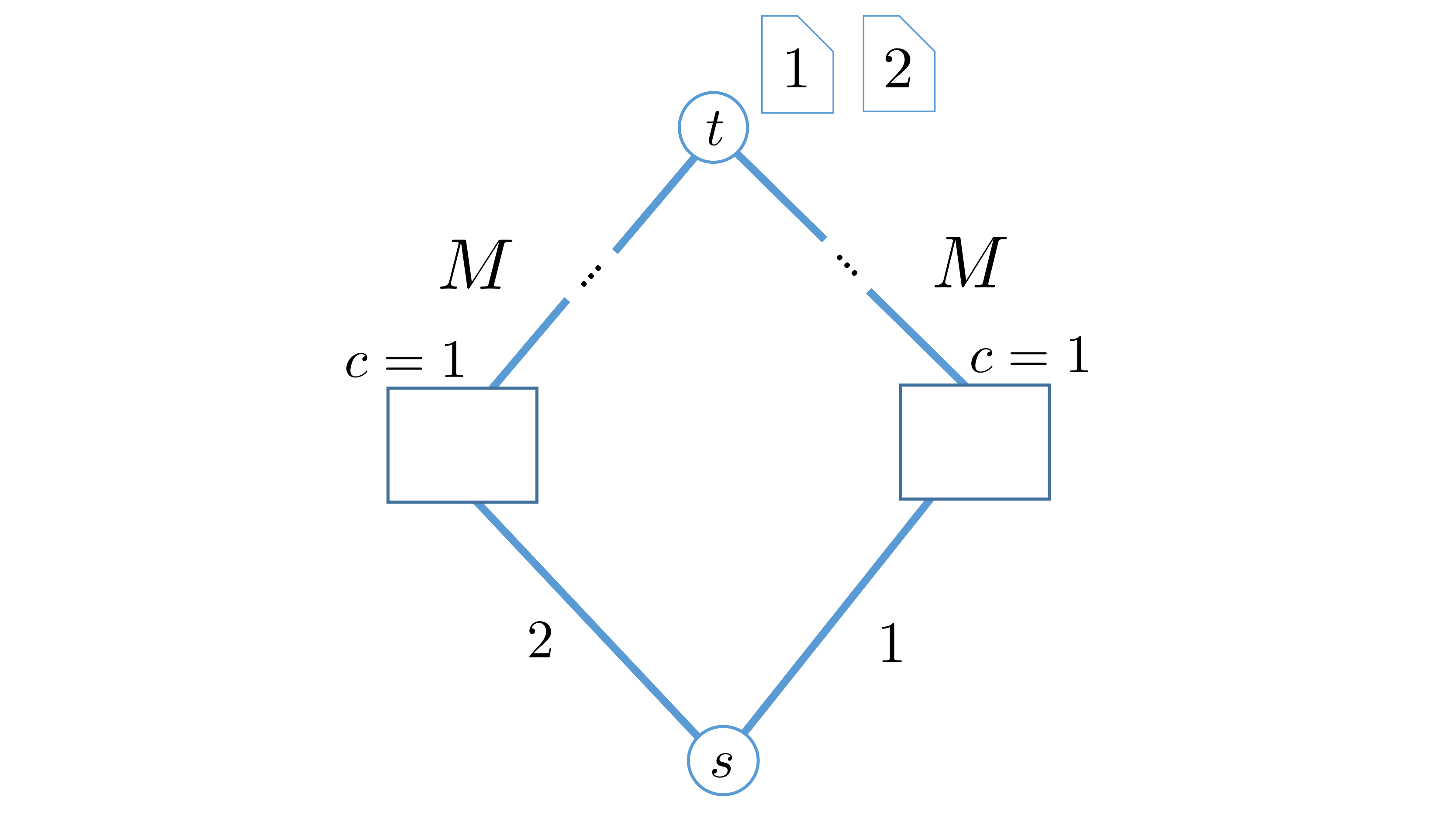}
 \end{center}
 \caption{ \change{A simple example illustrating the benefits of path diversity. A source node $s$ generates requests for items $1$ and $2$,  permanently stored on designated server $t$. Intermediate nodes on the are two alternative paths towards $t$ have capacity 1. Numbers above edges indicate costs. \techrep{ Under RNS, requests for both items are forwarded over the same  path  towards $t$, leading to an $\Omega(M)$ routing cost irrespective of the caching strategy. In contrast the optimal solution uses different paths per item, leading to an $O(1)$ cost.}{}} } \label{fig:simpleexample}
 \end{figure}

\subsection{Offline Source Routing.}
\noindent\textbf{Expected Caching Gain.}
Before presenting a distributed, adaptive joint routing and caching algorithm, we first turn our attention to the offline problem \textsc{MinCost}. As in the solution by \citep{ioannidis2016adaptive} described in Section~\ref{sec:fixedrouting},  we cast this first as a maximization problem. Let $C_0$ be the constant:
\begin{align}C^0_{\SR} =   \textstyle\sum_{(i,s) \in \requests }\lambda_{(i,s)}\sum_{p\in \pathset_{(i,s)}}  \sum_{k=1}^{|p|-1} w_{p_{k+1}p_k}.\end{align}
Then, given a pair of strategies $(r,X)$, we define the \emph{expected caching gain} $F_{\SR}(r,X)$  as follows:
\begin{align}
F_{\SR}(r,X) &= C^0_{\SR}-C_{\SR}(r,X),\label{gains} 
\end{align}
where $C_{\SR}$ is the  aggregate routing cost given by \eqref{routingcost}.
Note that $C^0_{\SR}$ upper bounds the expected routing cost, so that $F_{\SR}(r,X)\geq 0$. We seek to solve the following problem, equivalent to \textsc{MinCost}: \begin{subequations}\label{deterministicsource}
\center{\CGS}\vspace*{-0.5em}
\begin{align}
\text{Maximize:}& \quad F_{\SR}(r,X)\\
\text{subj.~to:}& \quad (r,X)\in \feasibledomain_\SR
\end{align}
\end{subequations}
The selection of the constant $C_\SR^0$ is not arbitrary: this is precisely the value that \change{allows us to approximate  $F_\SR$ via the concave relaxation $L_\SR$ below--c.f.~Eq.~\eqref{myapprox}}.

\noindent\textbf{Approximation Algorithm.} Its equivalence to \textsc{MinCost} implies that \CGS is also NP-hard. 
Nevertheless, we show that there exists a polynomial time approximation algorithm for \CGS: \begin{theorem}\label{thm:offline}
There exists an algorithm that terminates within a number of steps that is polynomial in $|V|$, $|\catalog|$, and $P_\SR$, and produces a strategy $(r',X')\in \feasibledomain_\SR$ such that
\begin{align}F_{\SR}(r',X') \geq (1-1/e)  \textstyle\max_{(r,X)\in \feasibledomain_\SR} F_{\SR}(r,X).\end{align}
\end{theorem}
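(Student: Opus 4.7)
The plan is to adapt the concave-relaxation-plus-pipage-rounding framework of Ageev and Sviridenko~\citep{ageev2004pipage}, previously applied in~\citep{ioannidis2016adaptive} to the caching-only problem under fixed routing, so that it handles routing and caching variables jointly. I would proceed in four phases.

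First, extend $F_\SR$ from $\feasibledomain_\SR$ to $\conv(\feasibledomain_\SR)$ using the independence identity \eqref{relaxfun}; the extended $F_\SR(\rho,\Xi)$ agrees with the original integer objective on $\feasibledomain_\SR$, so $\max_{\conv(\feasibledomain_\SR)}F_\SR\geq\max_{\feasibledomain_\SR}F_\SR$. Crucially, this extension is not jointly concave, because each term contains the bilinear product of $\rho_{(i,s),p}$ with the multilinear expression $1-\prod_{k'=1}^k(1-\xi_{p_{k'}i})$.

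Second, define the surrogate $L_\SR(\rho,\Xi)$ obtained from $F_\SR(\rho,\Xi)$ by replacing each product $1-\prod_{k'=1}^k(1-\xi_{p_{k'}i})$ with its coverage-style linearization $\min(1,\sum_{k'=1}^k\xi_{p_{k'}i})$. The standard inequalities $(1-1/e)\min(1,\sum_j a_j)\leq 1-\prod_j(1-a_j)\leq\min(1,\sum_j a_j)$ for $a_j\in[0,1]$ then yield the sandwich $(1-1/e)L_\SR(\rho,\Xi)\leq F_\SR(\rho,\Xi)\leq L_\SR(\rho,\Xi)$ throughout $\conv(\feasibledomain_\SR)$. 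The specific value of the constant $C_\SR^0$ in the definition \eqref{gains} of $F_\SR$ is precisely what aligns the zero levels of $F_\SR$ and $L_\SR$ so that this bound holds uniformly. To maximize $L_\SR$ over $\conv(\feasibledomain_\SR)$ in polynomial time despite its residual bilinearity, I would use a lifted-variable reformulation: introduce $z_{(i,s),p,k}\in[0,1]$ with $z_{(i,s),p,k}\leq\rho_{(i,s),p}$ and $z_{(i,s),p,k}\leq\sum_{k'=1}^k\xi_{p_{k'}i}$, and maximize $\sum\lambda_{(i,s)}w_{p_{k+1}p_k}z_{(i,s),p,k}$ over the resulting polytope augmented with relaxed versions of \eqref{xcapacity}--\eqref{pselect}. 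This is an LP of size polynomial in $|V|,|\catalog|,P_\SR$ and its solution yields a fractional $(\rho^*,\Xi^*)\in\conv(\feasibledomain_\SR)$ with $L_\SR(\rho^*,\Xi^*)\geq\max_{\feasibledomain_\SR}F_\SR$.

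Third, round $(\rho^*,\Xi^*)$ to an integral strategy $(r',X')\in\feasibledomain_\SR$ via swap-style pipage rounding performed separately within each matroid of $\feasibledomain_\SR$. The key observation is that along any swap internal to a single matroid, $F_\SR$ becomes a function \emph{linear} in the perturbation $\epsilon$: within a caching row, a swap $\xi_{vi}\!\to\!\xi_{vi}+\epsilon$, $\xi_{vi'}\!\to\!\xi_{vi'}-\epsilon$ only touches any product $\prod_{k'}(1-\xi_{p_{k'}i})$ through at most one of its factors (the items $i,i'$ index distinct products); within a routing simplex, a swap $\rho_{(i,s),p}\!\to\!\rho_{(i,s),p}+\epsilon$, $\rho_{(i,s),p'}\!\to\!\rho_{(i,s),p'}-\epsilon$ is linear in $\epsilon$ by linearity of $F_\SR$ in each $\rho_{(i,s),p}$. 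Pushing $\epsilon$ to the appropriate endpoint eliminates one fractional coordinate per swap without decreasing $F_\SR$, and after $O(|V||\catalog|+P_\SR)$ swaps we obtain integral feasible $(r',X')$ satisfying $F_\SR(r',X')\geq F_\SR(\rho^*,\Xi^*)\geq(1-1/e)L_\SR(\rho^*,\Xi^*)\geq(1-1/e)\max_{\feasibledomain_\SR}F_\SR$, as required.

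The main obstacle, relative to the fixed-routing analysis of~\citep{ioannidis2016adaptive}, is the bilinear coupling between $\rho$ and $\Xi$ in both $F_\SR$ and $L_\SR$: it destroys joint concavity and blocks a direct concave-programming approach. The lifted LP above is the device that restores convexity for the maximization step, and the observation that internal-to-matroid swap directions eliminate the bilinear coupling is what allows the pipage rounding of~\citep{ioannidis2016adaptive} to go through unchanged; together they recover the $(1-1/e)$ guarantee in spite of the non-convexity.
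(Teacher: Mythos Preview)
Your overall architecture---solve a concave surrogate over $\conv(\feasibledomain_\SR)$, then pipage-round---matches the paper, and your rounding argument is fine (the paper rounds $\Xi$ first via pipage and then $\rho$ by direct maximization over each simplex, but your swap-based version works too, since $F_\SR$ is affine along every within-matroid swap direction).

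The gap is in step two. Your surrogate linearizes only the $\xi$-product, leaving each term of $L_\SR$ equal to $1-\rho_{(i,s),p}+\rho_{(i,s),p}\min\{1,\sum_{k'}\xi_{p_{k'}i}\}$, which is bilinear in $(\rho,\Xi)$. Your lifted LP does \emph{not} compute this function: with $z\leq\rho$ and $z\leq\sum\xi$, the optimal $z$ equals $\min\{1,\rho,\sum\xi\}$, not $\rho\cdot\min\{1,\sum\xi\}$ (take $\rho=\sum\xi=\tfrac12$: the former is $\tfrac12$, the latter $\tfrac14$). Worse, the LP objective you wrote, $\sum\lambda_{(i,s)}w_{p_{k+1}p_k}z_{(i,s),p,k}$, does not upper-bound $F_\SR$ on $\feasibledomain_\SR$: at any integral point with $r_{(i,s),p}=0$ the $F_\SR$ term equals $1$ while your LP term is $0$. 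So the claimed inequality $L_\SR(\rho^*,\Xi^*)\geq\max_{\feasibledomain_\SR}F_\SR$ does not follow from your LP, and the chain in your third step collapses.

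The paper sidesteps the bilinearity altogether by folding the routing variable into the coverage product: writing $1-\rho_{(i,s),p}\prod_{k'=1}^k(1-\xi_{p_{k'}i})=1-\prod_{j=0}^k(1-y_j)$ with $y_0=1-\rho_{(i,s),p}$ and $y_j=\xi_{p_ji}$, the Goemans--Williamson inequality applies to the \emph{entire} product and yields the surrogate term $\min\{1,\,1-\rho_{(i,s),p}+\sum_{k'=1}^k\xi_{p_{k'}i}\}$. This is a minimum of two affine functions of $(\rho,\Xi)$, hence jointly concave, and the resulting $L_\SR$ is maximized by a straightforward LP with one auxiliary variable per term---no bilinear lifting needed. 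Replacing your $L_\SR$ and LP by this one, the rest of your argument goes through unchanged.
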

\techrep{The proof can be found in Appendix~\ref{proofofthmoffline}.}{}
  In Sec.~\ref{sec:evaluation} we show that, in spite of attaining approximation guarantees w.r.t.~$F_\SR$ rather than $C_\SR$, the resulting approximation algorithm has excellent performance in practice in terms of minimizing routing costs. In particular, we can reduce routing costs by a factor as high as $10^3$ compared to fixed routing policies, including \citep{ioannidis2016adaptive}. 
\change{
We  briefly describe the algorithm below, leaving details to \techrep{the appendix.}{\citep{arxivthis}.} Consider the concave function $L_{\SR}:\conv(\feasibledomain_{\SR})\to \reals_+$,   defined as:
\begin{align}
\begin{split}
L_{\SR} (\rho,\Xi) = \textstyle \sum_{(i,s) \in \requests }\lambda_{(i,s)}\sum_{p\in \pathset_{(i,s)}}  \sum_{k=1}^{|p|-1} w_{p_{k+1}p_k} \cdot \\
\textstyle\min\big\{1,1-\rho_{(i,s),p}+\sum_{k'=1}^k \xi_{p_{k'}i}\big\}.
\end{split}
\end{align}
Then, $L_\SR$  closely approximates  $F_{\SR}$ \techrep{(see Lemma~\ref{goodapprox})}{\citep{arxivthis}}:\begin{align}\left(1-{1}/{e}\right)  L_{\SR} (\rho,\Xi) \leq F_{\SR}(\rho,\Xi)\leq L_{\SR} (\rho,\Xi),  \label{myapprox}\end{align}for all $(\rho,\Xi)\in \conv(\feasibledomain_\SR).$
Our  constant-approximation algorithm for \CGS comprises two steps. \emph{First,} obtain
\begin{align}(\rho^*,\Xi^*)\in\textstyle\argmax_{(\rho,\Xi)\in \conv(\feasibledomain_\SR) } L_{\SR} (\rho,\Xi).\label{myrelax}\end{align}
 As $L_{\SR}$ is a concave function and  $\conv(\feasibledomain_\SR)$ is convex, the above maximization is a convex optimization problem. In  fact, it  can be  reduced to a linear program, so it can be solved in  polynomial time \citep{papadimitriou1982combinatorial}. \emph{Second,} round the (possibly fractional) solution $(\rho^*,\Xi^*)\in \conv(\feasibledomain_\SR)$  to an integral solution $(r,X)\in \feasibledomain_\SR$ such that $F_\SR(r,X)\geq F_\SR(\rho^*,\Xi^*)$. This rounding is deterministic and also takes place in polynomial time. }

\change{The rounding technique used in our} proof of Thm.~\ref{thm:offline}  has the following immediate implication\techrep{, whose proof is in Appendix ~\ref{proofofcorrnr}:}{:}
\begin{corollary} \label{cor:rnr}There exists an optimal solution $(r^*,X^*)$ to \CGS (and hence, to \textsc{MinCost}-\SR) in which $r^*$ is an route-to-nearest-replica (RNR)  strategy w.r.t.~$X^*$. 
\end{corollary}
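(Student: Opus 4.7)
The plan is to exploit two structural features of \CGS: (i) given any caching $X^{*}$, the routing cost $C_\SR(\,\cdot\,,X^{*})$ decomposes as a sum over requests $(i,s)\in\requests$, with each term depending only on the routing variables $r_{(i,s),\cdot}$ associated with that one request (see \eqref{sourcecost}--\eqref{routingcost}); and (ii) the routing constraint \eqref{pselect} likewise decouples across requests. Consequently, once $X^{*}$ is frozen, minimizing $C_\SR(r,X^{*})$ over $r$ splits into $|\requests|$ independent single-request subproblems. First I would pick any optimal pair $(r^{*},X^{*})\in\feasibledomain_\SR$, which exists because $\feasibledomain_\SR$ is finite, and hold $X^{*}$ fixed.

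Next I would analyze each request $(i,s)$ in isolation. Its contribution to $C_\SR(r,X^{*})$ is $\lambda_{(i,s)}\sum_{p\in\pathset_{(i,s)}} r_{(i,s),p}\,c_p^{(i)}(X^{*})$, where $c_p^{(i)}(X^{*})=\sum_{k=1}^{|p|-1} w_{p_{k+1}p_k}\prod_{k'=1}^{k}(1-x^{*}_{p_{k'}i})$ is the per-path effective cost induced by $X^{*}$. Since \eqref{pselect} forces exactly one path to be chosen, this linear objective is minimized by picking any $p^{*}\in\argmin_{p\in\pathset_{(i,s)}} c_p^{(i)}(X^{*})$ and setting $r'_{(i,s),p^{*}}=1$ with $r'_{(i,s),p}=0$ for all other $p\in\pathset_{(i,s)}$. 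By the definition of RNR routing given in Section~\ref{sec:greedyrouting}, the vector $r'$ assembled this way across all $(i,s)\in\requests$ is precisely an RNR strategy with respect to $X^{*}$.

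Finally, I would verify optimality: by construction $(r',X^{*})\in\feasibledomain_\SR$ and $C_\SR(r',X^{*})\le C_\SR(r^{*},X^{*})$, hence $F_\SR(r',X^{*})\ge F_\SR(r^{*},X^{*})$; combined with the optimality of $(r^{*},X^{*})$ this forces $(r',X^{*})$ to be optimal as well. The only delicate point is tie-breaking when several paths attain $\min_p c_p^{(i)}(X^{*})$, which is harmless: any consistent choice of minimizer yields a feasible RNR strategy and preserves the cost bound. Notably, this argument is self-contained and does not invoke the pipage-rounding procedure underlying Thm.~\ref{thm:offline}, though it mirrors the fact that the rounding there can always be followed by this greedy re-routing step without loss.
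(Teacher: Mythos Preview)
Your proof is correct and follows essentially the same approach as the paper: the paper invokes Lemma~\ref{roundrouting} (which observes that for fixed $\Xi$ the objective is affine and separable in the routing variables, so selecting the least-cost path per request is optimal and yields an RNR strategy when $\Xi$ is integral), while you inline precisely that argument. The only cosmetic difference is that the paper packages the linearity/separability observation into a reusable rounding lemma, whereas you prove it directly for the integral optimum.
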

Although,  in light of Theorem~\ref{thm:subopt}, Corollary~\ref{cor:rnr} suggests an advantage of RNR  over RNS strategies, we note it does not give  any intuition on how to construct  an optimal RNR solution.
 
\noindent\textbf{Equivalence of Deterministic and Randomized Strategies.} We can also show the following  result \techrep{(see Appendix~\ref{proofofequiv})}{} regarding \emph{randomized} strategies. For $\mu$ a probability distribution over $\feasibledomain_\SR$, let $\expect_\mu[C_\SR(r,X)]$ be the expected routing cost under $\mu$. Then, the following equivalence theorem holds:
\begin{theorem}\label{cor:equiv}   The deterministic and randomized versions of \textsc{MinCost-\SR} attain the same optimal routing cost, i.e.:
\begin{align}\begin{split}\min_{(r,X)\in \feasibledomain_\SR}C_\SR(r,X)&=
\min_{(\rho,\Xi)\in \conv(\feasibledomain_\SR) }C_\SR(\rho, \Xi)\\
&= \min_{\mu:\supp(\mu)=\feasibledomain_\SR}\expect_\mu[C_\SR(r,X)]  \end{split}\end{align} 
\end{theorem}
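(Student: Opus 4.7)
The plan is to chain three quantities:
$A := \min_{(r,X)\in\feasibledomain_\SR} C_\SR(r,X)$,
$B := \min_{(\rho,\Xi)\in\conv(\feasibledomain_\SR)} C_\SR(\rho,\Xi)$, and
$C := \min_{\mu}\expect_\mu[C_\SR(r,X)]$ with $\mu$ ranging over distributions supported on $\feasibledomain_\SR$, and prove the three values coincide pairwise. The equality $A=C$ is immediate in both directions: a point mass on an optimal $(r^*,X^*)\in\feasibledomain_\SR$ gives $C\leq A$, while any $\mu$ supported on $\feasibledomain_\SR$ satisfies $\expect_\mu[C_\SR(r,X)]\geq \min_{\feasibledomain_\SR}C_\SR = A$, giving $C\geq A$. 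Similarly, $\feasibledomain_\SR\subseteq\conv(\feasibledomain_\SR)$ immediately gives $B\leq A$.

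The substantive step is the reverse inequality $A\leq B$: starting from an optimizer $(\rho^*,\Xi^*)\in\conv(\feasibledomain_\SR)$ of the relaxed problem, I would produce an integral $(r,X)\in\feasibledomain_\SR$ with $C_\SR(r,X)\leq C_\SR(\rho^*,\Xi^*)$ via a pipage-style swap. The key structural observation, read off from \eqref{sourcecost}, is that $C_\SR(\rho,\Xi)$ is a \emph{multilinear} polynomial: in every monomial each $\xi_{v,i}$ appears to degree at most one (because paths in $\pathset_{(i,s)}$ are simple, so the nodes $p_{k'}$ inside $\prod_{k'=1}^k(1-x_{p_{k'},i})$ are distinct), and each $\rho_{(i,s),p}$ appears to degree one within its own summand. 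Moreover, two capacity partners $\xi_{v,i}$ and $\xi_{v,j}$ with $i\neq j$ never co-occur in a single monomial (each monomial concerns a single requested item), and two routing partners $\rho_{(i,s),p}$ and $\rho_{(i,s),p'}$ never co-occur either (they belong to distinct $p$-summands).

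These two non-co-occurrence facts imply that perturbing a capacity pair by $(\xi_{v,i},\xi_{v,j})\mapsto(\xi_{v,i}+\epsilon,\xi_{v,j}-\epsilon)$, or a routing pair by the analogous antisymmetric shift, preserves the equality constraints \eqref{xcapacity} and \eqref{pselect} and changes $C_\SR$ \emph{linearly} in $\epsilon$. The rounding procedure is then: while a fractional coordinate remains, locate a fractional partner within its equality group (one must exist, since \eqref{xcapacity} and \eqref{pselect} have integer right-hand sides, ruling out a lone fractional entry), and push $\epsilon$ to whichever endpoint of its feasibility interval does not increase $C_\SR$. Linearity guarantees such an endpoint exists, and at that endpoint at least one of the two coordinates becomes $0$ or $1$, strictly decreasing the number of fractional variables. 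After at most $|V|\cdot|\catalog|+P_\SR$ iterations we obtain an integral $(r,X)\in\feasibledomain_\SR$ with $C_\SR(r,X)\leq C_\SR(\rho^*,\Xi^*)=B$, closing $A\leq B$ and hence $A=B=C$.

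The main care-point I anticipate is cleanly verifying the non-co-occurrence claims directly from the definition \eqref{sourcecost}, plus the bookkeeping that a fractional partner is always available within each equality group and that at least one extreme of the swap interval strictly lands on an integer boundary; both reduce to elementary observations about which variables multiply in the polynomial and about integrality of $c_v$ and of the constants on the right-hand sides of \eqref{integralconstr-sr}. Once those are in hand, the pipage swap itself is essentially the same mechanism invoked in the proof of Theorem~\ref{thm:offline}, so no additional machinery is needed.
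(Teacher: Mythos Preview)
Your proof is correct and takes essentially the same approach as the paper: the paper establishes $A=C$ by the same point-mass/averaging argument, and for $A\leq B$ it invokes the rounding Lemmas~\ref{roundcaching} and~\ref{roundrouting} (from the proof of Theorem~\ref{thm:offline}), which are exactly the pipage swaps you spell out inline via the multilinearity and non-co-occurrence observations. The only cosmetic difference is that the paper rounds $\Xi$ first and then $\rho$ in two separate lemmas, whereas you interleave both within a single ``pick any fractional pair in its equality group'' loop; the underlying $\epsilon$-convexity mechanism is identical.
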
 
The first equality of the theorem implies that, surprisingly, there is \emph{no inherent advantage in randomization}: although  randomized strategies constitute a superset of deterministic strategies, the optimal attainable  routing cost (or, equivalently, caching gain) is the same  for both classes. The second equality implies that assuming independent caching and routing strategies is as powerful as sampling routing and caching strategies from an arbitrary joint distribution.  Thm.~\ref{cor:equiv} generalizes Thm.~5 of \citep{ioannidis2016adaptive}, which pertains to optimizing caching alone. 
\subsection{Online Source Routing}
The algorithm in Thm.~\ref{thm:offline} is offline and centralized: it assumes full knowledge of the input, including demands and arrival rates, which are rarely a priori available in practice. To that end,
we turn our attention to solving \CGS in the online setting, in the absence of any a priori knowledge of the demand, and seek an algorithm that is both adaptive and distributed.

As described in \ref{sec:offon}, in the online setting, time is partitioned into slots of equal length $T>0$. Caching and routing strategies are randomized as described in Sec.~\ref{sec:model}: at the beginning of a timeslot,  nodes place a random set of contents in their cache, independently of each other; upon  arrival,  a new request is routed over a random path, selected independently of (a) all past routes followed,  and (b) of caching decisions. 
Our next theorem  shows that, in steady state, the expected caching gain of the jointly constructed routing and caching strategies is within a constant approximation of the optimal solution to the offline problem \CGS: 
\begin{theorem}\label{maincor}
There exists a distributed, adaptive algorithm under which the randomized strategies sampled during the $k$-th slot  $(r^{(k)},X^{(k)})\in \feasibledomain_\SR$ satisfy 
\begin{align}\lim_{k\to \infty} \expect[F_\SR(r^{(k)}\!,\!X^{(k)})] \geq (1 - {1}/{e})\max_{(r,X)\in \feasibledomain_\SR} F_\SR(r,X). \end{align}
\end{theorem}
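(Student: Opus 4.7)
The plan is to build a distributed, adaptive projected stochastic subgradient ascent algorithm on the concave surrogate $L_\SR$ over $\conv(\feasibledomain_\SR)$, using the running marginals themselves as the randomized strategies actually executed by the network. The overall structure mirrors the adaptive algorithm of~\citep{ioannidis2016adaptive} for the caching-only case, extended to jointly include the routing variables $\rho$. Each node would maintain its own components of a vector of marginals $(\rho^{(k)},\Xi^{(k)}) \in \conv(\feasibledomain_\SR)$ at the start of timeslot $k$. During slot $k$, each cache node $v$ independently samples its contents $x_v^{(k)}$ consistent with the marginals $\Xi_v^{(k)}$ and the capacity constraint (e.g., via a dependent-rounding scheme), and each source $s$ samples a path independently per arriving request $(i,s)$ according to $\rho_{(i,s)}^{(k)}$. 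Because of the independence across nodes, equation~\eqref{relaxfun} gives $\expect[F_\SR(r^{(k)},X^{(k)})] = F_\SR(\rho^{(k)},\Xi^{(k)})$, which by~\eqref{myapprox} is at least $(1-1/e)\,L_\SR(\rho^{(k)},\Xi^{(k)})$; thus it suffices to drive $L_\SR(\rho^{(k)},\Xi^{(k)})$ toward $\max_{\conv(\feasibledomain_\SR)} L_\SR$, which upper-bounds $\max F_\SR$.

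Next, I would construct unbiased, locally computable estimators of a (super)gradient of $L_\SR$. Subdifferentiating $L_\SR$ with respect to $\xi_{vi}$ yields, up to the rate $\lambda_{(i,s)}$, sums of edge weights along suffixes of the paths through $v$ that carry requests for item $i$; differentiating with respect to $\rho_{(i,s),p}$ yields the (negative) total weight of the suffix of $p$ past the first cached copy. These quantities are precisely what a response message naturally accumulates as it traverses the reverse path. So within each timeslot of length $T$, nodes can form unbiased estimators of their local subgradient components by piggy-backing partial cost accumulators on response messages and binning observed requests, as in~\citep{ioannidis2016adaptive}; the kinks of the $\min$ in $L_\SR$ are handled by choosing an appropriate element of the subdifferential deterministically.

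At the end of slot $k$, each node updates its local marginals via
\begin{equation*}
(\rho^{(k+1)}, \Xi^{(k+1)}) = \Pi_{\conv(\feasibledomain_\SR)}\bigl[(\rho^{(k)},\Xi^{(k)}) + \eta_k\,\hat g_k\bigr],
\end{equation*}
where $\Pi$ is Euclidean projection. Crucially, $\conv(\feasibledomain_\SR)$ decomposes as a product of per-source simplices over $\pathset_{(i,s)}$ and per-cache capacity polytopes, so the projection is fully local. Standard convergence theory for projected stochastic subgradient ascent on concave functions with bounded-second-moment estimators then yields that, with stepsizes $\eta_k = \Theta(1/\sqrt{k})$ (or constant $\eta$ with Polyak averaging), the time-averaged iterate $\overline{(\rho,\Xi)}^{(k)}$ satisfies $\expect[L_\SR(\overline{(\rho,\Xi)}^{(k)})] \to \max_{\conv(\feasibledomain_\SR)} L_\SR$. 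Using the running averages as the marginals employed in subsequent slots and invoking concavity of $L_\SR$ transfers this convergence to $\expect[L_\SR(\rho^{(k)},\Xi^{(k)})]$. Combining with $F_\SR(\rho,\Xi) \geq (1-1/e)\,L_\SR(\rho,\Xi)$ and $\max L_\SR \geq \max F_\SR$ closes the argument.

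The main obstacle is the distributed construction of the subgradient estimator: the component along $\xi_{vi}$ depends on aggregate statistics of every request whose sampled path crosses $v$, including suffix costs determined by random upstream cache contents, and the subdifferential selection at the kinks of the $\min$ must be made consistently at nodes that see only partial information. Overcoming this requires carefully piggy-backing cost accumulators on responses so that each node can extract an unbiased estimate of its local coordinate from purely observable traffic, a step whose variance must be controlled uniformly in $k$ for the convergence theorem to apply. A secondary obstacle is the dependent sampling of $x_v$ needed to respect the capacity constraint while preserving marginals $\Xi_v^{(k)}$ and the independence used in~\eqref{relaxfun}; a swap-based rounding as in~\citep{ioannidis2016adaptive} suffices.
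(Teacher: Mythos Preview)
Your overall architecture---projected stochastic subgradient ascent on the concave surrogate $L_\SR$ over $\conv(\feasibledomain_\SR)$, with per-node projections, $1/\sqrt{k}$ stepsizes, averaging of iterates, and independent per-node randomized rounding to recover $\expect[F_\SR(r^{(k)},X^{(k)})]=F_\SR(\bar\rho^{(k)},\bar\Xi^{(k)})$---matches the paper's approach. The final chaining of inequalities via \eqref{myapprox} is also correct.

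The genuine gap is in the subgradient estimator. You propose to read off the estimator from the \emph{actual} response traffic: ``suffix costs determined by random upstream cache contents'' along the \emph{sampled} path. But the quantity a real response accumulates is governed by the indicator $\id_{\sum_{\ell\le k'} x_{p_\ell i}<1}$ (an actual cache hit) on a path drawn with probability $\rho_{(i,s),p}$. Averaging this yields an unbiased estimate of a gradient of $F_\SR$, the multilinear objective, \emph{not} of $L_\SR$: the subgradient of $L_\SR$ w.r.t.\ $\xi_{vi}$ involves the indicator $\id_{\,1-\rho_{(i,s),p}+\sum_{\ell\le k'}\xi_{p_\ell i}\le 1}$, summed over \emph{all} paths $p\in\pathset_{(i,s)}$, and depends on the fractional state $(\rho,\Xi)$, not on realized contents or realized routes. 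Since $F_\SR$ is non-concave over $\conv(\feasibledomain_\SR)$, ascending its gradient gives no global guarantee, and your convergence-to-$\max L_\SR$ step breaks.

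The paper closes this gap by sending \emph{separate control messages}, one per path $p\in\pathset_{(i,s)}$ at each request arrival, whose counters accumulate the fractional values $\xi_{vi}$ (initialized with $1-\rho_{(i,s),p}+\xi_{si}$) and terminate when the running sum exceeds $1$; the reverse-path weight accumulations then produce exactly the indicators in $\partial L_\SR$. This is the missing idea: the estimator must be built from the fractional state, not from observed cache hits, and must touch every candidate path (or an appropriately reweighted random path), which cannot be done by piggy-backing on ordinary response messages alone.
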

\techrep{The proof can be found in Appendix~\ref{proofofmaincor}.}{} Note that, despite the fact that the algorithm has no prior knowledge of the demands, the guarantee provided is w.r.t.~an optimal solution of the \emph{offline} problem \eqref{deterministicsource}. Moreover, in light of Thm.~\ref{cor:equiv}, our adaptive algorithm is $1-\tfrac{1}{e}$-competitive w.r.t.~optimal offline \emph{randomized} strategies as well. Our algorithm naturally generalizes  \citep{ioannidis2016adaptive}: when the path sets $\pathset_{(i,s)}$ are singletons, and routing is fixed, our algorithm coincides with the cache-only optimization algorithm in \citep{ioannidis2016adaptive}. Interestingly, the algorithm casts routing and caching  \emph{in the same control plane}: the same quantities are communicated through control messages to adapt both the caching and routing strategies.

\begin{algorithm}[t]
\begin{footnotesize}
  \caption{\textsc{Projected Gradient Ascent}}\label{alg:ascent}
    \begin{algorithmic}[1]
   \STATE Execute the following for each $v\in V$ and each $(i,s)\in \requests$:
   \STATE Pick arbitrary state $(\rho^{(0)},\Xi^{(0))}\in \conv(\feasibledomain_\SR)$.
   \FOR{ each timeslot $k\geq 1$}
    \FOR{ each $v\in V$} 
    \STATE Compute the sliding average $\bar{\xi}_v^{(k)}$\techrep{ through \eqref{slidexi}.}{.}
    \STATE Sample a feasible $x^{(k)}_v$ from a distribution with marginals $\bar{\xi}_v^{(k)}$.
    \STATE Place items $x^{(k)}_v$ in cache. \medskip
        \STATE Collect measurements and, at the end of the timeslot, compute estimate \techrep{$z_{v}$}{} of $\partial_{\xi_v} L_\SR(\rho^{k},\Xi^{(k)})$\techrep{ through \eqref{zestimation}.}{.}
    \STATE Adapt \techrep{ through \eqref{adapt}.}{} to new state  $\xi_{v}^{(k+1)}$ in the direction of the gradient with step-size $\gamma_k$, projecting back to $\conv(\feasibledomain_\SR).$ 
  \ENDFOR
   \FOR{ each $(i,s)\in \requests$} 
    \STATE Compute the sliding average $\bar{\rho}_{(i,s)}^{(k)}$\techrep{ through \eqref{sliderho}.}{.}\medskip
    \STATE Whenever a new request arrives, sample $p\in\pathset_{(i,s)}$ from distribution $\bar{\rho}_{(i,s)}^{(k)}$.
        \STATE Collect measurements and, at the end of the timeslot, compute estimate \techrep{$q_{(i,s)}$}{} of $\partial_{\rho_{(i,s)}} L_\SR(\rho^{k},\Xi^{(k)})$\techrep{through \eqref{qestimation}.}{.}
    \STATE Adapt \techrep{through \eqref{adapt}.}{} to new state  $\rho_{(i,s)}^{(k+1)}$ in the direction of the gradient with step-size $\gamma_k$, projecting back to $ \conv(\feasibledomain_\SR)$.
  \ENDFOR
  \ENDFOR

  \end{algorithmic}
\end{footnotesize}
\end{algorithm}

\change{
\noindent\textbf{Algorithm Overview.}
 We give a brief overview of the  distributed, adaptive algorithm that attains the guarantees of Theorem~\ref{maincor} below. The algorithm is summarized in Algorithm~\ref{alg:ascent}. Recall that time is partitioned into slots of equal length $T>0$. Caching and routing strategies are randomized as described in Sec.~\ref{sec:model}: at the beginning of a timeslot,  nodes place a random set of contents in their cache, independently of each other; upon  arrival,  a new request is routed over a random path, selected independently of (a) all past routes followed,  and (b) of caching decisions. 
}

\change{More specifically, nodes in the network maintain the following state information. Each node $v\in G$ maintains  locally a vector
    $\xi_{v}\in[0,1]^{|\catalog|}$, determining its randomized caching strategy. Moreover, for each request $(i,s)\in \requests$, source node $s$ maintains a vector $\rho_{(i,s)} \in[0,1]^{|\pathset_{(i,s)}|}$, determining its randomized routing strategy.  Together, these variables represent the global state of the network, denoted by $(\rho,\Xi) \in \conv(\feasibledomain_\SR)$.  
When the timeslot ends, each node performs the following four tasks:
\begin{packedenumerate}
\item \textbf{Subgradient Estimation}. Each node uses measurements collected during the duration of a timeslot to construct estimates of the gradient of $L_\SR$ w.r.t. its own local state variables. As $L_\SR$ is not everywhere differentiable, an estimate of a \emph{subgradient}  of $L_\SR$  is computed instead. 
\item \textbf{State Adaptation.} Nodes adapt their local caching and routing state variables  $\xi_{v}$, $v\in V$, and $\rho_{(i,s)}$, $(i,s)\in \requests$, pushing them towards a direction that increases $L_\SR$, as determined by the estimated subgradients.
\item \textbf{State Smoothening.}  Nodes compute ``smoothened'' versions $\bar{\xi}_v$, $v\in V$, and $\bar{\rho}_{(i,s)}$, $(i,s)\in \requests$, interpolated between present and past states. This is needed on account of the non-differentiability of $L_{\SR}$. 
\item \textbf{Randomized Caching and Routing.} After smoothening,  each node $v$ reshuffles the contents of its cache using the smoothened caching marginals $\bar{\xi}_v$, producing a random placement (i.e., caching strategy $x_v$) to be used throughout the next slot. Moreover, each node $s\in V$ routes requests  $(i,s)\in \requests$ received during next timeslot over random paths (i.e., routing strategies $r_{(i,s)}$) sampled in an i.i.d.~fashion from the smoothened  marginals $\bar{\rho}_{(i,s)}$.
\end{packedenumerate}
 Together, these steps ensure that, in steady state, the expected caching gain of the jointly constructed routing and caching strategies is within a constant approximation of the optimal solution to the offline problem \CGS. We formally describe  the constituent subgradient estimation, state adaptation,  smoothening, and random sampling steps in  detail in \techrep{Appendices~\ref{sec:distributedsub} to \ref{sec:randomizedrounding}, respectively.}{\citep{arxivthis}.} 
}

\sloppy
\noindent\change{\textbf{Convergence Guarantees.} The proof of the convergence of the algorithm relies on the following key lemma: 
\begin{lemma} \label{convergencelemma}Let $(\bar{\rho}^{(k)},\bar{\Xi}^{(k)})\in \feasibledomain_2$ be the smoothened state variables at the $k$-th slot of Algorithm~\ref{alg:ascent}, and 
$(\rho^*,\Xi^*)\in\argmax_{(\rho,\Xi)\in \conv(\feasibledomain_\SR)} L_\SR(\rho,\Xi).
$
Then, for $\gamma_k$ the step-size used in projected gradient ascent, \begin{align*}\varepsilon_k&\equiv \expect[L_\SR(\rho^*,\Xi^*)-L_\SR(\bar{\rho}^{(k)},\bar{\Xi}^{(k)})] \leq  \frac{D^2 + M^2 \sum_{\ell=\lfloor k/2\rfloor}^{k}\gamma_\ell^2 }{2\sum_{\ell=\lfloor k/2\rfloor}^{k}\gamma_\ell} ,\end{align*}
where $D = \sqrt{2|V|\max_{v\in V}c_v+2|\requests|},$ and $$M=W|V| \Lambda\sqrt{(|V||\catalog|P^2 +|\requests|P) (1+\frac{1}{\Lambda T})}.$$ In particular, $\varepsilon_k= O(1/\sqrt{k})$ for $\gamma = 1/\sqrt{k}$. \end{lemma}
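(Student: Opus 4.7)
The plan is to apply the standard analysis of projected stochastic subgradient ascent for concave, non-differentiable objectives, leveraging that $L_\SR$ is concave on the convex set $\conv(\feasibledomain_\SR)$. Three ingredients are needed: (i) a bound $D$ on the Euclidean diameter of $\conv(\feasibledomain_\SR)$; (ii) a uniform second-moment bound $M^2$ on the stochastic subgradient estimators $z_v$ and $q_{(i,s)}$; and (iii) the (conditional) unbiasedness of these estimators as estimates of a supergradient of $L_\SR$. Given these, the convergence bound follows from the usual recursion on squared distance to $(\rho^*,\Xi^*)$ combined with Jensen's inequality applied to the tail-averaged smoothened state $(\bar\rho^{(k)},\bar\Xi^{(k)})$.

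\textbf{Diameter bound.} For any two feasible $(\rho,\Xi),(\rho',\Xi')\in \conv(\feasibledomain_\SR)$, the capacity constraint $\sum_i \xi_{vi}=c_v$ with $\xi_{vi}\in[0,1]$ gives $\|\xi_v-\xi'_v\|_2^2\le 2c_v$, hence the caching block contributes at most $2\sum_v c_v \le 2|V|\max_v c_v$. Similarly, the simplex constraint $\sum_p \rho_{(i,s),p}=1$ yields $\|\rho_{(i,s)}-\rho'_{(i,s)}\|_2^2 \le 2$, so the routing block contributes at most $2|\requests|$. Adding the two gives $D^2 = 2|V|\max_v c_v + 2|\requests|$, matching the stated $D$.

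\textbf{Second-moment bound on the stochastic subgradient.} I will spell out the subgradient estimators $z_v$ and $q_{(i,s)}$ produced in Algorithm~\ref{alg:ascent} and then bound $\expect\|(z,q)\|_2^2$. Each coordinate of $(z,q)$ is built from a sum over requests arriving during a timeslot of length $T$, and each such per-request contribution is a product of at most $P$ edge weights times a binary cache-hit indicator; hence each per-request contribution is bounded in magnitude by $W|V|$. The randomness comes from (a) the Poisson number $N$ of requests in the slot, with $\expect[N]=\Lambda T$ and $\expect[N^2]=\Lambda T + (\Lambda T)^2$, divided by $T$ to form a rate estimate, yielding the factor $\Lambda^2(1+\tfrac{1}{\Lambda T})$, and (b) the random caching/routing sample at $(\bar\rho^{(k)},\bar\Xi^{(k)})$, which only affects magnitudes via the $W|V|$ bound. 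Multiplying this per-coordinate bound by the total number of coordinates, which is at most $|V||\catalog|P^2$ for caching gradients (each $z_v$ has $|\catalog|$ entries, each involving up to $P^2$ path-position terms) and $|\requests|P$ for routing gradients, and taking square roots, produces $M=W|V|\Lambda\sqrt{(|V||\catalog|P^2+|\requests|P)(1+\tfrac{1}{\Lambda T})}$. This is the most delicate step: it requires careful accounting of the variance contributed by the Poisson arrivals and the explicit form of the subgradients derived by differentiating $L_\SR$ through the $\min\{1,\cdot\}$ truncation (taking any measurable selection from $\partial L_\SR$ on the non-differentiable set).

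\textbf{Recursion, tail-averaging, and Jensen.} Writing $y^{(k)}=(\rho^{(k)},\Xi^{(k)})$ and $y^*=(\rho^*,\Xi^*)$, non-expansiveness of projection onto $\conv(\feasibledomain_\SR)$ together with concavity of $L_\SR$ and the unbiasedness/bound on $\hat g_k$ yields the standard inequality
\begin{equation*}
\expect\|y^{(k+1)}-y^*\|_2^2 \le \expect\|y^{(k)}-y^*\|_2^2 - 2\gamma_k\expect[L_\SR(y^*)-L_\SR(y^{(k)})] + \gamma_k^2 M^2.
\end{equation*}
Telescoping this from $\ell=\lfloor k/2\rfloor$ to $\ell=k$ and bounding $\|y^{(\lfloor k/2\rfloor)}-y^*\|_2^2\le D^2$ gives
\begin{equation*}
\sum_{\ell=\lfloor k/2\rfloor}^{k} 2\gamma_\ell \expect[L_\SR(y^*)-L_\SR(y^{(\ell)})] \;\le\; D^2 + M^2\sum_{\ell=\lfloor k/2\rfloor}^{k}\gamma_\ell^2.
\end{equation*}
By construction, the smoothened state $(\bar\rho^{(k)},\bar\Xi^{(k)})$ is the $\gamma_\ell$-weighted average of the iterates $y^{(\ell)}$ for $\ell\in\{\lfloor k/2\rfloor,\ldots,k\}$, so concavity of $L_\SR$ (Jensen) gives $L_\SR(\bar y^{(k)})\ge \frac{\sum_\ell \gamma_\ell L_\SR(y^{(\ell)})}{\sum_\ell \gamma_\ell}$, and rearranging produces the claimed bound on $\varepsilon_k$. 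The $O(1/\sqrt k)$ rate for $\gamma_k=1/\sqrt k$ then follows from the Riemann-sum estimates $\sum_{\ell=\lfloor k/2\rfloor}^k 1/\sqrt\ell = \Theta(\sqrt k)$ and $\sum_{\ell=\lfloor k/2\rfloor}^k 1/\ell = \Theta(1)$. The main obstacle is paragraph three: deriving the explicit subgradient estimators and carefully tracking the Poisson and sampling variances to justify the stated $M$.
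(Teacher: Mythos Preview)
Your proposal is correct and follows essentially the same approach as the paper: the paper also establishes the diameter bound $D$ and the second-moment bound $M$ (via a separate lemma on the estimators $z_v$ and $q_{(i,s)}$), and then invokes the standard projected stochastic subgradient convergence result---though the paper cites Theorem~14.1.1 of Nemirovski directly rather than spelling out the telescoping-plus-Jensen argument you give. Your variance accounting in the third paragraph is slightly loose (each per-request contribution is a \emph{sum} of at most $|V|$ edge weights along each of up to $\bar P$ paths, not a product), but you correctly flag this as the step requiring care, and the paper's own derivation fills in exactly that bound.
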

Lemma~\ref{convergencelemma} establishes that Algorithm~\ref{alg:ascent} converges arbitrarily close to an optimizer of $L_\SR$. As, by \eqref{myapprox}, this is a close approximation of $F_\SR$, the limit points of the algorithm are with the $1-1/e$ from the optimal. Crucially, Lemma~\ref{convergencelemma} can be used to determine the rate of convergence of the algorithm, by determining the number of steps required for $\varepsilon_k$ to reach a desired threshold $\delta$. Moreover, through quantity $M$, Lemma~\ref{convergencelemma} establishes a tradeoff w.r.t.~ $T$: increasing $T$ decreases the error in the estimated subgradient, thereby reducing the total number of steps till convergence, but also increases the time taken by each step.
}

.

\subsection{Hop-by-Hop Routing}

A similar analysis to the one we outlined above applies to hop-by-hop routing, both in the offline and online setting. We state again the main theorems here; proofs can again be found in \techrep{ Appendix~\ref{app:hopbyhop}.}{\citep{arxivthis}.}

\noindent\textbf{Offline Setting.} 
As in the case of source routing,
we define the constant:
$C^0_{\HH} =   \textstyle\sum_{(i,s) \in \requests }\lambda_{(i,s)}\sum_{(u,v)\in G^{(i,s)}} w_{vu} | \pathset_{(i,s)}^u|.$
Using this constant, we define the caching gain maximization problem to be: \begin{subequations}\label{deterministichopbyhop2}
\center{\CGHH}\vspace*{-0.5em}
\begin{align}
\text{Maximize:}& \quad F_{\HH}(r,X)\\
\text{subj.~to:}& \quad (r,X)\in \feasibledomain_\HH
\end{align}
\end{subequations}
where 
$F_{\HH}(r,X) = C^0_{\HH}-\textstyle\sum_{(i,s) \in \requests }\lambda_{(i,s)}  C^{(i,s)}_{\HH}(r,X)$ 
is the expected caching gain.
This is again an NP-hard problem, equivalent to \eqref{deterministichopbyhop}. We can again construct a constant approximation algorithm for \CGHH:
\begin{theorem}\label{thm:offlinehop}
There exists an algorithm that terminates within a number of steps that is polynomial in $|V|$, $|\catalog|$, and $P_\HH$, and produces a strategy $(r',X')\in \feasibledomain_\HH$ such that
\begin{align}
F_{\HH}(r',X') \geq (1-1/e)  \max_{(r,X)\in \feasibledomain_\HH} F_{\HH}
(r,X).\end{align}
\end{theorem}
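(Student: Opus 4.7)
The plan is to mirror the proof of Thm.~\ref{thm:offline}. First, I would construct a concave upper bound $L_\HH : \conv(\feasibledomain_\HH) \to \reals_+$ on (the multilinear extension of) $F_\HH$ satisfying the sandwich bound
\begin{align*}
(1-1/e)\, L_\HH(\rho,\Xi) \;\leq\; F_\HH(\rho,\Xi) \;\leq\; L_\HH(\rho,\Xi), \qquad (\rho,\Xi)\in\conv(\feasibledomain_\HH),
\end{align*}
with $F_\HH$ on the convex hull interpreted as the expected caching gain under independent caching draws and per-node independent routing draws. Second, I would maximize $L_\HH$ over $\conv(\feasibledomain_\HH)$ via a polynomial-size linear program to obtain a fractional optimizer $(\rho^*,\Xi^*)$. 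Third, I would pipage-round $(\rho^*,\Xi^*)$ to an integral $(r',X')\in\feasibledomain_\HH$ with $F_\HH(r',X')\geq F_\HH(\rho^*,\Xi^*)$. Chaining these, and using $L_\HH=F_\HH$ at integral points, one obtains
\begin{align*}
F_\HH(r',X') \;\geq\; (1-1/e)\, L_\HH(\rho^*,\Xi^*) \;\geq\; (1-1/e)\, L_\HH(r^*,X^*) \;=\; (1-1/e)\, F_\HH(r^*,X^*),
\end{align*}
where $(r^*,X^*)$ is an integral optimum, as required.

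To define $L_\HH$, I would rewrite $F_\HH$ by grouping contributions per triple $((i,s),(u,v),p)$ with $(u,v)\in G^{(i,s)}$ and $p\in\pathset_{(i,s)}^{u}$. Each triple contributes $w_{vu}\bigl[1-\prod_{(a,b)} r_{ab}^{(i)}(1-x_{ai})\bigr]$, the product ranging over the edges of the extended path $p\cup\{(u,v)\}$ from $s$ through $u$ to $v$. Applying the standard chain $\prod_k (1-y_k)\leq e^{-\sum_k y_k}$ together with $1-e^{-z}\geq (1-1/e)\min\{1,z\}$, and linearizing the bilinear factors via $r_{ab}^{(i)}(1-x_{ai})\geq r_{ab}^{(i)}-x_{ai}$, I would obtain the concave relaxation
\begin{align*}
L_\HH(\rho,\Xi) \;=\; \sum_{(i,s)\in\requests}\lambda_{(i,s)} \sum_{(u,v)\in G^{(i,s)}} w_{vu} \sum_{p\in\pathset_{(i,s)}^{u}} \min\Bigl\{1,\;\sum_{(a,b)}\bigl((1-\rho_{ab}^{(i)})+\xi_{ai}\bigr)\Bigr\},
\end{align*}
where the innermost sum is again over edges of the extended path. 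Verifying the two halves of the sandwich is then a matter of stringing together these standard inequalities termwise. Maximization of $L_\HH$ is a concave program; introducing one auxiliary variable per $\min\{1,\cdot\}$ term converts it into an LP with polynomially many variables and constraints in $|V|$, $|\catalog|$, and $P_\HH$, solvable in polynomial time.

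The rounding step handles the two partition matroids underlying $\feasibledomain_\HH$: the cache-capacity matroid (identical to Thm.~\ref{thm:offline}) and the hop-by-hop routing matroid, which, for each $(u,i)$, enforces $\sum_{v:(u,v)\in E^{(i)}} r_{uv}^{(i)}=1$. Since $F_\HH$ is multilinear in each individual coordinate, its restriction to any pipage-swap line is a quadratic whose convexity in the swap direction can be checked by a direct second-derivative sign computation; rounding a fractional pair to an endpoint therefore does not decrease $F_\HH$, and polynomially many swaps suffice to produce an integral $(r',X')$. The main obstacle is exactly this rounding step: whereas in source routing each path carries its own scalar $\rho_{(i,s),p}$ so that $F_\SR$ is linear in each routing coordinate and a swap affects only one path, in hop-by-hop the variables $r_{uv}^{(i)}$ multiply along every path passing through $u$, so a pipage swap at $(u,i)$ simultaneously perturbs all downstream path-contributions rooted at $u$. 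Combined with the bilinear factors $r(1-x)$, which make the naive linearization in the sandwich bound lossy for long paths, this coupling is where the analysis departs most from the source-routing proof and requires the most care.
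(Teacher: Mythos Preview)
Your proposal follows essentially the same route as the paper: define a concave surrogate $L_\HH$ via the Goemans--Williamson-type sandwich, solve the resulting LP over $\conv(\feasibledomain_\HH)$, then round. Your $L_\HH$ matches the paper's, and the chaining argument is the same.

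Where you diverge is in the rounding analysis, and there you make the step harder than it is. You anticipate a genuine pipage argument for the routing variables (quadratic along the swap line, second-derivative sign check) and flag the multiplicative coupling of $r_{uv}^{(i)}$ along paths as ``the main obstacle.'' The paper sidesteps this entirely with a cleaner observation: for fixed $\Xi$ and fixed routing at all nodes other than $u$, the objective $F_\HH$ is \emph{affine} in the block $\{r_{uv}^{(i)}\}_{v:(u,v)\in E^{(i)}}$. The reason is that any path in the DAG visits $u$ at most once and therefore uses exactly one outgoing edge of $u$; hence no product in \eqref{hopcost} ever contains two distinct variables $r_{uv}^{(i)}$ and $r_{uv'}^{(i)}$ simultaneously, and the cross term vanishes. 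Rounding the routing strategy is then done \emph{sequentially}, one node $(u,i)$ at a time, by simply picking the single neighbor maximizing the (affine) objective---no second-derivative check is needed. Caching is rounded first, exactly as in Lemma~\ref{roundcaching}. So your plan is correct, but the obstacle you highlight dissolves once you notice this affineness; the paper's rounding is in fact simpler than the source-routing case suggests, not harder.
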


\noindent\textbf{Online Setting.} Finally, as in the case of source routing, we can provide a distributed, adaptive algorithm for hop-by-hop routing as well.
\begin{theorem}\label{maincor2}
There exists a distributed, adaptive algorithm under which the randomized strategies sampled during the $k$-th slot $(r^{(k)},X^{(k)})\in \feasibledomain_\HH$ satisfy
\begin{align}\lim_{k\to \infty} \expect[F_\HH(r^{(k)},X^{(k)})] \geq \big(1 -{1}/{e}\big)\max_{(r,X)\in \feasibledomain_\SR} F_\HH(r,X). \end{align}
\end{theorem}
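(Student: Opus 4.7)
The plan is to mirror the argument structure used for Theorem~\ref{maincor}, swapping in the appropriate concave surrogate and subgradient estimators for the hop-by-hop cost. The first step is to define a concave function $L_\HH:\conv(\feasibledomain_\HH)\to \reals_+$ that serves as a surrogate for $F_\HH$. Inspecting \eqref{hopcost}, the cost contribution of edge $(u,v)\in G^{(i,s)}$ is the product of (a) the routing indicator $r_{uv}^{(i)}$, (b) the no-hit indicator $(1-x_{ui})$, and (c) a sum-of-products capturing whether \emph{some} path from $s$ to $u$ successfully routed the request with no intermediate cache hit. The natural concave surrogate for the per-edge caching gain is then
\begin{align*}
L_\HH(\rho,\Xi) = \sum_{(i,s)\in \requests} \lambda_{(i,s)} \sum_{(u,v)\in G^{(i,s)}} w_{vu}\,\min\Bigl\{1,\; (1-\rho_{uv}^{(i)}) + \xi_{ui} + \!\!\sum_{p\in \pathset^u_{(i,s)}}\sum_{k'=1}^{|p|-1}\bigl[(1-\rho_{p_{k'}p_{k'+1}}^{(i)}) + \xi_{p_{k'}i}\bigr]\Bigr\}.
\end{align*}
Concavity follows because each summand is the minimum of affine functions; the sandwich inequality $(1-1/e)L_\HH(\rho,\Xi)\leq F_\HH(\rho,\Xi)\leq L_\HH(\rho,\Xi)$ follows from the same product-vs-min argument used for $L_\SR$ (a union bound gives the upper inequality, while the $1-1/e$ lower bound follows from the standard inequality $1-\prod(1-y_j)\geq (1-1/e)\min\{1,\sum y_j\}$ applied to the composite product over both routing and caching terms).

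Next, I would design the distributed algorithm as a direct analogue of Algorithm~\ref{alg:ascent}. Each node $v$ maintains its caching marginals $\xi_v\in[0,1]^{|\catalog|}$ and, for each $i\in \catalog$, its local hop-by-hop routing marginals $\rho_{u\cdot}^{(i)}$ satisfying the simplex constraint \eqref{phselect}. At the end of each timeslot, $v$ uses traffic measurements to form unbiased estimators of the coordinates of $\partial L_\HH(\rho^{(k)},\Xi^{(k)})$ that correspond to its local state. Crucially, every coordinate of $\partial L_\HH$ admits a local interpretation: the partial derivative w.r.t.\ $\xi_{ui}$ (resp.\ $\rho_{uv}^{(i)}$) is a sum of edge weights $w_{vu}$ over edges downstream of $u$ in $G^{(i,s)}$, weighted by indicators of the $\min$ in the appropriate slab being strictly less than $1$. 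These indicators, and the corresponding sums, can be estimated locally from the piggybacked control information carried by requests and responses traversing $u$ during the slot, exactly as in the source-routing case. A projected gradient ascent step with step-size $\gamma_k$ then adapts $\xi_v$ and $\rho_{uv}^{(i)}$, followed by projection onto $\conv(\feasibledomain_\HH)$, which decomposes into a per-node capacity projection on $\xi_v$ and a per-$(u,i)$ simplex projection on $\rho_{u\cdot}^{(i)}$; the state is then smoothened to $(\bar\rho^{(k)},\bar\Xi^{(k)})$ before drawing the caching strategy and routing strategies independently from the smoothened marginals.

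For convergence, I would prove an analogue of Lemma~\ref{convergencelemma}: bound the variance of the subgradient estimators in terms of $W$, $|V|$, $\Lambda$, $P_\HH$, $|\requests|$, and $1/T$, and invoke the standard projected subgradient lemma on the concave function $L_\HH$ to obtain $\expect[L_\HH(\rho^*,\Xi^*)-L_\HH(\bar\rho^{(k)},\bar\Xi^{(k)})]=O(1/\sqrt{k})$ under $\gamma_k=1/\sqrt{k}$, where $(\rho^*,\Xi^*)\in\argmax_{\conv(\feasibledomain_\HH)} L_\HH$. Combining with the sandwich inequality and using $F_\HH(r^{(k)},X^{(k)})$ in expectation equaling $F_\HH(\bar\rho^{(k)},\bar\Xi^{(k)})$ (by independence of the sampled variables from the smoothened marginals), we get
\begin{align*}
\liminf_{k\to\infty}\expect[F_\HH(r^{(k)},X^{(k)})] \;\geq\; (1-1/e)\,\max_{\conv(\feasibledomain_\HH)} L_\HH \;\geq\; (1-1/e)\,\max_{(r,X)\in\feasibledomain_\HH}F_\HH(r,X),
\end{align*}
where the last inequality uses $L_\HH\geq F_\HH$ and that $\feasibledomain_\HH\subseteq\conv(\feasibledomain_\HH)$.

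The main obstacle is the subgradient estimation step. Unlike source routing, where each request samples a single path and the relevant product-of-cache-indicators can be read off directly from that path, in hop-by-hop routing the "effective path" that a request traverses emerges from a sequence of independent per-node routing draws, and the gradient coordinates w.r.t.\ $\rho_{uv}^{(i)}$ depend on the probability that node $u$ is \emph{reached} from $s$ without an earlier cache hit. Producing unbiased, low-variance local estimators of these quantities requires carefully piggybacking a small amount of state (e.g., counters of reaching events and residual weights) onto request and response messages, and arguing that the resulting per-slot measurements are unbiased for the subgradient of each slab in $L_\HH$. Once this local estimability is established, the remainder of the argument is a mechanical adaptation of the source-routing proof.
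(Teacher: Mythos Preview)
Your overall plan---concave surrogate, distributed projected subgradient ascent, smoothening, independent sampling, and the final chain of inequalities---matches the paper's approach. However, your concrete definition of $L_\HH$ is wrong in a way that breaks the argument. You place the sum over paths $p\in\pathset^u_{(i,s)}$ \emph{inside} the $\min\{1,\cdot\}$, whereas the paper (and the structure of $F_\HH$) requires it to be \emph{outside}:
\[
L_{\HH}(\rho,\Xi)=\sum_{(i,s)}\lambda_{(i,s)}\sum_{(u,v)\in G^{(i,s)}}\sum_{p\in\pathset^u_{(i,s)}} w_{vu}\,\min\Bigl\{1,\,(1-\rho_{uv}^{(i)})+\xi_{ui}+\textstyle\sum_{k'=1}^{|p|-1}\bigl[(1-\rho^{(i)}_{p_{k'}p_{k'+1}})+\xi_{p_{k'}i}\bigr]\Bigr\}.
\]
The reason is that, with $C^0_\HH$ carrying the factor $|\pathset^u_{(i,s)}|$, the per-edge contribution to $F_\HH$ is a \emph{sum over paths} of terms of the form $1-\prod_j(1-y_j)$; Goemans--Williamson must be applied term by term. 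With your surrogate, the upper sandwich $F_\HH\le L_\HH$ fails already on integral points: for any $(r,X)\in\feasibledomain_\HH$, exactly one path $p^*$ from $s$ to $u$ is ``active'', so the per-edge $F_\HH$ contribution equals $w_{vu}\bigl(|\pathset^u_{(i,s)}|-1+[1-\prod_{p^*}]\bigr)\ge w_{vu}(|\pathset^u_{(i,s)}|-1)$, which exceeds your $w_{vu}\min\{1,\cdot\}\le w_{vu}$ whenever $|\pathset^u_{(i,s)}|\ge 2$. Consequently your final step $\max_{\conv(\feasibledomain_\HH)}L_\HH\ge \max_{\feasibledomain_\HH}F_\HH$ is false, and the $(1-1/e)$ guarantee does not follow. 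Once you move the path sum outside the $\min$, both sandwich inequalities hold exactly as in Lemma~\ref{goodapprox}, and the rest of your sketch (including the identification of the subgradient-estimation obstacle, which the paper resolves by flooding control messages with additive counters over all DAG neighbors and merging on the reverse path) goes through.
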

We note again that the distributed, adaptive algorithm attains an expected caching gain within a constant approximation from the \emph{offline} optimal.

\section{Evaluation}\label{sec:evaluation}

\begin{figure*}[!t]
\hspace*{4mm}\includegraphics[width=0.98\textwidth]{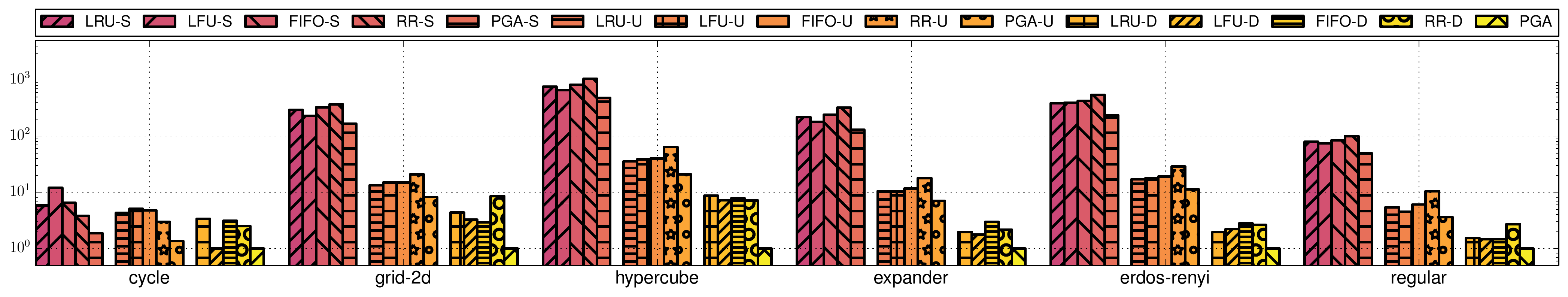}
\put(-330,20){\rotatebox{90}{\tiny $\bar{C}_{\SR}/\bar{C}_\SR^{\texttt{PGA}} $}}
\\
\hspace*{4mm}\includegraphics[width=0.98\textwidth]{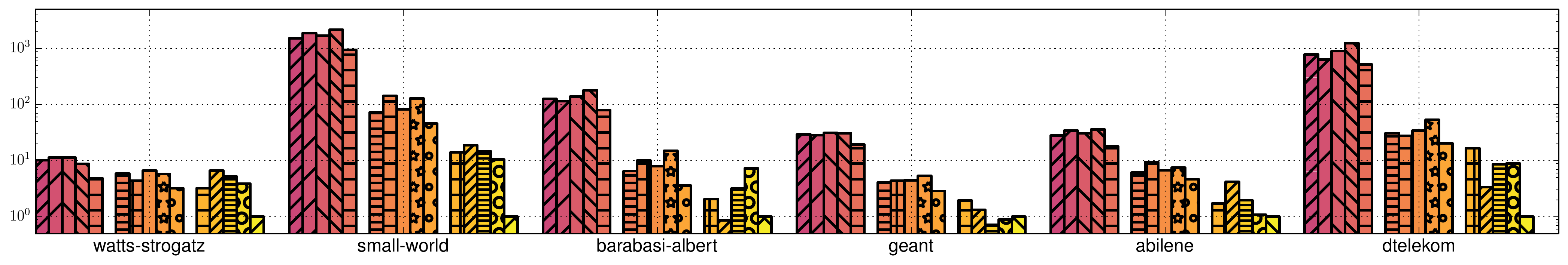}
\put(-330,20){\rotatebox{90}{\tiny $\bar{C}_{\SR}/\bar{C}_\SR^{\texttt{PGA}} $}}
\caption{\change{Ratio of expected routing cost $\bar{C}_\SR$ to routing cost $\bar{C}_\SR^{\texttt{PGA}}$ under our $\texttt{PGA}$ policy},  for different topologies and  strategies. For each topology, each of the three groups of bars corresponds to a routing strategy, namely, RNS/shortest path routing (\texttt{-S}), uniform routing (\texttt{-U}), and dynamic routing (\texttt{-D}). The algorithm presented in \citep{ioannidis2016adaptive} is \texttt{PGA-S}, while our algorithm (\texttt{PGA}), with ratio 1.0, is shown last for reference purposes; \change{values of of  $\bar{C}_\SR^{\texttt{PGA}}$ are given in Table~\ref{networks}.}}\label{relativeperformance}
\end{figure*}

We simulate \techrep{Algorithm \ref{alg:ascent}}{our distributed, adaptive algorithm for \CGS} over a broad variety of both synthetic and real networks. We compare its performance to traditional caching policies,  combined with both static and dynamic multi-path routing. 
\begin{table}[!t]
\begin{center}
\begin{footnotesize}
\begin{tabular}{l|p{1em}p{1em}p{1em}p{1em}p{1em}p{1em}cc}
\hline
Graph & $|V|$ & $|E|$ & $|\catalog|$ & $|\requests|$ &  $|Q|$ & $c_v$ & $|\pathset_{(i,s)}|$ & $\bar{C}_\SR^{\texttt{PGA}}$ \\
\hline
\texttt{cycle}  & 30 & 60 & 10 & 100 & 10  & 2  & 2  &  20.17   \\
 \texttt{grid-2d} &100 & 360  & 300 & 1K & 20 & 3& 30 & 0.228\\
\texttt{hypercube}&128 & 896  & 300 & 1K & 20 & 3& 30 &  0.028\\
 \texttt{expander} &100 & 716  & 300 & 1K & 20 & 3& 30 & 0.112\\
\hline
 \texttt{erdos-renyi} &100 & 1042  & 300 & 1K & 20 & 3& 30 &  0.047\\
\texttt{regular} &100 & 300  & 300 & 1K & 20 & 3& 30 & 0.762 \\
 \texttt{watts-strogatz} &100 & 400  & 300 & 1K & 20 & 3& 2 & 35.08 \\
 \texttt{small-world} &100 & 491  & 300 & 1K & 20 & 3& 30 & 0.029\\
 \texttt{barabasi-albert} &100 & 768  & 300 & 1K & 20 & 3& 30 & 0.187 \\
\hline
 \texttt{geant} &22 & 66& 10&100 & 10 & 2 & 10 & 1.28 \\
 \texttt{abilene} & 9 & 26 & 10 &90 & 9 & 2 & 10 & 0.911 \\
 \texttt{dtelekom} & 68 & 546 & 300 & 1K & 20 & 3 & 30 & 0.025 \\
\hline
\end{tabular}
\end{footnotesize}
\end{center}
\caption{Graph Topologies and Experiment Parameters.}\label{networks}
\end{table}
\noindent\textbf{Experiment Setup.} We consider the topologies  in Table~\ref{networks}. 
For each graph $G(V,E)$  in Table~\ref{networks}, we generate a catalog of size $|\catalog|$, and assign to each node $v\in V$ a cache of capacity $c_v$. For every item $i\in \catalog$, we designate a node  selected u.a.r.~from $V$ as a designated server for this item; the item is  stored outside the designate server's cache. We assign a weight to each edge in $E$ selected u.a.r.~from the interval $[1,100]$. We also select a random set of $Q$ nodes as the possible request sources, and generate a set of requests $\requests \subseteq \catalog\times V$ by sampling exactly $|\requests|$ from the set $\catalog \times Q$, uniformly at random. For each such request $(i,s)\in \requests$, we select the request rate $\lambda_{(i,s)}$ according to a Zipf distribution with parameter $1.2$; these are normalized so that   average request rate over all $|Q|$ sources is 1 request per time unit.
For each request $(i,s)\in \requests$, we generate $|\pathset_{(i,s)}|$ paths from the source $s\in V$ to the designated server of item $i\in \catalog$. In all cases, this path set includes the shortest path to the designated server. We consider only paths with stretch at most 4.0; that is, the maximum cost of a path in $P_{(i,s)}$ is at most 4 times the cost of the shortest path to the designated source.
The values of $|\catalog|$, $|\requests|$ $|Q|$, $c_v$, and $\pathset_{(i,s)}$ for each experiment are given in Table~\ref{networks}. 
\noindent\textbf{Online Caching and Routing Algorithms.} We compare the performance of our joint caching and routing projected gradient ascent algorithm (\texttt{PGA}) to several competitors. In terms of caching, we consider four traditional eviction policies for comparison: Least-Recently-Used (\texttt{LRU}), Least-Frequently-Used (\texttt{LFU}), First-In-First-Out (\texttt{FIFO}), and Random Replacement (\texttt{RR}). We combine these policies with path-replication~\citep{cohen2002replication,jacobson2009networking}:  once a request for an item reaches a cache that stores the item, every cache in the reverse path on the way to the query source stores the item, evicting stale items using one of the above eviction policies. We combine the above caching policies with three different routing policies. In route-to-nearest-server (\texttt{-S}), only the shortest path to the nearest designated server is used to route the message. In uniform routing (\texttt{-U}), the source $s$ routes each request $(i,s)$ on a path selected uniformly at random among all paths in $\pathset_{(i,s)}$. We combine each of these (static) routing strategies with each of the above caching strategies use.  For instance, \texttt{LRU-U} indicates \texttt{LRU} evictions combined with uniform routing. Note that \texttt{PGA-S}, i.e., our algorithm restricted to RNS routing, is exactly the single-path routing algorithm proposed in~\citep{ioannidis2016adaptive}.
To move beyond static routing policies for \texttt{LRU}, \texttt{LFU}, \texttt{FIFO}, and \texttt{RR}, we also combine the above traditional caching strategies with an adaptive routing strategy,  akin to our algorithm, with  estimates of the expected routing cost at each path used to adapt routing strategies.
 During a slot, each source node $s$ maintains an average  of the routing cost incurred when routing a request over each path. At the end of the slot, the source decreases the probability $\rho_{(i,s),p}$ that it will follow the path $p$ by an amount proportional to the average, and projects the new strategy to the simplex. \change{For fixed caching strategies, this dynamic routing scheme converges to a route-to-nearest-replica (RNS) routing, which we expect by Cor.~\ref{cor:rnr} to have good performance.} We denote this routing scheme with the extension \texttt{-D}. \change{Note that all algorithms we simulate are online.}

\noindent\textbf{Experiments and Measurements}. Each experiment consists of a simulation of the caching and routing policy, over a specific topology, for a total of 5000 time units. To leverage PASTA, we collect measurements during the duration of the execution at exponentially distributed intervals with mean 1.0 time unit. At each measurement epoch, we extract the current cache contents in the network and construct $X\in \{0,1\}^{|V|\times |\catalog|}$. Similarly, we extract the current routing strategies $\rho_{(i,s)}$ for all requests $(i,s)\in \requests$, and construct the global routing strategy $\rho\in [0,1]^{P_\SR }$.  Then, we evaluate the \emph{expected routing cost} $C_\SR(\rho,X)$.  We report the average $\bar{C}_\SR$ of these values across measurements collected after a warmup phase, during 1000 and 5000 time units of the simulation; \change{that is, if $t_i$ are the measurement times, then $\bar{C}_{\SR} = \frac{1}{t_\texttt{tot}-t_{\texttt{w}}} \sum_{t_i:\in [t_{\texttt{w}},t_{\texttt{tot}}]}C_{\SR}(\rho(t_i),X(t_i)) .$}

\noindent\textbf{Performance w.r.t~Routing Costs.} The relative performance of the different strategies to our algorithm is shown in Figure~\ref{relativeperformance}. With the exception of \texttt{cycle} and \texttt{watts-strogatz}, where paths are scarce, we see several common trends across topologies. First, simply moving from RNS routing to uniform, multi-path routing,  reduces the routing cost by a factor of 10. Even without optimizing routing or caching, simply increasing path options increases the available caching capacity. For all caching policies, optimizing routing through the dynamic routing policy (denoted by \texttt{-D}), reduces routing costs by another factor of 10. Finally, jointly optimizing routing and caching leads to a reduction by an additional factor between 2 and 10 times. In several cases, \texttt{PGA} outperforms RNS routing combined with either traditional policies or \citep{ioannidis2016adaptive} by 3 orders of magnitude.

\techrep{
\begin{table*}[!t]
\begin{tiny}
\begin{tabular}{l | p{1em}p{1em}p{1em}p{1em}p{2.5em} |p{1em}p{1em}p{1em}p{1em}p{2.0em}  |  p{1em}p{1em}p{1em}p{1em}p{2.0em}   }
\hline
Graph & \texttt{LRU-S} & \texttt{LFU-S} & \texttt{FIFO-S} & \texttt{RR-S}  & \texttt{PGA-S} & \texttt{LRU-U} & \texttt{LFU-U} & \texttt{FIFO-U} & \texttt{RR-U}  & \texttt{PGA-U} & \texttt{LRU} & \texttt{LFU} & \texttt{FIFO} & \texttt{RR} &  \texttt{PGA}\\
\hline
\texttt{cycle} &0.47 & 0.47 & 4.38 & 0.47 & 865.29 & 0.47 & 0.47 & 3.62 & 0.47  & 436.14 & 6.62 & 24.08 & 12.91 & 0.47  & 148.20\\
\texttt{grid-2d} &0.08 & 3.77 & 0.08 & 0.53 & 657.84 & 0.08 & 0.08  & 0.08 & 0.08 & 0.08 & 0.08 & 0.08 & 0.08 & 0.08 &  0.08\\
\texttt{hypercube} &0.21 & 2.27 & 1.62 & 19.48 & 924.75 & 0.21 & 0.21 & 0.21  & 0.21 & 0.21 & 0.21 & 0.21 & 0.21 & 0.21  & 0.21\\
\texttt{expander} &0.38 & 1.31 & 0.38 & 70.02  & 794.27 & 0.38 & 0.38 & 0.38  & 0.38 & 0.38 & 0.38 & 0.38 & 0.38 & 0.38 & 0.38\\
\hline
\texttt{erdos-renyi} &3.08 & 0.25 & 0.93 & 5.40  & 870.84 & 0.25 & 0.25 & 0.25  & 0.25 & 0.25 & 0.25 & 0.25 & 0.25 & 0.25  & 0.25\\
\texttt{regular} &1.50 & 7.53 & 1.64 & 13.66  & 1183.97 & 0.05 & 0.05 & 0.05 & 0.05  & 8.52 & 0.05 & 0.05 & 2.54 & 1.40  & 11.49\\
\texttt{watts-strogatz}\!\!\! &11.88 & 8.15 & 19.47 & 19.47  & 158.39 & 7.80 & 2.88 & 9.01 & 8.22  & 54.90 & 19.22 & 7.01 & 14.00 & 17.79  & 37.05\\
\texttt{small-world} &0.30 & 1.02 & 0.30 & 20.07  & 955.48 & 0.30 & 0.30 & 0.30 & 0.30  & 0.30 & 0.30 & 0.30 & 0.30 & 0.30 & 0.30 \\
\texttt{barabasi-albert}\!\!\! &1.28 & 1.28 & 1.28 & 2.12  & 1126.24 & 1.28 & 1.28 & 1.28 & 1.28  & 6.86 & 1.28 & 1.28 & 1.28 & 1.28  & 7.58\\
\hline
\texttt{geant} &0.09 & 0.09 & 20.75 & 4.13 & 1312.96 & 1.85 & 0.09 & 0.09 & 0.09  & 12.71 & 0.09 & 13.34 & 0.09 & 7.88  & 14.41\\
\texttt{abilene} &3.44 & 3.44 & 3.44 & 3.44  & 802.66 & 3.44 & 3.44 & 3.44 & 3.44  & 23.08 & 5.75 & 6.13 & 6.04 & 7.09  & 14.36\\
\texttt{dtelekom} &0.30 & 0.30 & 0.30 & 0.42  & 927.24 & 0.30 & 0.30 & 0.30 & 0.30 & 0.30 & 0.30 & 0.30 & 0.30 & 0.30 & 0.30 \\
\hline
\end{tabular}
\end{tiny}
\caption{Convergence times, in simulation time units.}\label{convergencetable}
\end{table*}
}{\begin{table}
\begin{footnotesize}
\begin{tabular}{l | cc | cc  | cc  }
\hline
Graph & \texttt{LRU-S} & \texttt{PGA-S} & \texttt{LRU-U} & \texttt{PGA-U} & \texttt{LRU} & \texttt{PGA}\\
\hline
\texttt{cycle} &0.47 & 865.29 & 0.47 & 436.14 & 6.62 & 148.20\\
\texttt{grid-2d} &0.08 & 657.84 & 0.08 & 0.08 & 0.08 & 0.08\\
\texttt{hypercube} &0.21 & 924.75 & 0.21 & 0.21 & 0.21 & 0.21\\
\texttt{expander} &0.38 & 794.27 & 0.38 & 0.38 & 0.38 & 0.38\\
\hline
\texttt{erdos-renyi} &3.08 & 870.84 & 0.25 & 0.25 & 0.25 & 0.25\\
\texttt{regular} &1.50 & 1183.97 & 0.05 & 8.52 & 0.05 & 11.49\\
\texttt{watts-strogatz} &11.88 & 158.39 & 7.80 & 54.90 & 19.22 & 37.05\\
\texttt{small-world} &0.30 & 955.48 & 0.30 & 0.30 & 0.30 & 0.30\\
\texttt{barabasi-albert} &1.28 & 1126.24 & 1.28 & 6.86 & 1.28 & 7.58\\
\hline
\texttt{geant} &0.09 & 1312.96 & 1.85 & 12.71 & 0.09 & 14.41\\
\texttt{abilene} &3.44 & 802.66 & 3.44 & 23.08 & 5.75 & 14.36\\
\texttt{dtelekom} &0.30 & 927.24 & 0.30 & 0.30 & 0.30 & 0.30\\
\hline
\end{tabular}
\end{footnotesize}
\caption{Convergence times, in simulation time units, for \texttt{LRU} and \texttt{PGA} caching strategies with different routing variants. Total simulation time is 5K time units. In almost all cases, convergence to steady state occurs much faster than our warm-up period (1K time units).}
\label{convegencetable2}
\end{table}
}

\noindent\textbf{Convergence.} In Table~\techrep{\ref{convergencetable}}{\ref{convegencetable2}}, we show the convergence time for \techrep{all algorithms we simulated.}{the different variants of LRU and PGA-convergence times for other algorithms can be found in our techreport \citep{arxivthis}.} We define the convergence time to be the time at which the time-average caching gain reaches $95\%$ of the expected caching gain attained at steady state. LRU converges faster than PGA, though it converges to a sub-optimal stationary distribution. Interestingly, both \texttt{-U} and adaptive routing reduce convergence times for PGA,  in some cases (like \texttt{grid-2d} and \texttt{dtelekom}) to the order of magnitude of LRU: this is because path diversification reduces contention: it assigns contents to non-overlapping caches, which are populated quickly with distinct contents.
 \section{Conclusions}\label{sec:conclusions}
We have constructed joint caching and routing schemes with optimality guarantees for arbitrary network topologies. Identifying schemes that lead to improved approximation guarantees, especially on the routing cost directly rather than on the caching gain, is an important open question. Equally important is to incorporate queuing and congestion. In particular, accounting for queueing delays and identifying delay-minimizing strategies is open even under fixed routing. Such an analysis can also potentially be used to understand how different caching and routing schemes affect both delay optimality and  throughput optimality.
 \bibliographystyle{plainnat}
\bibliography{references} 

 \appendix

\section{Cost Minimization Under Randomized Strategies}\label{app:randomisrelaxed}
We prove here that, under the randomized routing and caching strategies with the independence assumptions described in Section~\ref{sec:model}, problem \eqref{deterministicsourcecost} reduces to:
\begin{align} \min_{(\rho,\Xi)\in \conv(\feasibledomain_\SR)}C_\SR(\rho,\Xi)\label{relaxedmincompact}\end{align}
where 
$\conv(\feasibledomain_\SR)$ is the convex hull of $\feasibledomain_\SR$.
The set  $\conv(\feasibledomain_\SR)$ consists of pairs  $(\rho,\Xi)$ where  
 \begin{align*}
\rho&=[\rho_{(i,s),p}]_{(i,s)\in\requests ,p\in \pathset_{(i,s)} }\in [0,1]^P,\quad\text{and} \\
\Xi&=[\xi_{vi}]_{v\in V,i\in \catalog}\in [0,1]^{|V|\times |\catalog|},
\end{align*}  satisfy \eqref{xcapacity} and \eqref{pselect}.

Consider a randomized routing strategy $r$ and a randomized caching strategy $X$, such that $(r,X)\in \feasibledomain_\SR$. Let  $\rho=\expect[r]$ and $\Xi = \expect[X]$. Then $(r,X)\in \feasibledomain_\SR$ readily implies that $(\rho,\Xi)\in \conv(\feasibledomain_\SR)$. Indeed, as $(r,X)$ satisfy $\eqref{xintegrality}$ and $\eqref{pintegrality}$, there expectiations must have coordinates within $[0,1]$. On the other hand, the constraints   \eqref{xcapacity} and \eqref{pselect} are linear, so if they are satisfied by $(r,X)$ they must be satisfied by their expectations. Hence, any randomized strategy pair  $(r,X)\in \feasibledomain_\SR$ necessarily has expectation $(\rho,\Xi)$ that is in the feasibility 
domain of problem \eqref{relaxedmincompact}; moreover, by \eqref{relaxfun}, its expected routing cost would be exactly given by $C_\SR(\rho,\Xi)$.

To complete the proof, we need to show that any feasible solution $(\rho,\Xi)\in \conv(\feasibledomain_\SR) $ of \eqref{relaxedmincompact}, we can constuct a \textsc{MinCost} feasible pair of randomized strategies $(r,X)\in \feasibledomain_\SR$ whose expectations are precisely $(\rho,\Xi)$; then, by \eqref{relaxfun}, it must be that $\expect[C_\SR(r,X)]=C_\SR(\rho,X)$. Note that this construction is trivial for routing strategies: given $(\rho,\Xi)\in \conv(\feasibledomain_\SR)$, we can construct a randomized strategy $r$ by setting $r_{(i,s)}$ for each $(i,s)\in \requests$  to be an independent categorical variable over $\pathset_{(i,s)}$ with $\prob[r_{(i,s),p}=1]=\rho_{(i,s),p}$, for $p\in \pathset_{(i,s)}$. It is less obvious how to do so for caching strategies; nevertheless,  the technique by \citep{ioannidis2016adaptive,wireless-cache}  presented in Appendix~\ref{sec:randomizedrounding} below (see also Figure~\ref{fig:allocation}) achieves precisely the desired property: given a feasible $\Xi$, it produces a feasible randomized placement $X$, independent across nodes that (a) satisfies capacity constraints, and (b) has marginals given by $\Xi$. \qed

\section{Proof of Theorem~{\protect{\lowercase{\ref{thm:subopt}}}}}\label{app:proofofthmsubopt}
Consider simple diamond network shown in Figure \ref{fig:simpleexample}. 
A source node $s$ generates requests for items $1$ and $2$ (i.e., $\requests=\{(1,s),(2,s)\}$), that are permanently stored on designated server $t$, requesting each with equal rate $\lambda_{(1,s)}=\lambda_{(2,s)}=1\text{sec}^{-1}$. The path sets $\pathset_{(i,s)}$, $i=1,2$, are identical, and consist of the two alternative paths towards $t$, each passing through an intermediate node with cache capacity 1 (i.e., able to store only one item). The two paths have routing costs $M+1$ and $M+2$, respectively.  Under the route-to-nearest server strategy $r'$, requests for both items are forwarded over the path of length $M+1$  towards $t$; fixing routes this way leads to a cost $M+1$ for at least one of the items. This happens irrespectively of which item is cached in the intermediate node; as a result the expected routing cost is $\Theta(M)$. On the other hand, if routing and caching decisions are jointly optimized, requests for the two items can be forwarded to different paths, allowing both items to be cached in the nearby caches, and reducing the cost for both requests to at most 2. 
\qed 

\section{Offline Source Routing}
\subsection{Proof of Theorem~{\protect{\lowercase{\ref{thm:offline}}}}}\label{proofofthmoffline}
Following~\citep{ageev2004pipage}, the technique for producing an approximation algorithm to solve \CGS is to: (a) relax the combinatorial joint routing and caching problem to a convex optimization problem, (b) solve this convex relaxation, and (c) round the (possibly fractional) solution to obtain an integral solution to the original problem. 

To that end, 
consider the concave function $L_{\SR}:\conv(\feasibledomain_{\SR})\to \reals_+$,   defined as:
\begin{align}
\begin{split}
L_{\SR} (\rho,\Xi) = \textstyle \sum_{(i,s) \in \requests }\lambda_{(i,s)}\sum_{p\in \pathset_{(i,s)}}  \sum_{k=1}^{|p|-1} w_{p_{k+1}p_k} \cdot \\
\textstyle\min\big\{1,1-\rho_{(i,s),p}+\sum_{k'=1}^k \xi_{p_{k'}i}\big\}.
\end{split}
\end{align}
Then, $L_\SR$  closely approximates  $F_{\SR}$:
\begin{lemma}\label{goodapprox}
$\left(1-{1}/{e}\right)  L_{\SR} (\rho,\Xi) \leq F_{\SR}(\rho,\Xi)\leq L_{\SR} (\rho,\Xi)$ for all $(\rho,\Xi)\in \conv(\feasibledomain_\SR)$.
\end{lemma}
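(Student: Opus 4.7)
The plan is to prove both inequalities \emph{term-by-term}. Expanding $F_{\SR}(\rho,\Xi) = C^0_{\SR} - C_{\SR}(\rho,\Xi)$ using the definition of $C^0_{\SR}$ and the relaxation \eqref{relaxfun} of \eqref{sourcecost}, I get
\begin{align*}
F_{\SR}(\rho,\Xi) = \sum_{(i,s)\in\requests}\!\!\lambda_{(i,s)}\!\!\sum_{p\in\pathset_{(i,s)}}\sum_{k=1}^{|p|-1}\!\! w_{p_{k+1}p_k}\Big[1 - \rho_{(i,s),p}\prod_{k'=1}^{k}(1-\xi_{p_{k'}i})\Big].
\end{align*}
By construction, $L_{\SR}(\rho,\Xi)$ has an identical outer structure, with the bracketed quantity replaced by $\min\{1,\,1-\rho_{(i,s),p}+\sum_{k'=1}^{k}\xi_{p_{k'}i}\}$. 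Since every weight $\lambda_{(i,s)}w_{p_{k+1}p_k}$ is nonnegative, it suffices to show that, for every triple $((i,s),p,k)$,
\begin{equation}\label{plan:key}
(1-1/e)\,h(a,\mathbf{b}) \;\leq\; 1 - a\textstyle\prod_{k'=1}^k(1-b_{k'}) \;\leq\; h(a,\mathbf{b}),
\end{equation}
where $a\equiv \rho_{(i,s),p}\in[0,1]$, $b_{k'}\equiv \xi_{p_{k'}i}\in[0,1]$, and $h(a,\mathbf{b})\equiv \min\{1,\,1-a+\sum_{k'=1}^{k}b_{k'}\}$.

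The next step is a change of variables that reveals a classical submodular-function inequality underneath \eqref{plan:key}. Setting $y_0 := 1-a$ and $y_{k'}:=b_{k'}$ for $k'=1,\dots,k$, the middle quantity becomes $1-\prod_{j=0}^{k}(1-y_j)$ (the multilinear extension of the coverage-type function $f(S)=\mathds{1}[S\neq\emptyset]$ on $\{0,\dots,k\}$), and the right-hand bound becomes $\min\{1,\sum_{j=0}^{k}y_j\}$ (its concave closure). Thus I only need
\begin{equation}\label{plan:cover}
(1-1/e)\min\{1,\textstyle\sum_{j=0}^{k}y_j\}\;\leq\;1-\prod_{j=0}^{k}(1-y_j)\;\leq\;\min\{1,\textstyle\sum_{j=0}^{k}y_j\}
\end{equation}
for all $y_j\in[0,1]$.

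The upper bound in \eqref{plan:cover} is standard: $1-\prod_j(1-y_j)\leq 1$ is obvious, while the union-bound style estimate $1-\prod_j(1-y_j)\leq\sum_j y_j$ follows by induction on the number of factors (since $1-(1-\alpha)(1-\beta)=\alpha+\beta-\alpha\beta\leq \alpha+\beta$).

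For the lower bound in \eqref{plan:cover}, which is the heart of the argument, I will invoke the pointwise inequality $1-x\leq e^{-x}$ to get $\prod_j(1-y_j)\leq e^{-s}$ where $s:=\sum_j y_j$. Hence it suffices to show $1-e^{-s}\geq(1-1/e)\min\{1,s\}$ for all $s\geq 0$. For $s\geq 1$, this is immediate since $1-e^{-s}\geq 1-1/e$. For $s\in[0,1]$, the function $g(s):=1-e^{-s}-(1-1/e)s$ satisfies $g(0)=g(1)=0$ and $g''(s)=-e^{-s}<0$, so $g$ is strictly concave on $[0,1]$ and therefore nonnegative there. Combining this with \eqref{plan:cover} and re-weighting by $\lambda_{(i,s)}w_{p_{k+1}p_k}$ and summing over triples yields the claim.

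I do not anticipate a genuine technical obstacle here: once the correct term-by-term reduction and change of variables are identified, everything reduces to the well-known ``multilinear extension vs.\ concave closure'' estimate for coverage functions. The only real care is in writing out the double role of $\rho$ (contributing via $1-a$) alongside the caching marginals $\xi$ (contributing via $b_{k'}$), which is what makes the relaxation $L_{\SR}$ pleasantly concave in both routing and caching variables simultaneously.
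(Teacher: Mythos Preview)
Your proposal is correct and follows essentially the same route as the paper. The paper's proof applies the Goemans--Williamson inequality $(1-1/e)\min\{1,\sum_j y_j\}\le 1-\prod_j(1-y_j)\le\min\{1,\sum_j y_j\}$ term-by-term with the same substitution $y_0=1-\rho_{(i,s),p}$, $y_{k'}=\xi_{p_{k'}i}$; you carry out exactly this reduction and, instead of citing the inequality, supply a self-contained proof via $1-x\le e^{-x}$ and concavity.
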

\begin{proof}
This follows from the Goemans-Williamson inequality \citep{goemans1994new,ioannidis2016adaptive}: for any sequence of $y_i\in [0,1]$, $i\in\{1,\ldots,n\}$,
$$\textstyle(1-{1}/{e})\min\{1,\sum_i^n y_i \} \leq 1- \prod_{i=1}^n(1-y_i)\leq \min\{1,\sum_{i=1}^ny_i\}.$$
The lemma therefore follows by applying the inequality to every term in the summation making up $F_\SR$, to all variables  $\xi_{vi}$, $v\in V$, $i\in \catalog$, and $1-\rho_{(i,s),p}$, $(i,s)\in \requests$, $p\in \pathset_{(i,s)}$.  
\end{proof} 

\begin{algorithm}[t]
\begin{small}
  \caption{\textsc{Offline Algorithm}}\label{alg:sroffline}
  {\fontsize{9}{9}\selectfont
  \begin{algorithmic}[1]
   \STATE Find $(\rho^*,\Xi^*)\in\argmax_{(\rho,\Xi)\in \conv(\feasibledomain_\SR)}L_\SR(\rho,\Xi)$
   \STATE Fix $\rho^*$, and round $\Xi^*$  to obtain integral, feasible $X'$ s.t.~$F_\SR(\rho^*,X')\geq F_\SR(\rho^*,\Xi^*)$
   \STATE Fix $X'$, and round $\rho^*$ to obtain integral, feasible $r'$ s.t.~.~$F_\SR(r',X')\geq F_\SR(\rho^*,X')$

   \RETURN $(r',X')$
  \end{algorithmic}
}
\end{small}
\end{algorithm}

Constructing a constant-approximation algorithm for \CGS amounts to the following steps. \emph{First,} obtain
\begin{align}(\rho^*,\Xi^*)\in\argmax_{(\rho,\Xi)\in \conv(\feasibledomain_\SR) } L_{\SR} (\rho,\Xi).\label{convsolve} \end{align}
 As $L_{\SR}$ is a concave function and  $\conv(\feasibledomain_\SR)$ is convex, the above maximization is a convex optimization problem. In  fact, it  can be  reduced to a linear program, so it can be solved in  polynomial time \citep{papadimitriou1982combinatorial}: for completeness we outline  this reduction in Appendix~\ref{app:linear}. \emph{Second,} round the (possibly fractional) solution $(\rho^*,\Xi^*)\in \conv(\feasibledomain_\SR)$  to an integral solution $(r,X)\in \feasibledomain_\SR$ such that $F_\SR(r,X)\geq F_\SR(\rho^*,\Xi^*)$. This rounding is deterministic and takes place in polynomial time. 
 
The above steps are summarized in Algorithm~\ref{alg:sroffline}.
The following two lemmas hold.  First, a feasible fractional solution can be converted--in polynomial time--to a feasible solution in which only $\rho$ is fractional, while increasing $F_\SR$:
\begin{lemma}[\citep{shanmugam2013femtocaching}]\label{roundcaching}
 Given any $(\rho,\Xi)\in \conv(\feasibledomain_\SR)$, an integral $X$ such that $(\rho,X)\in \conv(\feasibledomain_\SR)$ and $F_\SR(\rho,X)\geq F_\SR(\rho,\Xi)$ can be constructed  in $O(|V|^2|\catalog|P)$ time.  \end{lemma}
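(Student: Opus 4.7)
My plan is to establish this via a pipage rounding argument applied node-by-node, exploiting the fact that $F_{\SR}(\rho,\cdot)$ restricted to two coordinates at the same node is a linear (hence, trivially concave \emph{and} convex) function, so one endpoint of any such line segment preserves or improves the objective.

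First I would fix an arbitrary node $v$ and two items $i\neq j$ with fractional values $\xi_{vi},\xi_{vj}\in(0,1)$. Define the perturbed strategy $\Xi(t)$ obtained from $\Xi$ by replacing $\xi_{vi}\mapsto \xi_{vi}+t$ and $\xi_{vj}\mapsto \xi_{vj}-t$, with all other coordinates unchanged. The key observation is that $F_{\SR}(\rho,\Xi(t))$ is a linear function of $t$ on the interval where $\Xi(t)$ remains in $[0,1]^{|V|\times|\catalog|}$. Indeed, inspecting \eqref{sourcecost}, for fixed $\rho$ each contributing monomial has the form $\rho_{(i',s),p}\cdot w_{p_{k+1}p_k}\prod_{k'=1}^k(1-x_{p_{k'}i'})$; since paths are simple, each monomial contains at most one factor $(1-x_{v i'})$, and this factor depends on $t$ only if $i'\in\{i,j\}$. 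Hence each monomial is affine in $t$, and so is their sum. The capacity constraint $\sum_{i}\xi_{vi}\leq c_v$ is preserved exactly by construction.

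Next I would use this to design the rounding: choose $t$ as the largest admissible step in whichever of the two directions $\{+,-\}$ does not decrease $F_{\SR}(\rho,\Xi(t))$ (if both directions keep it constant, pick either); since $F_\SR$ is linear in $t$ and the feasible interval is bounded, this step will push at least one of $\xi_{vi},\xi_{vj}$ to $\{0,1\}$. At most $|\catalog|-1$ such elementary rounding steps suffice to integralize all coordinates at node $v$, after which we move on to the next node. Repeating over all $v\in V$ yields an integral $X$; feasibility of $(\rho,X)\in\conv(\feasibledomain_\SR)$ follows because each step preserves both the box constraints and the capacity constraint at $v$, and no step touches coordinates at other nodes or the routing variables $\rho$.

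The running-time accounting is then routine: for each of the $|V|$ nodes we perform $O(|\catalog|)$ elementary steps, and each step requires evaluating (or rather, comparing) the value of $F_{\SR}(\rho,\cdot)$ at two feasible endpoints. Using \eqref{sourcecost}, a single evaluation of $F_{\SR}(\rho,\cdot)$ costs $O(\sum_{(i,s)\in\requests}\sum_{p\in\pathset_{(i,s)}}|p|)=O(P\cdot|V|)$ arithmetic operations in the worst case (since simple paths have length at most $|V|$). Multiplying gives the claimed $O(|V|^2|\catalog|P)$ bound.

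The only subtle point I foresee is justifying linearity in $t$ cleanly: one must verify that for a single path $p$ passing through $v$, the factor $(1-x_{vi})(1-x_{vj})$ never appears jointly in any monomial (it does not, because each monomial of \eqref{sourcecost} is indexed by a single item $i'\in\catalog$, so only $(1-x_{vi'})$ enters for that specific item). Once this is spelled out, the rest of the argument is mechanical, and no convexity/concavity of $F_\SR$ over the full domain is required — only linearity along the specific two-coordinate directions used by the rounding.
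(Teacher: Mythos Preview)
Your proposal is correct and follows essentially the same pipage-rounding argument as the paper: both exploit that $F_{\SR}(\rho,\cdot)$ is affine when restricted to any two caching coordinates $\xi_{vi},\xi_{vj}$ at the same node (because paths are simple and each monomial involves a single item), so mass can be transferred to integralize one coordinate without decreasing the objective, and iterating over at most $O(|V||\catalog|)$ such steps, each costing $O(|V|P)$ to evaluate, yields the stated bound. The only minor slip is writing the capacity constraint as $\leq c_v$ rather than $=c_v$ (cf.~\eqref{xcapacity}); the equality, together with $c_v\in\naturals$, is precisely what guarantees that fractional coordinates at a node always come in pairs, so your pairing step is always well-defined.
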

\begin{proof}This is proved in \citep{shanmugam2013femtocaching} for fixed routing strategies; for completeness, we repeat the proof here. Given a fractional solution $(\rho,\Xi)\in \feasibledomain_\SR$, there must exist a $v\in V$  that contains two fractional values $\xi_{vi},\xi_{vi'}$; this is because, as $c_v\in \naturals$ for all $v\in V$, the capacity constraints in $\feasibledomain_\SR$ imply that fractional elements of a row in matrix $\Xi$ must come in pairs. Observe that, restricted to these two variables, function $F_\SR$ is an affine function of  $\xi_{vi},\xi_{vi'}$. As such, it is maximized at the extrema of the polytope in $\reals^2$ implied by the capacity and $[0,1]$ constraints involving variables  $\xi_{vi},\xi_{vi'}$. As a result, there is a way to transfer equal mass from one of the two variables to the other so that (a) one of them becomes integral (either 0 or 1), (b) the resulting $\Xi'$ remains feasible,  and (c) $F_\SR$ does not decrease.\footnote{This property is called $\epsilon$-convexity by Ageev and Sviridenko \citep{ageev2004pipage}.} Performing this transfer of mass reduces the number of fractional variables in $\Xi$ by one, while maintaining feasibility and, crucially, either increasing $F_\SR$ or keeping it constant. This rounding can be repeated so long as $\Xi$ remains fractional: this eliminates all fractional variables in at most $O(|V|\catalog)$ steps. Each step requires at most four evaluations of $F_{\SR}$, which can be done in $O(|V| P)$ time. Note that the pair of fractional variables selected each time is arbitrary: the order of elimination (i.e., the order with which pairs of fractional variables are rounded) leads to a different rounding, but all such roundings are (a) feasible and, (b) either increase $F_\SR$ or keep it constant.
\end{proof}
The routing strategy 
   $\rho$ can also be rounded in polynomial time, while keeping the caching strategy $X$ fixed:

\begin{lemma}\label{roundrouting}
\sloppy Given any $(\rho,\Xi)\in \conv(\feasibledomain_\SR)$, an integral $r$ s.t.~$(r,\Xi)\in\conv(\feasibledomain_\SR)$ and $F_\SR(r,\Xi)\geq F_\SR(\rho,\Xi)$ can be constructed in $O(|V|P)$ time. Moreover, if $\Xi$ is integral, then the resulting $r$ is a route-to-nearest-replica (RNR) strategy.  
\end{lemma}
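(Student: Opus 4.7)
The main observation is that, in contrast to the rounding of $\Xi$ (Lemma~\ref{roundcaching}), which required a pipage-style argument because $F_\SR$ is multilinear in $X$, here the situation is much simpler: with $\Xi$ held fixed, the cost $C_\SR(r,\Xi)$ is an \emph{affine} (in fact, linear) function of $r$. Indeed, inspecting \eqref{sourcecost}--\eqref{routingcost}, each routing variable $r_{(i,s),p}$ appears exactly once, multiplied by the path-cost coefficient
\begin{equation*}
 c_{(i,s),p}(\Xi) \;=\; \lambda_{(i,s)}\sum_{k=1}^{|p|-1} w_{p_{k+1}p_k} \prod_{k'=1}^k (1-\xi_{p_{k'}i}),
\end{equation*}
which depends only on $\Xi$ and the path $p$, not on $r$. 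Hence $F_\SR(r,\Xi) = C_\SR^0 - \sum_{(i,s),p} c_{(i,s),p}(\Xi)\, r_{(i,s),p}$ is a linear function of $r$.

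The plan is then to exploit the product structure of the feasible set for $r$. The routing constraints \eqref{pselect} decompose across requests, so that, fixing $\Xi$, the projection of $\conv(\feasibledomain_\SR)$ onto the $r$-coordinates is a Cartesian product of simplices, one per request $(i,s)\in\requests$. A linear function on such a product attains its minimum at a vertex, i.e.~a point where each $r_{(i,s)}$ is a unit vector. Concretely, for each request $(i,s)\in \requests$, I would set
\begin{equation*}
 p^*_{(i,s)} \;\in\; \argmin_{p\in\pathset_{(i,s)}} c_{(i,s),p}(\Xi), \qquad r_{(i,s),p^*_{(i,s)}} = 1, \qquad r_{(i,s),p}=0 \text{ for } p\neq p^*_{(i,s)}.
\end{equation*}
This $r$ is integral and satisfies \eqref{pselect}, so $(r,\Xi)\in\conv(\feasibledomain_\SR)$. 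Because we are minimizing (equivalently, maximizing $F_\SR$) independently over each simplex and picking the best vertex, the resulting objective value is no worse than at the fractional point $\rho$, giving $F_\SR(r,\Xi)\ge F_\SR(\rho,\Xi)$.

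For the running time, computing each coefficient $c_{(i,s),p}(\Xi)$ takes $O(|p|)=O(|V|)$ arithmetic operations (the prefix products $\prod_{k'=1}^k(1-\xi_{p_{k'}i})$ can be built incrementally as $k$ grows); hence evaluating all paths for request $(i,s)$ costs $O(|V||\pathset_{(i,s)}|)$. Summing over requests yields $O(|V|\sum_{(i,s)}|\pathset_{(i,s)}|)=O(|V|P)$, plus $O(P)$ for the argmin comparisons and writing down $r$.

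Finally, for the RNR claim: when $\Xi$ is integral, writing $\xi_{p_{k'}i} = x_{p_{k'}i}\in\{0,1\}$, the coefficient $c_{(i,s),p}(\Xi)$ becomes (up to the factor $\lambda_{(i,s)}>0$) exactly the quantity $\sum_{k=1}^{|p|-1}w_{p_{k+1}p_k}\prod_{k'=1}^k(1-x_{p_{k'}i})$ that defines a route-to-nearest-replica strategy in Section~\ref{sec:greedyrouting}. Since $\lambda_{(i,s)}>0$ does not affect the argmin, the path $p^*_{(i,s)}$ selected above is precisely an RNR choice, so $r$ is an RNR strategy with respect to $X=\Xi$. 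The only subtle step is recognizing that the coefficient structure reduces to the RNR criterion verbatim; everything else is direct linear-programming reasoning on a product of simplices, so I expect no real obstacle in the proof.
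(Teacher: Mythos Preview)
Your proposal is correct and follows essentially the same approach as the paper: both exploit that $F_\SR$ is affine in $r$ for fixed $\Xi$, that the routing constraints decompose into a product of simplices across requests, and that the optimal integral $r$ is obtained by picking, for each $(i,s)$, the path minimizing the cost coefficient $\sum_{k}w_{p_{k+1}p_k}\prod_{k'}(1-\xi_{p_{k'}i})$, which coincides with the RNR rule when $\Xi$ is integral. The complexity argument and RNR identification are likewise the same.
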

\begin{proof}
 Given $(\rho,\Xi)\in \conv(\feasibledomain_\SR)$, notice that, for fixed $\Xi$, $F_{\SR}$ is an affine function of the routing strategy $\rho$. All coefficients involving variables $\rho_{(i,s),p}$, $p\in \pathset_{(i,s)}$, are non-negative, and the set of constraints on $\rho$ is separable across requests $(i,s)\in\requests$. Hence, given $\Xi$, maximizing $F_\SR$ w.r.t. $\rho$ can be done by selecting the path $p^*\in\pathset_{(i,s)}$ with the highest coefficient of $F_\SR$, for every $(i,s)\in \pathset$; this is precisely the lowest cost path, i.e., $p^*_{(i,s)}\in \pathset_{(i,s)}$ is such that
\begin{align}p^*_{(i,s)}=\argmin_{p\in \pathset_{(i,s)}} \sum_{k=1}^{|p|-1} w_{p_{k+1}p_k}\prod_{k'=1}^k (1-\xi_{p_{k'}i}).\label{selectpstar}\end{align}
Hence, given $\Xi$, setting $\rho_{(i,s),p^*}=1$, and   $\rho_{(i,s),p}=0$ for all remaining paths $p\in\pathset_{(i.s.)}$ s.t. $p\neq p^*$ can only increase $F_\SR$. Each $p^*$
 can be computed in $O(|\pathset_{(i,s)}||V|)$ time and there is most $O(\requests)$ such paths. This results in an integral, feasible strategy $r$,  and the resulting $F_\SR$ either increases or stays constant, i.e., $(r,\Xi)\in \conv(\feasibledomain_\SR)$ and  $F_\SR(r,\Xi)\geq F_\SR(\rho,\Xi)$. 
Finally, if $\Xi=X$ for some integral $X$, then the selection of each strategy $p^*$ through \eqref{selectpstar} yields precisely a route-to-nearest-replica routing for $(i,s)$. Note that, in contrast to rounding the caching strategy in the proof of Lemma~\ref{roundcaching}, the order with which routing strategies are rounded does affect the final integral strategy. \end{proof}

\fussy
 Putting everything together, having $(\rho^*,\Xi^*)$ a solution to \eqref{convsolve}, we can construct an integral solution $(r',X')$ by:
\begin{packedenumerate}
\item fixing the routing strategy to $\rho^*$, and rounding $\Xi^*$ to get an integral $X'$ as in Lemma~\ref{roundcaching}, and then
\item fixing the caching strategy to $X'$, and rounding $\rho^*$ to get an integral $r'$ as in Lemma~\ref{roundrouting}.
\end{packedenumerate}

To conclude the proof of Theorem~\ref{thm:offline}, note that the complexity statement is a consequence of Lemmas~\ref{roundcaching} and~\ref{roundrouting}, and the fact that \eqref{convsolve} reduces to a linear program.
 By construction, the output of the algorithm $(r',X')$ is such that:
  $$F_{\SR}(r',X')\geq  F_\SR(\rho^*,\Xi^*). $$ 
Let $(r^*,X^*) \in  \argmax_{(r,X)\in \feasibledomain_\SR} F_{\SR}(r,X)$ be an optimal solution to \CGS. Then,
by Lemma \ref{goodapprox} and the optimality of $(\rho^*,X^*)$ in $\conv(\feasibledomain_\SR)$:
\begin{align*}F_\SR(r^*,X^*)\leq L_\SR(r^*,X^*) \leq L_\SR(\rho^*,\Xi^*)\leq \frac{e}{e-1} F_\SR(\rho^*,\Xi^*).\end{align*}
Together, these imply that the constructed $(r',X')$ is such that	
$$F_{\SR}(r',X') \geq (1-1/e) F_\SR(r^*,X^*),$$
and the theorem follows.
\qed

\subsection{Reducing \eqref{convsolve} to A Linear Program}\label{app:linear}
We show here how to convert problem \eqref{convsolve} into a linear program. This can be done by introducing appropriate auxiliary variables. In particular, we introduce an auxiliary variable 
$$ t_{(i,s),p,k}, (i,s)\in\requests,p\in\pathset_{(i,s)}, k=1,\ldots,|p|-1 $$ 
for every term in the sum defining function $L_\SR$.  Then, the problem \eqref{convsolve} is equivalent to:
\begin{align*}
\text{Maximize: }&   \sum_{(i,s) \in \requests }\lambda_{(i,s)}\sum_{p\in \pathset_{(i,s)}}  \sum_{k=1}^{|p|-1} w_{p_{k+1}p_k} \cdot t_{(i,s),p,k} \\
\displaybreak[0]
\text{subj.~to: }& 
t_{(i,s),p,k}\leq 1-\rho_{(i,s),p}+\sum_{k'=1}^k \xi_{p_{k'}i}, \text{\quad and} \\
&t_{(i,s),p,k}\leq 1  \text{\quad for all } (i,s)\in\requests,p\in\pathset_{(i,s)},\\
&\qquad\qquad\qquad\qquad k=1,\ldots,|p|-1\\
\displaybreak[0]
& \textstyle\sum_{p\in \pathset_{(i,s)}} \rho_{(i,s),p} =1,\text{\quad for all }(i,s)\in\requests, \\
& \textstyle\sum_{i\in\catalog} \xi_{vi} \leq \capacity_v,\text{\quad for all }v\in V, \\
&\xi_{vi}\in [0,1], \text{ for all }v\in V, i\in\catalog, \text{\quad and} \\
&\rho_{(i,s),p}\in [0,1], \text{\quad for all }p\in\pathset_{(i,s)},(i,s)\in \requests.  
\end{align*}
There are a total of $O(P|V|)$ auxiliary variables, so this is linear program with $O(P|V|)+O(|V||\catalog|)$ variables and $O(P|V|) +O(|\requests|)+O(|V|)+O(|V||\catalog|)+O(P)$ constraints, so it can be solved in time that is in polynomial $P$, $|V|$, and $|\catalog|$ \citep{papadimitriou1982combinatorial}.

\subsection{Proof of Corollary \ref{cor:rnr}}\label{proofofcorrnr}
Let $(r^*,X^*)$ be an  optimal solution to \CGS in which $r^*$ is not a RNR strategy. Then, by Lemma~\ref{roundrouting}, we can construct an $r'$ that is an RNR strategy w.r.t.~$X$ such that (a) $F_\SR(r',X^*)\geq F_\SR(r^*,X^*)$ and (b) $(r',X^*)\in \feasibledomain_\SR$. As $(r^*,X^*)$ is optimal, so is $(r',X^*)$. \qed

\subsection{Proof of Theorem \ref{cor:equiv}}\label{proofofequiv}
Clearly, $$\min_{(r,X)\in \feasibledomain_\SR}C_\SR(r,X) \geq \min_{(\rho,\Xi)\in \conv(\feasibledomain_\SR) }C_\SR(\rho, \Xi)$$ 
as $\feasibledomain_\SR \subset \conv(\feasibledomain_\SR).$ Let $$(\rho^*,\Xi^*) \in \argmin_{(\rho,\Xi)\in \conv(\feasibledomain_\SR) } C_\SR(\rho, \Xi)=  \argmax_{(\rho,\Xi)\in \conv(\feasibledomain_\SR) } F_\SR(\rho, \Xi). $$
Then, Lemmas~\ref{roundcaching} and \ref{roundrouting} imply that we can construct an integral $(r'',X'')\in \feasibledomain_\SR $ s.t.~
\begin{align} F_\SR(r'',X'') \geq   F_\SR(\rho^*,\Xi^*),\label{roundedbetter}\end{align}
which implies that
\begin{align*}
 \min_{(r,X)\in \feasibledomain_\SR}C_\SR(r,X)& \leq C_\SR(r'',X'') \\ &\stackrel{\eqref{roundedbetter}}{\leq}   C_\SR(\rho^*,\Xi^*)\\
&=\min_{(\rho,\Xi)\in \conv(\feasibledomain_\SR) } C_\SR(\rho, \Xi),\end{align*}
and the first equality follows.

To show the second equality, note that for $$\mu^*\in \argmin_{\mu:\supp(\mu)=\feasibledomain_\SR} \expect_\mu[C_\SR(r,X)],$$ and $$(r^*,X^*)=\argmin _{(r,X)\in \feasibledomain_\SR}C_\SR(r,X),$$ we have that \begin{align*}\expect_{\mu^*}[C_\SR(r,X)] &= \min_{\mu:\supp(\mu)=\feasibledomain_\SR} \expect_{\mu}[C_\SR(r,X)]\\ &\leq \min_{(r,X)\in \feasibledomain_\SR}C_\SR(r,X) =C_\SR(r^*,X^*), \end{align*}
as deterministic strategies are a subset of randomized strategies. On the other hand, 
\begin{align*} \expect_{\mu^*}[C_\SR(r,X)]
&=\sum_{(r,X)\in \feasibledomain_\SR} \mu\left((r,X)\right)C_\SR(r,X)\\ 
& \geq C_\SR(r^*,X^*) \sum_{(r,X)\in \feasibledomain_\SR} \mu\left((r,X)\right)=C_\SR(r^*,X^*)\end{align*}
and the second equality also follows.
\qed

\section{ONLINE SOURCE ROUTING}

Before proving Theorem~\ref{maincor}, we  formally describe in Sections~\ref{sec:distributedsub} to \ref{sec:randomizedrounding} the constituent subgradient estimation, state adaptation,  smoothening, and random sampling steps in  detail. A modification of   
 Algorithm~\ref{alg:ascent},
 leading to reduced control traffic, at an increase of the corresponding variance of the subgradient estimate, can be found in Appendix \ref{controlreduction}.

\begin{figure*}
\includegraphics[width=0.32\textwidth]{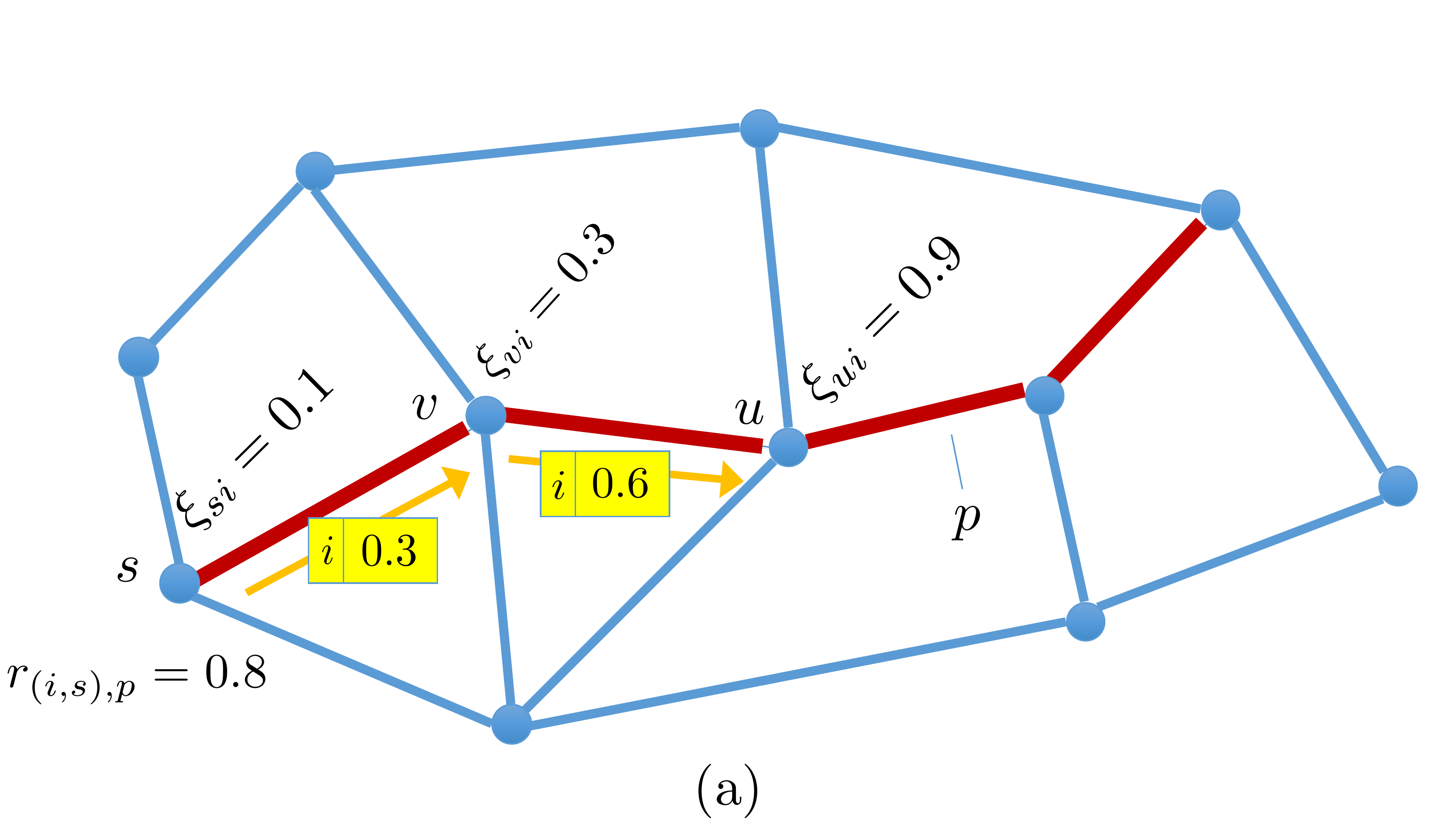}
\includegraphics[width=0.32\textwidth]{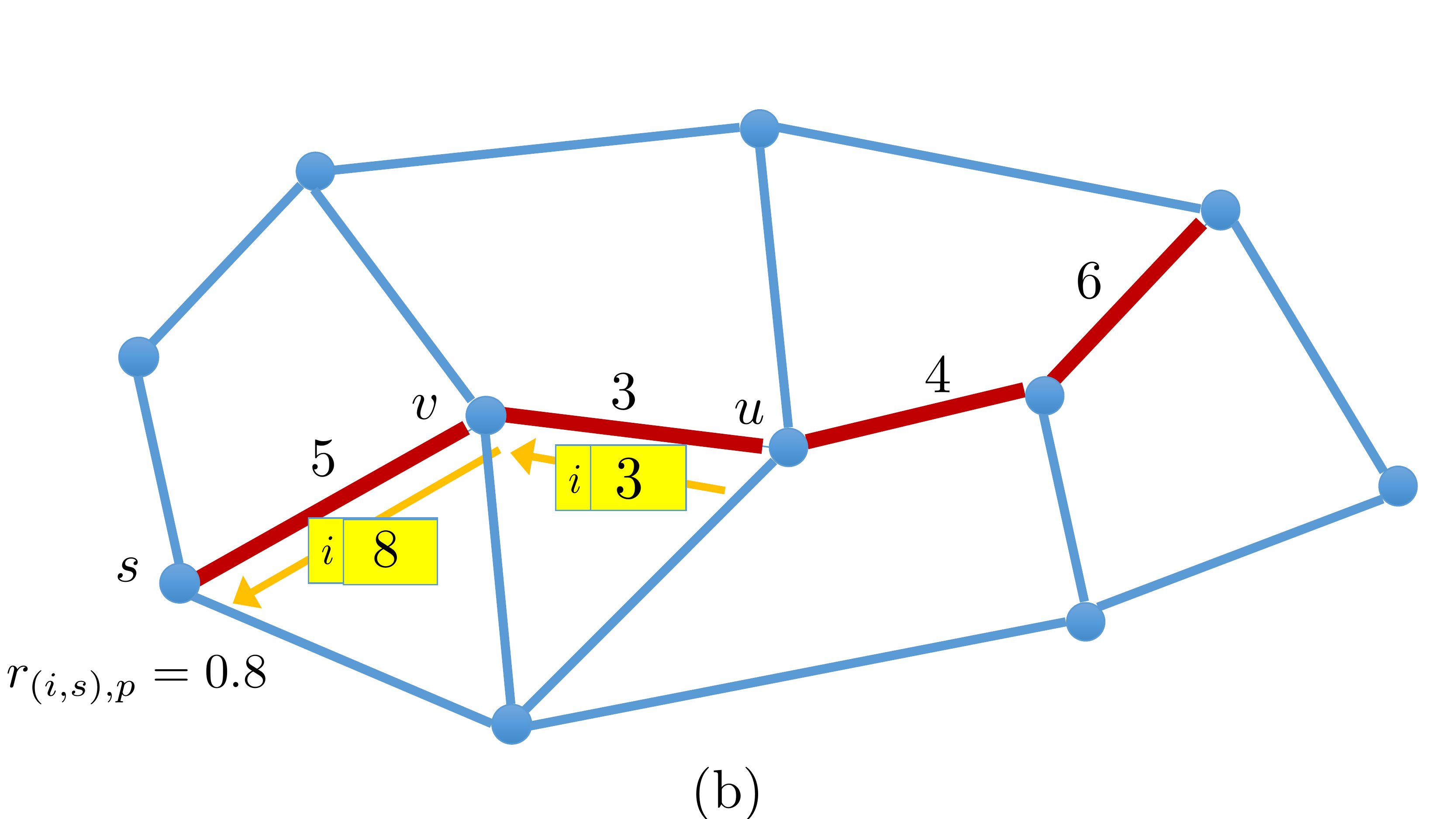}
\includegraphics[width=0.32\textwidth]{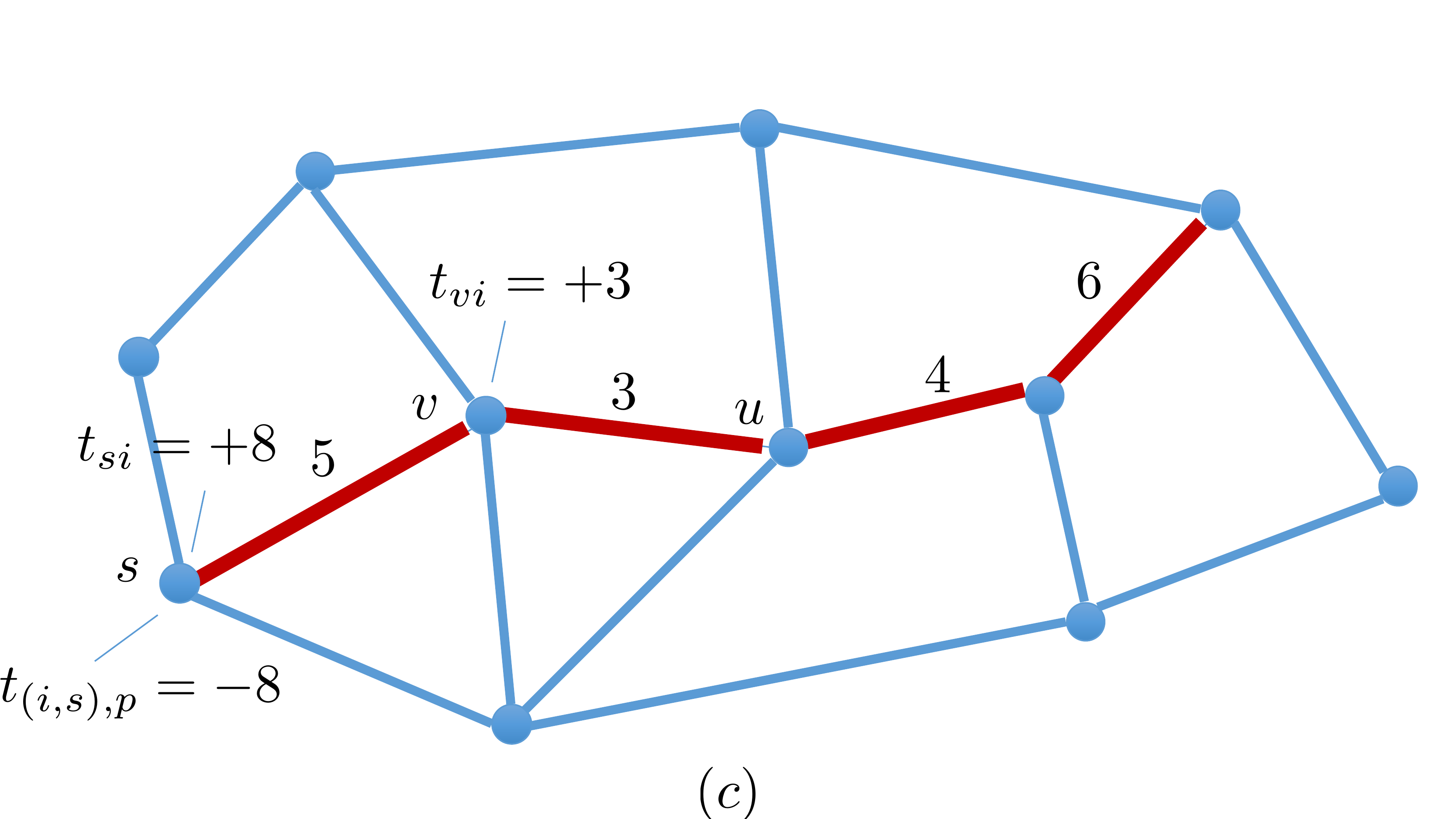}\\
\vspace*{-1em}
\caption{Example of control message trajectory over a single path. When node $s$ generates a request $(i,s)\in \requests$, it also generates a control message for every path $p\in \pathset_{(i,s)}$.  Path $p$ is indicated by red thick edges. In subfigure (a), the control message  counter is initialized to $(1-r_{(i,s),p})+\xi_{si} = 1-0.8+0.1 = 0.3 $ by $s$. It is is forwarded upstream on $p$ to node $v$, that adds its own caching state variable w.r.t.~item $i$, namely $\xi_{ui}=0.3$, to the counter. As the sum is below 1.0, the message is forwarded upstream, until it reaches node $u$ with $\xi_{ui}=0.9$. As the total sum is now $1.5>1.0$, the propagation over $p$ terminates, and a response is sent downstream by $u$. The response is shown in subfigure (b), accumulating the weights of edges it traverses. Nodes in its path, namely $v$ and $s$, sniff this information, as shown in subfigure (c), and collect measurements $t_{vi},t_{si}$ to be added to the averages estimating $\partial_{\xi_{ui}} L_\SR$ and $\partial_{\xi_{si}} L_\SR $, respectively. The source $s$ also collects measurement $t_{(i,s),p}=-t_{si}$, to be used in the average estimating  ${\partial_{\rho_{(i,s),p}} L} $.}\label{fig:controlexample}
\end{figure*}

\subsection{Subgradient Estimation}\label{sec:distributedsub} We now describe how to estimate the subgradients of $L_\SR$ through measurements collected during a timeslot. These estimates are computed in a distributed fashion at each node, using only information available from control messages traversing the node.
 Let $(\rho^{(k)},\Xi^{(k)})\in\conv(\feasibledomain_\SR)$ be the pair of global states at the $k$-th measurement period.
 At the conclusion of a timeslot, each $v\in V$ produces a random vector $z_{v}=z_{v}(\rho^{(k)},\Xi^{(k)})\in \reals_+^{|\catalog|}$ that is an unbiased estimator of a subgradient of $L_{\SR}$ w.r.t.~to $\xi_v$.  Similarly, for every $(i,s)\in \requests$, source node $s$ produces a random vector $q_{(i,s)} =q_{(i,s)}(\rho^{(k)},\Xi^{(k)}) \in \reals^{|\pathset_{(i,s)}|}$ that is an unbiased estimator of a subgradient of $L_{\SR}$ with respect to (w.r.t.)~$\rho_{(i,s)}$.  Formally, 
 \begin{align}\label{estimatezprop}
\expect[z_{v}(\rho^{(k)},\Xi^{(k)})] &\in \partial_{\xi_{v}}L_\SR(\rho^{(k)},\Xi^{(k)}), \\
  \label{estimateqprop}
\expect[q_{(i,s)}(\rho^{(k)},\Xi^{(k)})]& \in \partial_{\rho_{(i,s)}}L_\SR(\rho^{(k)},\Xi^{(k)}),\end{align}
where $\partial_{\xi_{v}} L_\SR(\rho,\Xi)$, $\partial_{\rho_{(i,s)}}L_\SR$ are the sets of subgradients of $L_\SR$ w.r.t.~$\xi_{v}$ and $\rho_{(i,s)}$, respectively.
 To produce these estimates, nodes measure the upstream cost incurred at paths passing through it using control messages, exchanged among nodes as follows:\begin{packedenumerate}
\item  Every time a node $s$ generates a new request $(i,s)$, it also generates additional control messages, one per path $p\in \pathset_{(i,s)}$. The message corresponding to path $p$ is to be propagated over $p$, and contains a counter  initialized to  $1-\rho_{(i,s),p}+\xi_{si}$. \item When following path $p$, the message is forwarded until a node $u\in p$ s.t.~
$ 1-\rho_{(i,s),p} +\textstyle\sum_{\ell=1}^{k_{p}(u)} \xi_{p_{\ell}i}> 1 $ is found, or the end of the path is reached. To keep track of this, every $v\in p$ traversed adds its state variable  $\xi_{vi}$ to the message counter.  \item Upon reaching either such a  node $u$ or the end of the path, the control message is sent down in the reverse direction. Initializing its counter to zero, every time it traverses an edge in this reverse direction, it adds the weight of this edge into a weight counter. 
\item Every node on the reverse path ``sniffs'' the weight counter  of the control message, learning the sum of weights of all edges further upstream towards $u$; that is, recalling that $k_p(v)$ is the position of visited node $v\in p$, $v$ learns the quantity: \begin{align}\!\! t_{vi} = \textstyle\sum_{k'=k_v(p)}^{|p|-1} w_{p_{k'+1}p_{k'}} \id_{1-\rho_{(i,s),p}+\sum_{\ell=1}^{k'}  \xi_{p_{\ell}i}\leq 1}.\label{tvi}\end{align}
\item In addition, the source $s$ of the request, upon receiving the message sent over the reverse path, sniffs the quantity 
\begin{align}\!\!\!\!t_{(i,s),p}\! =\! -\!\textstyle\sum_{k'=1}^{|p|-1} w_{p_{k'+1}p_{k'}} \id_{1-\rho_{(i,s),p}+\!\sum_{\ell=1}^{k'} \! \xi_{p_{\ell}i}\leq 1}.\label{tisp}\end{align}
This is  the (negative of) the sum of weights accumulated by the control message returning to the source $s$.
  \end{packedenumerate}
An example illustrating the above five steps can be found in Figure~\ref{fig:controlexample}. Let $\mathcal{T}_{vi}$ be the set of quantities collected in this way at node $v$ regarding item $i\in \mathcal{C}$ during a measurement period of duration $T$. At the end of the timeslot, each node $v\in V$ produces $z_v$ as follows: \begin{align}z_{vi}= \textstyle\sum_{t\in \mathcal{T}_{vi} }t/T,\quad i\in\catalog.\label{zestimation}\end{align}
Similarly, let $\mathcal{T}_{(i,s),p}$ be the set of quantities collected in this way at source node $s$ regarding path $p\in \pathset_{(i,s)}$ during a measurement period of duration $T$. At the end of the measurement period, $s$ produces the  estimate $q_{(i,s)}$: \begin{align}q_{(i,s),p}= \textstyle\sum_{t\in \mathcal{T}_{(i,s),p} }t/T,\quad i\in\catalog.\label{qestimation}\end{align}
We show that the resulting $z_v$, $q_{(i,s)}$ satisfy \eqref{estimatezprop} and \eqref{estimateqprop}, respectively, in  Lemma~\ref{subgradientlemma}. 
In the above construction, control messages are sent over all paths in $\pathset_{(i,s)}$. It is important to note however that when sent over paths $p$ such that $\rho_{(i,s),p}\approx 0$ control messages \emph{do not travel far}: the termination condition (the sum exceeding 1) is satisfied early on. Messages sent over unlikely paths are thus pruned early, and ``deep'' propagation only happens in very likely paths.
Nevertheless, to reduce control traffic, in 
  Section~\ref{controlreduction} we   modify the algorithm to propagate only \emph{a single control message over a single path}.
\subsection{State Adaptation} 
 Having estimates $Z = [z_v]_{v\in V},q = [q_{(i,s)}]_{(i,s)\in\requests}$,  the global state  is adapted as follows: at the conclusion of the $k$-th period, the new state $(\rho^{(k+1)},\Xi^{(k+1)})$  is computed as:
 \begin{align}\mathcal{P}_{\!\conv(\feasibledomain_\SR)} \big(\rho^{(k)}\!\! +\! \gamma_k q(\rho^{(k)},\Xi^{(k)})\, ,\, \Xi^{(k)}\!\! +\!\gamma_k Z(\rho^{(k)},\Xi^{(k)})\big),\label{adapt}\end{align}
where $\gamma_k={1}/{\sqrt{k}}$ is a gain factor and $\mathcal{P}_{\conv(\feasibledomain_\SR)}$ is the orthogonal projection onto the convex set $\conv(\feasibledomain_\SR)$. Note that this additive adaptation and corresponding projection is separable across nodes and can be performed in a distributed fashion: each node $v\in V$ adapts its own relaxed caching strategy, each source $s$ adapts its routing strategy, and all nodes project these strategies to their respective local constraints implied by  \eqref{pselect},\eqref{xcapacity}, and the $[0,1]$ constraints. Note that these involve projections onto the rescaled simplex, for which well-known linear algorithms exist \citep{michelot1986finite}.
\subsection{State Smoothening.}
Upon performing the state adaptation \eqref{adapt}, each node $v\in V$ and each source $s$, for $(i,s)\in\requests$, compute the following
``sliding averages'' of current and past states:
 \begin{align}\label{slidexi}\bar{\xi}_v^{(k)} = \textstyle \sum_{\ell = \lfloor\frac{k}{2} \rfloor}^{k} \gamma_\ell \xi_v^{(\ell)} /\big[\sum_{\ell=\lfloor \frac{k}{2}\rfloor}^{k}\gamma_{\ell}\big]   .\\
 \label{sliderho}\bar{\rho}_s^{(k)} = \textstyle \sum_{\ell = \lfloor\frac{k}{2} \rfloor}^{k} \gamma_\ell \rho_v^{(\ell)} /\big[\sum_{\ell=\lfloor \frac{k}{2}\rfloor}^{k}\gamma_{\ell}\big]   .\end{align}
 This    is necessary  because of the non-differentiability of $L_\SR$ \citep{nemirovski2005efficient}. Note that    $(\bar{\rho}^{(k)},\bar{\Xi}^{(k)}) \in \conv( \feasibledomain_\SR) $, as a convex combination of elements of $\conv(\feasibledomain_\SR)$.

\subsection{Randomized Caching and Routing.} \label{sec:randomizedrounding}
The resulting $(\bar{\rho}^{(k)},\bar{\Xi}^{(k)})$  determine the randomized routing and caching strategies at each node during a timeslot.  First, given $\bar{\rho}^{(k)}$,  each time a request $(i,s)$ is generated, path $p\in \pathset_{(i,s)}$ is used to route the request with probability $\bar{\rho}_{(i,s),p}$, independently of past routing and caching decisions.  Second, given  $\bar{\xi}_v^{(k)}$, each node $v\in V$ reshuffles its contents, placing items in its cache independently of all other nodes: that is, node $v$ selects a random strategy  $x_v^{(k)}\in \{0,1\}^{|\catalog|}$ sampled independently of any other node in $V$.
 The  random strategy $x_v^{(k)}$ satisfies the following two properties:
\begin{packedenumerate}
\item It is a \emph{feasible} strategy, i.e., satisfies the capacity and integrality constraints \eqref{xcapacity} and \eqref{xintegrality}. 
\item It is \emph{consistent} with the marginals $\bar{\xi}_v^{(k)}$, i.e., for all $i\in\catalog,$
$\expect[x_{vi}^{(k)}\mid \bar{\xi}_{v}^{(k)}]=\bar{\xi}_{vi}^{(k)}.$ 
\end{packedenumerate}
We note that there can be many random caching strategies whose distributions  satisfy the above two properties. An efficient algorithm generating such a distribution is provided in \citep{wireless-cache} and, independently, in \citep{ioannidis2016adaptive}.  Given $\bar{\xi}_v^{(k)}$, a distribution over (deterministic) caching strategies can be computed
 in $O(c_v|\catalog|\log|\catalog|)$ time, and has $O(|C|)$ support; for the sake of completeness, we briefly outline this below. We follow the high-level description of \citep{wireless-cache} here; a detailed, formal description of the algorithm, a proof of its correctness, and a computational complexity analysis, can be found in \citep{ioannidis2016adaptive}.

\begin{figure}[t]
\hspace*{\stretch{1}}\includegraphics[width=0.6\columnwidth]{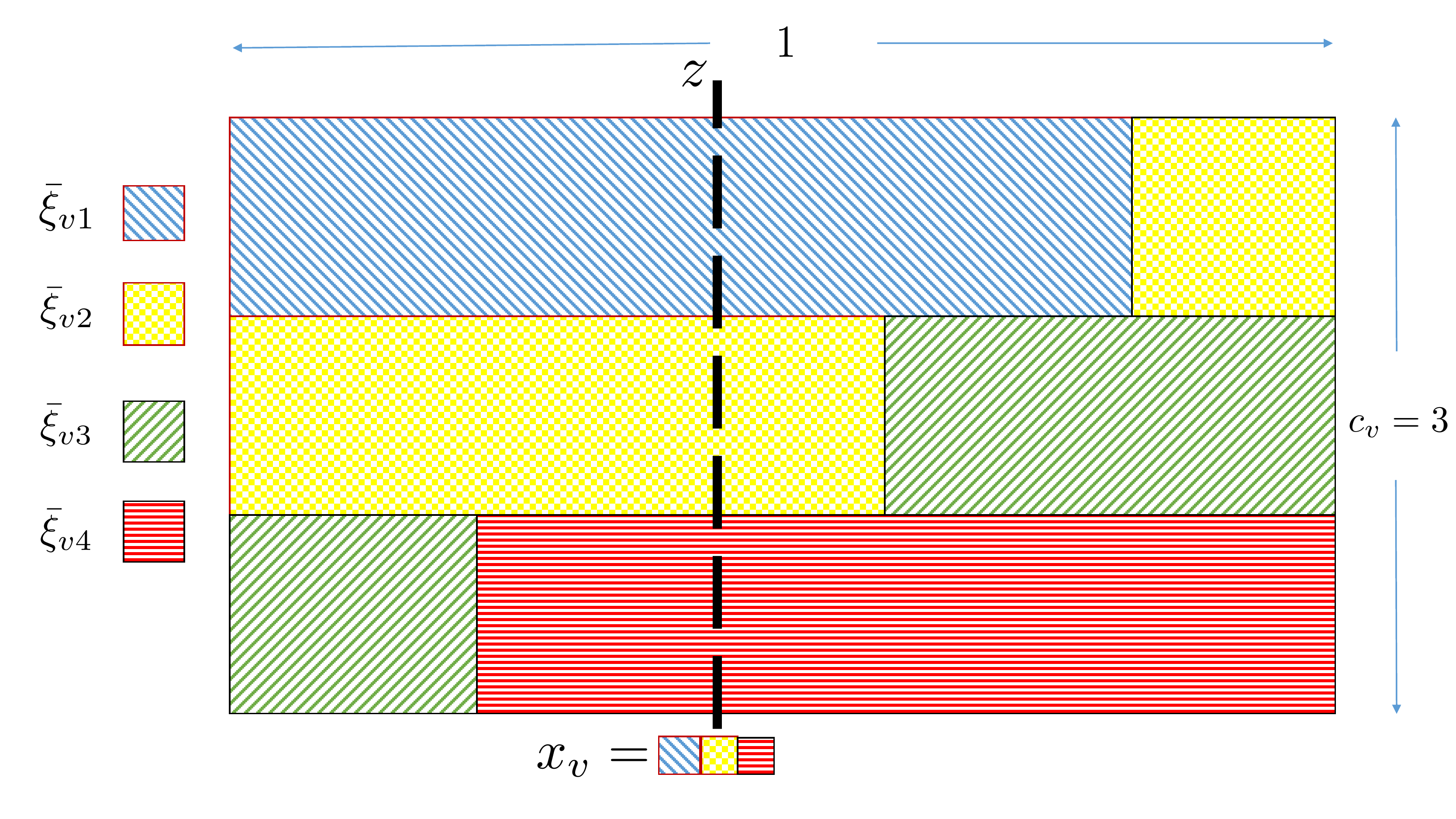}\hspace*{\stretch{1}}
\caption{Construction of a feasible randomized caching strategy $x_v$ that satisfies marginals $\prob[x_{vi}=1]=\bar{\xi}_{vi}$, where $\sum_{i\in \catalog}\bar{\xi}_{vi}=c_v$. In this example,  $c_v=3$, and $\catalog=\{1,2,3,4\}$. Given $\bar{\xi}_v$, 4 rectangles of height 1 each are constructed, such that the $i$-th rectangle has length $\bar{\xi}_{vi}\in[0,1]$, and the total length is $c_v$. After placing the 4 rectangles in a $3\times 1$ box, cutting the box at $z$ selected u.a.r. from $[0,1]$, and constructing a triplet of items from the rectangles it intersects, leads to an integral caching strategy with the desired marginals.}\label{fig:allocation}
 \end{figure}

The input to the algorithm are the marginal probabilities $\bar{\xi}_{vi}\in [0,1]$, $i\in \catalog$ s.t. $\sum_{i\in\catalog}\bar{\xi}_{vi}=c_v,$
where $c_v\in \naturals$ is the capacity of cache $v$. To construct a randomized caching strategy with the desired marginal distribution, consider  a rectangle box  of area $c_v\times 1$, as illustrated in Figure~\ref{fig:allocation}. For each $i\in \catalog$, place a rectangle of length $\bar{\xi}_{vi}$ and height 1 inside the box, starting from the top left corner. If a rectangle does not fit in a row, cut it, and place the remainder in the row immediately below, starting again from the left. As $\sum_{i\in \catalog}\bar{\xi}_{vi}=c_v$, this space-filling method  tessellates the $c_v\times 1$ box. The randomized placement then is constructed as follows: select a value in $z\in[0,1]$ uniformly at random, and ``cut'' the box at position $z$. The value will intersect exactly $c_v$ distinct rectangles:  as $\bar{\xi}_{vi}\leq 1$, no rectangle ``overlaps' with itself. The algorithm then produces as output the caching strategy $x_{v}\in \{0,1\}^{|\catalog|}$ where:
$$x_{vi}=\begin{cases}1, &\text{if the line intersects rectangle }i,\\
0, &\text{o.w.}\end{cases}$$
As the line intersects $c_v$ distinct rectangles, $\sum_{i\in\catalog}x_{vi}=c_v$, so the caching strategy is indeed feasible. On the other hand, by construction, the probability that $x_{vi}=1$ is exactly equal to the length of the $i$-th rectangle, so the marginal probability that $i$ is placed in the cache is indeed:
$$\prob[x_{vi}=1]=\bar{\xi}_{vi}.$$
and the randomized cache strategy $x_v$ has the desired marginals.

\subsection{Proof of Lemma~\ref{convergencelemma}}\label{proofofconvergencelemma}
We first show that \eqref{zestimation} and \eqref{qestimation} are unbiased estimators of the subgradient: \begin{lemma}\label{subgradientlemma}
The vectors  $z_{v}$, $v\in V$, and $q_{(i,s)}$, $(i,s)\in \requests$ constructed through coordinates \eqref{zestimation} and \eqref{qestimation},  satisfy properties \eqref{estimatezprop} and \eqref{estimateqprop}, respectively. Moreover
$\expect[\|z_v\|_2^2]<C_1,$ and $\expect[\|q_{(i,s)}\|_2^2]<C_2$,
where $$C_1=W^2\bar{P}^2|V|^2|\catalog|(\Lambda^2+\frac{\Lambda}{T}),~C_2=W^2|V|^2P(\Lambda^2+\frac{\Lambda}{T}),$$ and
$$W=\max_{(i,j)\in E}w_{ij},~\bar{P}=\max_{(i,s)\in \requests} |\pathset_{(i,s)}|,~\text{and}~\Lambda =\sum_{(i,s)\in \requests}\lambda_{(i,s)}.$$
\end{lemma}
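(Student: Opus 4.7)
}
The plan is to establish the two claims separately: (i)~unbiasedness of the subgradient estimators, and (ii)~the second-moment bounds.

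\textbf{Step 1: Identify the subgradients.} I will first write out $\partial_{\xi_{vi}} L_\SR$ and $\partial_{\rho_{(i,s),p}} L_\SR$ by applying subgradient calculus to each of the $\min\{1,\,1-\rho_{(i,s),p}+\sum_{k'=1}^{k}\xi_{p_{k'}i}\}$ terms in $L_\SR$. Where the argument is strictly less than $1$, the term is differentiable, contributing $+1$ per $\xi_{p_{k'}i}$ with $k'\leq k$ and $-1$ to $\rho_{(i,s),p}$; where the argument equals $1$, any convex combination of $0$ and the differentiable value is a valid subgradient element, so in particular the choice with indicator $\id_{\text{arg.}\leq 1}$ is admissible. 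Concretely, a specific subgradient element of $\partial_{\xi_{vi}}L_\SR$ equals
\[
\sum_{s:(i,s)\in\requests}\lambda_{(i,s)}\!\!\!\sum_{p\in\pathset_{(i,s)}:\,v\in p}\sum_{k'=k_p(v)}^{|p|-1}\!w_{p_{k'+1}p_{k'}}\,\id_{1-\rho_{(i,s),p}+\sum_{\ell=1}^{k'}\xi_{p_\ell i}\leq 1},
\]
and analogously for $\partial_{\rho_{(i,s),p}} L_\SR$ with a leading minus sign and sum only over $k'=1,\dots,|p|-1$ (no restriction from $k_p(v)$).

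\textbf{Step 2: Show unbiasedness.} By construction of the control messages, each request arrival of type $(i,s)$ produces, for every path $p\in\pathset_{(i,s)}$ containing $v$, exactly one measurement $t_{vi}$ as in \eqref{tvi} (depending on the specific $(s,p)$). Thus $\mathcal{T}_{vi}$ is a shot-noise sum driven by the Poisson arrivals of types $\{(i,s):s\in V\}$, each of rate $\lambda_{(i,s)}$. Applying Campbell's theorem (or direct Poisson computation) to $z_{vi}=\frac{1}{T}\sum_{t\in\mathcal{T}_{vi}}t$ gives
\[
\expect[z_{vi}]=\frac{1}{T}\!\!\sum_{s:(i,s)\in\requests}\!\!(\lambda_{(i,s)}T)\!\!\!\sum_{p\in\pathset_{(i,s)}:\,v\in p}\!\!t_{vi}(s,p),
\]
which, after cancellation of $T$, coincides with the subgradient element identified in Step~1. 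The same argument applies to $q_{(i,s),p}$ using \eqref{tisp}, yielding \eqref{estimatezprop}–\eqref{estimateqprop}.

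\textbf{Step 3: Second moment bounds.} Each measurement satisfies $|t|\leq W|V|$, since it is a sum of at most $|V|-1$ edge weights each bounded by $W$. Hence $|z_{vi}|\leq (W|V|/T)|\mathcal{T}_{vi}|$ and $|q_{(i,s),p}|\leq (W|V|/T)|\mathcal{T}_{(i,s),p}|$. The cardinality $|\mathcal{T}_{vi}|$ is at most $\bar P$ times the total number of $i$-related arrivals in the slot, which is Poisson with mean $\Lambda_i T\leq \Lambda T$; similarly $|\mathcal{T}_{(i,s),p}|=N_{(i,s)}\sim\mathrm{Poisson}(\lambda_{(i,s)}T)$. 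Squaring, summing over coordinates, and using $\expect[N^2]=\mu^2+\mu$ for $N\sim\mathrm{Poisson}(\mu)$ together with $\sum_{(i,s)}\lambda_{(i,s)}^2\leq \Lambda^2$ yields the claimed bounds (after applying the standard inequality $\|z_v\|_2^2\leq|\catalog|\max_i z_{vi}^2$ to pull in the factor $|\catalog|$ in $C_1$).

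\textbf{Main obstacle.} The delicate point is Step~1: at the kinks of $L_\SR$, the subgradient is set-valued, and one must verify that the specific element implicitly picked out by the $\id_{\cdot\leq 1}$ convention in \eqref{tvi}–\eqref{tisp} lies in $\partial L_\SR$. Beyond that, the derivation is a direct application of Campbell's formula and Poisson moment identities.
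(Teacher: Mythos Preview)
Your proposal is correct and follows essentially the same route as the paper: characterize $\partial_{\xi_{vi}}L_\SR$ via the interval $[\underline{\partial_{\xi_{vi}}L}_\SR,\overline{\partial_{\xi_{vi}}L}_\SR]$ defined by the $\id_{<1}$ and $\id_{\leq 1}$ indicators, observe that the Poisson average of the $t_{vi}$'s lands exactly on $\overline{\partial_{\xi_{vi}}L}_\SR$, and bound second moments via $\expect[N^2]=\mu^2+\mu$ for Poisson counts. The only cosmetic difference is in the bookkeeping of the $\bar P$ factor: the paper absorbs it into the per-request bound $|t|\leq W\bar P|V|$ (treating the contribution of all paths triggered by one request as a single ``measurement''), whereas you keep $|t|\leq W|V|$ per path and put the $\bar P$ into $|\mathcal T_{vi}|$; both give the same $C_1$. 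Two small slips worth cleaning up: for $C_2$ you only need $\lambda_{(i,s)}\leq\Lambda$, not $\sum_{(i,s)}\lambda_{(i,s)}^2\leq\Lambda^2$, since the bound is for a fixed $(i,s)$; and the factor $|\catalog|$ in $C_1$ comes simply from $\expect[\|z_v\|_2^2]=\sum_{i}\expect[z_{vi}^2]$ with a uniform per-coordinate bound, rather than from the inequality $\|z_v\|_2^2\leq |\catalog|\max_i z_{vi}^2$ (which would require controlling $\expect[\max_i z_{vi}^2]$).
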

\begin{proof}
 A vector $\zeta\in \reals^{|\catalog|}$  belongs to $\partial_{\xi_v}L_\SR(\rho,\Xi)$ if and only if
$\zeta_i \in [ \underline{\partial_{\xi_{vi}}L}_\SR(\rho,\Xi) ,\overline{\partial_{\xi_{vi}}L}_\SR(\rho,\Xi) ], $ where:
\begin{align*}
\begin{split}
\overline{\partial_{\xi_{vi}}L}_{\SR}(\rho,\Xi) &= \textstyle\sum_{(i,s)\in \requests}\lambda_{(i,s)}\sum_{p\in\pathset_{(i,s)}}\id_{v\in p} \cdot\\
& \textstyle\sum_{k'=k_p(v)}^{|p|-1}\!w_{p_{k'+1}p_{k'}} \id_{1-\rho_{(i,s)}+\sum_{\ell=1}^{k'} \xi_{p_{\ell}i}\leq 1},
\end{split}
\end{align*}
\begin{align*}
\begin{split}
\underline{\partial_{\xi_{vi}}L}_\SR(\rho,\Xi) & = \textstyle
\sum_{(i,s)\in \requests}\lambda_{(i,s)}\sum_{p\in\pathset_{(i,s)}}\id_{v\in p} \cdot\nonumber\\
& \textstyle \sum_{k'=k_p(v)}^{|p|-1} w_{p_{k'+1}p_{k'}} \id_{1-\rho_{(i,s)}+\sum_{\ell=1}^{k'} \xi_{p_{\ell}i}< 1}.
\end{split}
\end{align*}
If $L_\SR$ is differentiable at $(\rho,\Xi)$ w.r.t $\xi_{vi}$, the two limits coincide and are equal to $\tfrac{\partial L_\SR}{\partial _{vi}}.$
It immediately follows from the fact that requests are Poisson that $\expect[z_{vi}(\rho,\Xi)] = \overline{\partial_{\xi_{vi}}L}_\SR(\rho,\Xi)$, so indeed $\expect[z_{v}(Y)] \in \partial_{\xi_v}L_\SR(\rho,\Xi)$.
To prove the bound on the second moment, note that, for $T_{vi}$ the number of requests generated for $i$ that pass through $v$ during the slot,
$$\expect[z_{vi}^2]=\frac{1}{T^2}\expect[(\sum_{t\in \mathcal{T}_{vi}}t)^2] 
\leq \frac{W^2\bar{P}^2|V|^2}{T^2} \expect\Big[T_{vi}^2\Big],$$
as $t\leq W\bar{P}|V|$. On the other hand, $T_{vi}$ is Poisson distributed with expectation $$\sum_{(i,s)\in \requests}\id_{\exists p\in \pathset_{(i,s)}~\text{s.t.}~v\in p}\lambda_{(i,s)} T,$$ and the upper bound follows.
The  statement for $q_{(i,s)}$ follows similarly.  \end{proof}

 We now establish the convergence  of the smoothened marginals to a global maximizer of $L$.
 Under  \eqref{adapt}, \eqref{slidexi} and \eqref{sliderho}, from Theorem 14.1.1, page 215 of Nemirofski~\citep{nemirovski2005efficient}, $$\varepsilon_k\leq \frac{D^2 + M^2 \sum_{\ell=\lfloor k/2\rfloor}^{k}\gamma_\ell^2 }{2\sum_{\ell=\lfloor k/2\rfloor}^{k}\gamma_\ell},$$ 
where  $\gamma_k=\frac{1}{\sqrt{k}}$,
$$D\equiv\max_{x,y\in\conv(\feasibledomain_\SR)}\|x-y\|_2 = \sqrt{|V|\max_v 2 c_v  +2|\requests| },$$  and
$$M \equiv\sup_{(\rho,\Xi)} \sqrt{  \expect[\|Z(\rho,\Xi)\|_2^2] + \expect[\|q(\rho,\Xi)\|_2^2] }.$$    From Lem.~\ref{subgradientlemma}, 
$M \leq \sqrt{|V|C_1+|\requests|C_2},$ and the lemma follows. \hspace*{\stretch{1}}\qed

\subsection{Proof of Theorem~\ref{maincor}}\label{proofofmaincor}
By construction, conditioned on $(\bar{\rho}^{(k)},\bar{\Xi}^{(k)})$, the $|V|+|\requests|$ variables $x_v$,  $v\in V$, and $r_{(i,s)}$, $(i,s)$, are independent. Hence, conditioned on $(\bar{\rho}^{(k)},\bar{\Xi}^{(k)})$, all  monomial terms of $F_\SR$  involve independent random variables. Hence, 
$$\expect[F_\SR(r^{(k)},X^{(k)})\mid \bar{\rho}^{(k)},\bar{\Xi}^{(k)}] = F_\SR(\bar{\rho}^{(k)},\bar{\Xi}^{(k)}),$$ and, in turn,  
$$\lim_{k\to\infty} \expect[F_\SR(r^{(k)},X^{(k)})] =\lim_{k\to \infty }\expect[F_\SR(\bar{\rho}^{(k)},\bar{\Xi}^{(k)})].$$
Lemma~\ref{convergencelemma}
implies that, for $\nu^{(k)}$ the distribution of $(\bar{\rho}^{(k)}, \bar{\Xi}^{(k)})$, and $\Omega$ the set of $(\rho^*,\Xi^*) \in \conv(\feasibledomain_\SR)$ that are maximizers of $L_\SR$, 
 $$\lim_{k\to\infty} \nu^{(k)}(\conv(\feasibledomain_\SR)\setminus \Omega )=0.$$
By Lemma~\ref{goodapprox},   $$F_\SR(\rho^*,\Xi^*)\geq (1-1/e)\max_{(r,X)\in  \feasibledomain_\SR} F_\SR(r,X),$$ for any $(\rho^*,\Xi^*)\in \Omega$. The theorem follows from the above observations, and the fact that $F_\SR$ is bounded in $\conv(\feasibledomain_\SR)\setminus \Omega$.\qed

\fussy

\subsection{Reducing Control Traffic.}\label{controlreduction}
We now discuss how control messages generated by the protocol can be reduced by modifying the algorithm to propagate only a single control message over a single path with each request. The path is
 selected uniformly at random over paths in the support of $\rho_{(i,s)}$. 
That is, when a request $(i,s)\in\requests$ arrives at $s$, a single control message  is propagated over $p$ selected uniformly at random from 
$\supp(\rho_{(i,s)})=\{p \in \pathset_{(i,s)}: \rho_{(i,s),p}>0\}.$
This reduces the number of control messages generated by $s$ by  at least a $c=|\supp(\rho_{(i,s)})|$ factor. To ensure that \eqref{estimatezprop} and \eqref{estimateqprop} hold, it suffices to rescale measured upstream costs by $c$. To do this, the (single) control message contains an additional field storing $c$. When extracting weight counters from downwards packets, nodes on the path compute $t'_{vi}  =  c \cdot t_{vi}$, and $t'_{(i,s),p} = c \cdot t_{(i,s)p}$,
where   $t_{vi}$, $t_{(i,s),p}$ are as in \eqref{tvi} and \eqref{tisp}, respectively. This randomization reduces   control traffic, but  increases the \emph{variance} of  subgradient estimates, also by a factor of $c$. This, in turn, slows down the algorithm convergence; this tradeoff can be quantified  through, e.g., the constants in Lemma~\ref{convergencelemma}.

\section{Hop-By-Hop Routing}\label{app:hopbyhop}

The proofs for the hop-by-hop setting are similar, \emph{mutatis-mutandis}, as the proofs of the source routing setting.
As such, in our exposition below, we focus on the main technical differences between the algorithms for the two settings.

\subsection{Proof of Thm~\ref{thm:offlinehop} (Sketch)}
A constant approximation algorithm for solving \CGHH can be constructed following the same steps as for source routing. Consider the function
 $L_{\HH}:\conv(\feasibledomain_{\HH})\to \reals_+$, defined as:
\begin{align*}
\begin{split}
L_{\HH} (\rho,\Xi) = \textstyle\sum_{(i,s) \in \requests }\lambda_{(i,s)}\sum_{(u,v)\in G^{(i,s)}}\!\sum_{p\in \pathset_{(i,s)}^u} \!\!w_{vu} \cdot\\
\textstyle\min\{1, 1\!-\!\rho_{uv}^{(i)}\!+\!\xi_{ui}\!+\! \sum_{k'=1}^{|p|-1}(1\!- \!\rho_{p_{k'}p_{k'+1}}^{(i)}\!+\!\xi_{p_{k'}i})\big\}
\end{split}
\end{align*}
As in Lemma~\ref{goodapprox}, we can show that this concave function approximates $F_{\HH}$, in that for all   $(\rho,\Xi)\in \conv(\feasibledomain_\HH):$
\begin{align}
(1-{1}/{e})  L_{\HH} (\rho,\Xi) \leq F_{\HH}(\rho,\Xi)\leq L_{\HH} (\rho,\Xi).\end{align}

To construct a constant approximation solution, first, a fractional solution
\begin{align}(\rho^*,\Xi^*)=\argmax_{(\rho,\Xi)\in \conv(\feasibledomain_\HH) } L_{\HH} (\rho,\Xi),\label{convsolve2} \end{align}
 can be obtained. This again involves a convex optimization, which again can be reduced (as in Section~\ref{app:linear}) to a linear program. Subsequently, the solution can be rounded to obtain  an integral solution $(r,X)\in \feasibledomain_\SR$ such that $F_\SR(r,X)\geq F_\SR(\rho^*,\Xi^*)$.

Rounding follows the same steps as for source routing,  with an exception for rounding the routing strategy $\rho^*$. To round $\rho^*$, one first rounds each node's strategy individually, i.e., for every $v\in V$ and every $i\in \catalog$, we would pick the neighbor that maximizes the objective. This again follows from the fact that, given a caching strategy $\Xi$, and given the routing strategies of all other nodes, the objective is an affine function of $\{r_{uv}^{(i)}\}_{v:(uv)\in E^{(i)}}$, for all $u\in V$, with positive coefficients. Hence, keeping everything else fixed, if each node chooses a cost minimizing decision, this will round its strategy, and all nodes in $V$ can do this sequentially.
Note that, contrary to source routing, the order with which this rounding happens affects the final integral solution. Moreover, the DAG property  ensures that all requests eventually reach a designated server, irrespectively of the routing strategies resulting from the rounding decisions.

\subsection{Proof of Theorem~\ref{maincor2} (Sketch)}

A distributed algorithm can be constructed by performing projected gradient ascent over $L_\HH$. Beyond the same caching state variables $\xi_v$ stored at each node $v\in V$, each node $v\in V$ maintains routing state variables $\rho_{u}^{(i)} = [ \rho_{uv}^{(i)}]_{v:(u,v)\in E^{(i)}}\in [0,1]^{|E^{(i)}|},$ for each $i\in \catalog$,
containing  the marginal probabilities $\rho_{uv}^{(i)}$ that $u$ routes request message for item $i$ towards $v\in E^{(i)}$. Time is slotted, and nodes perform  subgradient estimation, state adaptation, state smoothening, and randomized sampling of caching and routing strategies. As the last three steps are nearly identical to source routing, we focus below on how to estimate subgradients, which is the key difference between the two algorithms.

Whenever a request $(i,s)\in \requests$ is generated, a control message is propagated in all neighbors of $s$ in $E^{(i)}$. These messages contain counters initialized to
$1-\rho_{sv}^{(i)} + \xi_{si}. $
Each node $v\in V$ receiving such a message generates one copy for each of its neighbors in $E^{(i)}$. For each neighbor $u$, $v$ adds $1-\rho_{vu}^{(i)} +\xi_{vi}$ to the counter, and forwards the message to $u$ if the counter is below 1.0.
Formally,  a control message originating at $s$ and reaching  a node $v$ after having followed path $p\in G^{(i,s)}$ is forwarded to $u$ if the following condition is satisfied:
\begin{align}  1-\rho_{vu}^{(i)}+\xi_{ui} +\textstyle\sum_{\ell=1}^{k_{p}(v)-1} \big(1-\rho_{p_\ell p_{\ell+1}}^{(i)} +\xi_{p_{\ell}i}\big)\leq 1\label{stopcrit}\end{align}
If this condition is met,  $v$ forwards a copy of the control message to $u$; the  above process is repeated at each of each neighbors. If the condition fails \emph{for all} neighbors, a response message is generated by $v$ and propagated over the reverse path, accumulating the weights of edges it passes through. Moreover, descending control messages are merged as follows. Each node $v$ waits for all responses from neighbors to which it has sent control messages; upon the  last arrival, it adds their counters, and sends the ``merged'' message containing the accumulated counter reversely over path $p$.

As before, messages on the return path are again ``sniffed'' by nodes they pass through, extracting the upstream costs. Their averages are used as estimators of the subgradients w.r.t. both the local routing and caching states, in a manner similar to how this was performed in source routing. 
As each edge is traversed at most twice, the maximum number of control messages is $O(|E^{(i)}|).$ As in the case of source routing, however, messages on low-probability paths are pruned early on due to condition \eqref{stopcrit}. Moreover, as in Section~\ref{controlreduction}, randomization can again reduce the number of messages:  only a single message need be propagated to a neighbor selected uniformly at random; in this case, the message needs to also contain a field keeping track of the product of the size of neighborhoods of nodes it has passed through, and updated by each node by multiplying the entry by the size of its own neighborhood. As in source routing, this is used as an additional scaling factor for quantities $t^{(i)}_{vu}$, $t_{vi}$.

\section{Detailed Description of Graphs in Table~{\protect\lowercase{\ref{networks}}}}

\sloppy
Graph \texttt{cycle} is a simple cyclic graph; \texttt{grid-2d} is a two-dimensional square grid;  \texttt{hypercube} is a 7-dimensional hypercube. Graph \texttt{expander} is a Margulies-Gabber-Galil expander \citep{gabber1981explicit}.
The next 5 graphs are random, i.e., were sampled from a probability distribution. Graph \texttt{erdos-renyi} is an Erd\H{o}s-R\'enyi graph with parameter $p=0.1$, and \texttt{regular} is a 3-regular graph sampled uniformly at random (u.a.r.). The \texttt{watts-strogatz} graph is  generated according to the Watts-Strogatz model of a small-world network \citep{watts1998collective}, i.e., a cycle and $4$ randomly selected edges, while \texttt{small-world} is the graph by Kleinberg \citep{kleinberg2000small}, i.e., a grid with additional long range edges. The preferential attachment model of Barab\'asi and Albert \citep{barabasi1999emergence}, which yields powerlaw degrees, is used for \texttt{barabasi-albert}.
The last 3 graphs are the GEANT, Abilene, and  Deutche Telekom backbone networks \citep{rossi2011caching}.

\fussy

\end{document}